\documentclass[11pt,english]{article}
\usepackage[top=1in, bottom=1in, left=1in, right=1in]{geometry}
\usepackage{psfig,epsfig,fullpage,squeeze,wrapfig,amsmath,amssymb,float,titlesec,multirow,enumitem,hhline,makecell}
\usepackage[table]{xcolor}
\usepackage[T1]{fontenc}
\usepackage[colorlinks=true, linkcolor=blue]{hyperref}
\usepackage{bookmark}
\usepackage{titling,color,booktabs,caption,subcaption,tcolorbox, amsthm}
\usepackage{setspace}
\usepackage{rotating}

\onehalfspacing       

\usepackage[
  style=authoryear,
  maxcitenames=2,   
  mincitenames=1,
  maxbibnames=99,   
  minbibnames=99
]{biblatex}
\addbibresource{reference.bib}

\newtheorem{theorem}{Theorem}[section]
\newtheorem{lemma}{Lemma}

\ls{1.2}

\usepackage{algorithm}
\usepackage{algorithmic}

\newcommand{\beqa}{\begin{eqnarray}}
\newcommand{\eeqa}{\end{eqnarray}}
\newcommand{\beq}{\begin{equation}}
\newcommand{\eeq}{\end{equation}}
\newcommand{\ben}{\begin{enumerate}}
\newcommand{\een}{\end{enumerate}}
\newcommand{\bit}{\begin{itemize}}
\newcommand{\eit}{\end{itemize}}
\newcommand{\bi}{\begin{itemize} \item}
\newcommand{\ei}{\end{itemize}}

\newcommand{\QED}{\hfill $\Box$ \hfill}
\newtheorem{Definition}{Definition}[section]

\newtheorem{Theorem}{Theorem}[section]
\newtheorem{Proposition}[Theorem]{Proposition}
\newtheorem{Corollary}[Theorem]{Corollary}
\newtheorem{Example}{Example}[section]
\newtheorem{Lemma}[Theorem]{Lemma}

\newtheorem{Conjecture}{Conjecture}[section]

\newcommand{\begindef}{\begin{Definition} \rm}
\newcommand{\beginexa}{\begin{Example} \rm}
\newcommand{\beginthe}{\begin{Theorem} \rm}
\newcommand{\beginpro}{\begin{Proposition} \rm}
\newcommand{\beginlem}{\begin{Lemma} \rm}
\newcommand{\begincon}{\begin{Conjecture} \rm}
\newcommand{\begincor}{\begin{Corollary} \rm}


\newcommand{\eat}[1]{}

\def\papernumber #1 raised #2 {
\vspace{-#2}
\vbox to 0pt{\hfill\framebox{\bf Paper Number #1}}
\vspace{#2}
}

\renewcommand{\hat}{\widehat}

\setlength{\droptitle}{-5em}
\pagenumbering{arabic}
\begin{document}

\title{\Large {Networked Markets, Fragmented Data: Adaptive Graph Learning for Customer Risk Analytics and Policy Design}}
\vspace{-25ex}
\author{Lecheng Zheng, Jian Ni, Chris Zobel, John R Birge\thanks{Lecheng Zheng: Postdoc, Pamplin College of Business, Virginia Tech,  Email: lecheng@vt.edu; Jian Ni: Professor, Pamplin College of Business, Virginia Tech,  Email: jiann@vt.edu; Chris Zobel: Professor, Pamplin College of Business, Virginia Tech, Email: czobel@vt.edu; John R Birge: Professor, Booth School of Business, University of Chicago, Email: John.Birge@chicagobooth.edu}
}

\date{\vspace{-5ex}}

\maketitle

\fontsize{11pt}{11pt}\selectfont

\abstract{%

\noindent Financial institutions face escalating challenges in identifying high-risk customer behaviors within massive transaction networks, where fraudulent activities exploit market fragmentation and institutional boundaries. We address three fundamental problems in customer risk analytics: data silos preventing holistic relationship assessment, extreme behavioral class imbalance (<1\% suspicious transactions), and suboptimal customer intervention strategies that fail to balance compliance costs with relationship value. We develop an integrated customer intelligence framework combining federated learning, relational network analysis, and adaptive targeting policies. Our federated graph neural network enables collaborative behavior modeling across competing institutions without compromising proprietary customer data, using privacy-preserving embeddings to capture cross-market relational patterns. Focal loss optimization addresses class imbalance by amplifying learning from rare high-risk behavioral segments. We introduce cross-bank Personalized PageRank (PPR) to identify coordinated behavioral clusters—revealing fan-out, loop, and gather-scatter relationship structures—providing interpretable customer network segmentation for risk managers. A hierarchical reinforcement learning mechanism optimizes dynamic intervention targeting, calibrating escalation policies to maximize prevention value while minimizing customer friction and operational costs. Analyzing 1.4 million customer transactions across seven markets, our approach reduces false positive and false negative rates to 6.02\% and 5.53\%, substantially outperforming single-institution models. The framework prevents 79.25\% of potential losses versus 49.41\% under fixed-rule policies, with optimal market-specific targeting thresholds (0.316-0.523) reflecting heterogeneous customer base characteristics. Cross-market intelligence sharing increases precision in identifying high-risk customer networks by 60\%. These findings demonstrate that federated customer analytics materially improve both risk management effectiveness and customer relationship outcomes in networked competitive markets.
}

\vspace{1em} 
\textbf{Keywords:} Customer Analytics, Federated Learning, Graph Neural Networks, Financial Networks, Dynamic Targeting  

\newpage

\section{Introduction}
Financial institutions operate in an increasingly complex environment where identifying high-risk customer behaviors has become both a regulatory imperative and a strategic business challenge. As AI technologies reshape financial services and redefine business decision-making processes~(\cite{agrawal2019artificial}), financial institutions must balance predictive accuracy with operational constraints, regulatory compliance, and customer relationship value. Between 2020 and 2024, U.S. banks filed more than 137,000 suspicious activity reports linked to illicit networks~\parencite{website:FinCEN}, while recent investigation revealed that criminal organizations moved over \$312 billion through legitimate financial channels ~\parencite{website:WSJ}. Globally, the United Nations estimates that 2–5\% of world GDP (\$800 billion–\$2 trillion) is laundered annually, yet less than 1\% is ever recovered~\parencite{website:UN}. Beyond regulatory penalties, ineffective risk management imposes substantial operational costs through false alerts requiring manual investigation, potential reputation damage from compliance failures, and lost legitimate customer relationships due to over-cautious interventions. Understanding how well-intentioned financial regulations can produce unintended consequences is crucial for designing effective AML policies~\parencite{bao2017ni}.

Traditional customer risk management systems rely predominantly on rule-based heuristics that flag transactions exceeding predetermined thresholds or matching known suspicious patterns. While straightforward to implement, these systems generate overwhelming false positive rates—often exceeding 95\%—creating a critical resource allocation problem: compliance teams must investigate alerts that consume investigator time without commensurate risk reduction, while legitimate customers experience unnecessary friction that erodes relationship value. More critically, rigid rule-based approaches fail to adapt to evolving customer behavioral patterns and cannot detect sophisticated schemes that deliberately fragment activities across multiple institutions, jurisdictions, and time periods. As financial transactions become increasingly digital, interconnected, and global, the limitations of static rule-based customer monitoring have become economically costly and strategically problematic. 

Recent advances in machine learning offer promising alternatives for customer risk assessment. Data-driven approaches can process massive, complex behavioral datasets with efficiency and adaptability that traditional rules cannot match. By automatically identifying latent behavioral patterns and relationship anomalies, these methods can improve both detection accuracy and operational efficiency. However, despite their analytical promise, applying machine learning to customer risk management in financial markets faces four fundamental business and technical challenges that have limited widespread adoption. 

The first challenge is the fragmented customer intelligence across competing institutions. Financial institutions operate as competing entities, constrained by privacy regulations, competitive concerns, and customer confidentiality requirements. Customer transaction data are distributed across different entities and platforms, and institutions cannot easily share customer information due to strict regulatory constraints and competitive dynamics~\parencite{amoako2025exploring}. This fragmented view significantly limits the ability to capture illicit behavioral patterns that span multiple institutions—a deliberate strategy employed by sophisticated actors who exploit the lack of cross-institutional coordination. Models trained on isolated institutional data are prone to high false negatives, where suspicious activities appear benign within a single institution's customer base but reveal clear patterns when viewed across the broader market ecosystem. For instance, a transaction series may appear routine within one bank's customer portfolio but reveal layering or structuring patterns only when combined with behavioral data from other institutions. Addressing this challenge requires developing privacy-preserving collaborative frameworks that enable institutions to benefit from shared market intelligence without exposing proprietary customer information or violating competitive boundaries. 

The second challenge is lack of interpretability in customer risk models. Most advanced predictive models, particularly deep learning architectures, operate as "black boxes," making it difficult for relationship managers, compliance officers, and regulators to understand why particular customers or transactions are flagged. In high-stakes financial environments, model transparency is not merely desirable but mandatory for ensuring accountability, supporting customer relationship decisions, and satisfying regulatory auditing standards. When a model flags a corporate customer for suspicious wire transfers, investigators must trace the reasoning—such as anomalous transaction frequency, counterpart network characteristics, or deviations from historical behavioral patterns—to justify the alert and determine appropriate relationship management actions. Without interpretability, analysts face opaque predictions they cannot validate or contest, leading to mistrust and resistance toward model adoption. Moreover, regulators require financial institutions to explain the logic behind automated decisions, especially when these affect customer rights, relationship continuity, or trigger legal actions. The lack of interpretability thus impedes both internal investigation workflows and creates significant compliance and governance risks, ultimately limiting the business value of sophisticated analytics investments. 

Thirdly, detection is merely the first step in an effective customer risk management pipeline, as institutions must decide whether to monitor, escalate, or intervene in customer relationships under operational constraints and time pressures. Current systems typically employ rigid rule-based escalation procedures that fail to align with the probabilistic outputs of predictive models or optimize across multiple business objectives. A flagged transaction involving cross-border remittances might warrant different interventions depending on the customer's relationship value, transaction context, historical behavior patterns, and broader network risk exposure. Without adaptive decision-support mechanisms, relationship managers must manually reconcile probabilistic risk signals with static compliance rules, leading to inconsistent interventions, relationship friction, and suboptimal resource allocation. False positives—incorrectly flagging legitimate customers—can lead to unnecessary account restrictions, customer dissatisfaction, and relationship attrition, while false negatives—failing to identify genuine risks—result in financial losses and severe regulatory penalties. These challenges are compounded by institutional fragmentation, where each institution observes only a local subset of the broader customer relationship network. There is an urgent need for adaptive targeting frameworks that coordinate detection with intervention policies across institutional boundaries while optimizing for relationship value, operational costs, compliance requirements, and customer experience.

The fourth challenge is the extreme behavioral class imbalance. In real-world financial systems, illicit transactions constitute a minute fraction of total transaction volume—often less than 1\% as documented in Table~\ref{tab:country_graph_stats}—while the overwhelming majority corresponds to legitimate customer activities. This extreme imbalance fundamentally challenges model development and deployment. Standard machine learning algorithms become biased toward the majority (legitimate) class, achieving deceptively high overall accuracy while exhibiting poor sensitivity to rare but critical suspicious behaviors. Models may fail to detect subtle anomalies hidden within massive streams of normal customer transactions. Moreover, suspicious behaviors are highly heterogeneous and evolve over time, meaning the limited labeled positive cases cannot adequately capture the diversity of real-world schemes such as layering, structuring, or trade-based techniques. The scarcity of verified suspicious cases also reflects the costly and time-intensive nature of financial investigations, where labeling requires expert validation and cross-institutional cooperation. This imbalance complicates not only model training and evaluation but also operational deployment, as systems learn to overlook minority patterns that deviate from dominant behavioral trends.

To address these challenges, this paper proposes a privacy-preserving adaptive framework that connects three critical dimensions of modern financial security: illicit activities detection module, interpretable group identification, and hierarchical decision-making. Collectively, these innovations enable financial institutions to move from reactive detection toward proactive, coordinated, and explainable anti-money-laundering operations. Specifically, we first introduce a graph-based federated learning framework designed to detect money-laundering patterns while preserving data privacy. In this approach, each financial institution analyzes its own transaction network locally but shares abstracted, non-sensitive insights via \textit{virtual supernode} to corporately learn broader patterns of suspicious laundering behavior. This enables institutions to benefit from shared intelligence without exposing proprietary or customer data. The detection module also integrates multiple analytical objectives, such as identifying hidden transaction clusters and predicting anomalous relationships, to better capture the complexity of real-world laundering schemes. Instead of using the cross-entropy loss, we apply focal loss to mitigate the data imbalanced issue by dynamically adjusting the relative importance of the minority (suspicious) versus majority (normal) classes. The results show that the Cross-Entropy Loss tends to favor the majority class (normal transactions), leading to lower false alarm rates (Type I errors) but a higher tendency to overlook true laundering cases (Type II errors). In contrast, the Focal Loss introduces a more balanced learning process by assigning greater weight to these hard-to-detect, high-risk transactions, thus achieving much lower Type II errors and slightly higher Type I errors. We then evaluate the individual contribution of each component (e.g., virtual super-node, graph cut loss) through an ablation study, where components are systematically removed and replaced. The findings demonstrate that every component contributes meaningfully to the overall detection performance.

This paper reveals one fundamental challenge for cross-border financial crime detection: \textit{subpopulation shift} induced by data fragmentation resulting in performance degradation. \textit{subpopulation shift} is a form of distribution shift where the overall population remains fixed, but the characteristics or proportions of underlying subgroups differ~\parencite{koh2021wilds}. When transaction data are partitioned by country to reflect regulatory and privacy constraints, substantial distributional discrepancies emerge across markets, particularly for critical features such as transaction amount, which spans several orders of magnitude. Our experiments demonstrate that commonly used feature normalization strategies can inadvertently exacerbate this issue. Specifically, normalizing features independently within each country amplifies subpopulation shift, as country-level statistics differ markedly from global ones. Controlled experiments comparing country-level and global-level normalization show that this induced shift significantly degrades the performance of most machine learning models, leading to higher false negatives and lower precision-recall performance. Most approaches benefit from eliminating subpopulation shift via globally consistent normalization. These findings highlight subpopulation shift as a critical and previously underappreciated obstacle in AML modeling and motivate the need for learning framework that are robust to fragmented, privacy-constrained and heterogeneous financial data environments.

This paper also highlight several insights central to the effectiveness of detection module. First, federated learning generates substantial value from coordination without data sharing, a property particularly valuable when regulatory constraints prohibit direct information exchange. By aggregating model parameters rather than raw transactions, institutions achieve detection gains equivalent to pooled training while maintaining competitive independence. This reduces both Type I and Type II errors, demonstrating that shared model parameters (not shared data) deliver meaningful gains in detection performance. Second, models that incorporate structural information in graph data, such as federated graph neural network, consistently outperform feature-only deep learning methods, such as three-layer MLP. By leveraging relational signals that describe how accounts interact, the graph-based model better distinguishes legitimate from suspicious behaviors. 
Third, collaborative learning efficiency is critical: varying communication frequency in federated training demonstrates a non-monotonic trade-off between detection accuracy and synchronization cost. Moderate communication intervals improve AUPRC and reduce missed detections while lowering coordination overhead, whereas overly frequent or infrequent aggregation can degrade performance. 
Fourth, data fragmentation and subpopulation shift negatively impact detection quality: normalizing features independently within each country or training models solely on country-level subsets reduces performance compared with using aggregated or globally normalized data. To address this, we design a fixed-value min–max normalization strategy to mitigate subpopulation shift without sharing sensitive raw data across institutions. Overall, these findings underscore the importance of collaborative, privacy-preserving learning frameworks that can reconcile performance, structural modeling, and data heterogeneity in cross-country anti–money laundering detection.

Upon these detection results by the detection module, we further employ a network propagation mechanism via cross-bank Personalized PageRank (PPR) that starts from one suspicious account and expands the identification of suspicious neighbors entities, mimicking the process of money flows. This technique reveals sub-group structures within suspicious transaction networks, essentially identifying clusters of accounts or entities that act in coordination. By uncovering these latent community patterns, the module provides analysts and regulators with clearer explanations of what type of money-laundering behaviors are flagged, helping bridge the gap between algorithmic output and human reasoning. 
Together, these mechanisms turn abstract detection scores into interpretable group-level evidence, empowering financial analysts to act with greater confidence and precision. By examining the relational connections between accounts, this method uncovers hidden subgroups of potentially collusive actors and strengthens the overall detection accuracy. It ensures that the system not only flags individual transactions but also reveals the structure of laundering networks, which is a key step for investigative and regulatory follow-up. 

We further provide the theoretical insights of using PPR for grouping money laundering behaviors. Specifically, in Theorem~\ref{thm:detectability}, we come up with a detectability condition for identifying a money-laundering group using PPR. Intuitively, PPR captures money-laundering behavior because it propagates the suspiciousness signal from a known malicious seed node through the transaction network. Nodes densely connected to the seed, likely participating in group laundering, accumulate higher PPR scores, while weakly connected nodes receive lower scores. The theorem formalizes this by showing that a well-separated group will stand out in the PPR ranking. Based on this, we modify the transition matrix in PPR by assigning higher weights to edges more likely associated with laundering transactions, which further amplifies scores for nodes strongly connected via suspicious interactions. Specifically, weighted edges increase the transition probability along suspicious links, causing PPR to concentrate more on nodes involved in laundering.

This paper demonstrates the strategic value of the proposed Cross-bank Personalized PageRank (PPR) framework. By enabling institutions to integrate Cross-bank intelligence into the ranking process, Cross-bank PPR consistently identifies substantially more true malicious accounts while flagging fewer accounts overall, compared with conventional PPR. This outcome reflects a more conservative yet more precise detection strategy, reducing unnecessary investigations while elevating the identification of genuinely high-risk entities. This paper also reveals that the incorporation of the cross-bank suspiciousness signal markedly improves precision and reduces false positives by itself, whereas the edge-level maliciousness score degrades performance, when used alone to reweight transitions. We attribute it to the reliable prediction by the detection module. However, when the edge score is fused with the cross-bank signal inside Cross-bank PPR, its weaknesses are corrected and it meaningfully enhances detection. Together, these components significantly reduce false positives, increase the concentration of malicious accounts within discovered clusters, and improve detection consistency across countries. These findings highlight that cross-institutional signal sharing, when designed with strong privacy safeguards, can meaningfully strengthen financial crime detection beyond what any single institution can achieve in isolation.

Our analysis of interpretable group identification using cross-bank Personalized PageRank reveals that money-laundering behaviors often exhibit diverse structural patterns, including fan-out, loop, gather–scatter, and hybrid mechanisms that integrate multiple strategies. The proposed cross-bank PPR method effectively uncovers the pivotal accounts orchestrating these operations, even when illicit activities are distributed across institutions and national boundaries. In particular, our approach successfully reconstructs the complete illicit group behaviors for many laundering schemes. As laundering strategies become increasingly sophisticated by mixing multiple mechanisms and fragmenting transaction flows across borders, our method continues to accurately flag primary dispersal accounts and expose partial cross-border linkages, offering critical insights into the underlying operational structure. These findings underscore that analyzing relational patterns across merged clusters not only captures the organizational essence of money-laundering schemes but also enhances interpretability, providing actionable signals for risk assessment and regulatory intervention.


Finally, we propose a hierarchical reinforcement learning strategy to move from detection to action. Rather than treating AML as a static classification problem, this approach dynamically learns how to respond to suspicious activity. It considers factors such as network risk concentration, and operational costs to determine optimal intervention strategies: when to monitor, and when to escalate a freeze intervention. By balancing accuracy with cost and compliance priorities, this decision-making layer bridges the gap between automated detection and effective financial risk management. Our empirical evaluation shows that this decision-making module delivers substantial economic value. While a fixed-threshold policy prevents 50\% of illicit losses and exhibits extremely high Type II error rates, the hierarchical module prevents around 80\% of potential financial loss and dramatically reduces missed detections. Moreover, the learned thresholds vary across institutions in different countries, highlighting that optimal intervention policies are inherently context-dependent rather than universal. These findings underscore that effective AML risk management requires not just accurate prediction, but a policy that intelligently maps risk signals to timely and context-sensitive actions.

The remainder of this paper is organized as follows. We review the related works in Section 2 and Section 3 presents the proposed graph-based federated detection module, which enables privacy-preserving collaboration across institutions. Section 4 introduces the personalized PageRank and label propagation mechanisms used to uncover subgroup structures and refine detection accuracy. Section 5 details the hierarchical reinforcement learning framework that supports adaptive and coordinated decision-making. Section 6 presents the empirical results. Finally, Section 7 concludes the paper and discusses future research directions.

\section{Related Literature}
Illicit financial activity detection has been an active area of research for decades, encompassing a wide range of applications such as financial fraud detection~\parencite{fan2025unearthing, yiu2014deterrence, wu2024fair}, anti–money laundering detection~\parencite{deprez2025network}, e-commerce fraud detection~\parencite{nanduri2020microsoft, luca2016fake}, social security fraud detection~\parencite{van2017gotcha}, securities fraud detection in emerging markets~\parencite{guo2022hierarchical}, etc. With the growing scale and complexity of financial transactions, machine learning has become a dominant approach for detecting illicit activities, offering improved accuracy and adaptability compared to traditional rule-based systems~\parencite{august2025impact, fan2025unearthing, cecchini2010detecting, xiao2023black}. This study focuses specifically on the challenges of anti–money laundering detection.

The traditional ML-based AML systems are poorly equipped to handle this environment~\parencite{pettersson2022combating, alotibi2022money, jensen2023fighting, lokanan2024predicting}. Eric and Jannis \parencite*{pettersson2022combating} evaluate four classic supervised learning algorithms (e.g., logistic regression, random forest, support vector machines, and decision trees) on the AML task and find that these models are too slow and require further optimization for the anti–money laundering context. More advanced supervised machine learning models have been proposed, trained on historical transaction data to detect anomalies and laundering patterns, offering more accurate and adaptive detection than traditional rule-based systems~\parencite{golightly2023securing}. However, these methods rely on static machine learning and deep learning models, which fail to integrate information across institutions and struggle to adapt to temporally complex laundering strategies. Furthermore, studies show that some laundering schemes are executed rapidly over a handful of transactions, while others evolve over months and mix tactics such as layering and cross-border transfers~\parencite{amoako2025exploring}. Anti–money laundering detection models tuned to one temporal horizon often miss others, creating a persistent tension between premature intervention and delayed action. Wang et al. \parencite*{wang2024temporal} demonstrate that traditional anti–money laundering systems struggle to detect complex schemes that leverage cross-jurisdictional gaps and the temporal dispersion of transactions, thus failing to capture complex evolving patterns in transaction networks. 

Recently, Kurshan and Shen~\parencite*{kurshan2020graph} and Lokanan~\parencite*{lokanan2023financial} show the prominent performance of graph-based approaches on general financial fraud. For instance, Kurshan and Shen~\parencite*{kurshan2020graph} provide a high-level discussion of the challenges faced by machine learning methods in combating financial crime, emphasizing issues such as scalability, data quality, and operational constraints. Although their survey does not center on AML, they underscore the strong potential of network-based methods for modeling the large-scale transactional data handled by financial institutions. Complementary to these methodological surveys, Gerbrands et al.~\parencite*{gerbrands2022effect} examine AML from a policy and network-structure perspective, demonstrating how regulatory interventions reshape transactional networks and influence illicit financial behaviors. While this line of work highlights the importance of network analysis for understanding money-laundering dynamics, it does not develop learning-based detection models. More closely aligned with graph neural network approaches for AML, Kute et al.~\parencite*{kute2022explainable, kute2024explainable} demonstrate how deep graph learning models can be augmented with explanation mechanisms to clarify which transaction patterns and behavioral attributes drive model outputs.

Model interpretability is critical in financial applications, as compliance officers and investigators must understand, trust, and act upon the model’s recommendations. Contemporary work on interpretable AML detection spans methodological advances in feature-based models, post-hoc explanation techniques, and network-aware methods and much of this literature appears in business-oriented outlets. Early applied studies established the value of supervised learning for prioritizing alerts while emphasizing model transparency~\parencite{chen2018machine, han2020artificial}. Building on these works, Jullum et al. \parencite*{jullum2020detecting} develop and validate machine-learning scoring systems for transaction prioritization, showing how transparent feature design and risk-scoring improve investigatory efficiency. Konstantinidis and Gegov \parencite*{konstantinidis2024deep} evaluate explainable AI tools (e.g., SHAP and LIME) as operational supplements to black-box models, demonstrating that feature-attribution explanations improve auditability and the quality of Suspicious Activity Reports. Oloyede \parencite*{oloyede2025evaluating} explores how interpretability (e.g., transparent ML models) affects regulatory compliance, trust, and operational effectiveness in financial institutions.  In parallel, Gu, Yang, and Liu \parencite*{gu2025optimization} incorporates causal reasoning to improve both the transparency and the decision consistency of AML systems. Recent reviews~\parencite{mazumder2025explainable} synthesize these strands and call explicitly for XAI and socio-technical integration so that automated detection maps cleanly onto operational workflows and regulatory expectations. Collectively, these works show that effective AML solutions must go beyond predictive accuracy to provide clear, auditable rationales for their decisions, thereby enhancing investigator trust, supporting compliance verification, and ensuring that automated detection remains aligned with financial governance standards.

In the past decades, a growing body of research examines how anti–money laundering decisions and policies are shaped by regulatory frameworks, organizational practices, and behavioral responses to financial crime risks. Early work \parencite{ross2007money} highlights the central tension in AML supervision between rule-based requirements and risk-based decision-making, showing how compliance officers must interpret ambiguous regulations under conditions of limited information. Unger and Van Waarden \parencite*{unger2009dodge} further compare rule-based and risk-based AML strategies, arguing that excessive data collection without effective prioritization can overwhelm institutions and undermine detection quality. Complementing these policy-oriented perspectives, McCarthy, van Santen, and Fiedler \parencite*{mccarthy2015modeling} develop a microtheoretical model of money launderer behavior, illustrating how adversarial adaptation should inform AML policy design. Similarly, research on strategic learning behavior in markets demonstrates that sophisticated actors rapidly adapt their strategies in response to observed outcomes and platform policies~(\cite{xu2022ni}), underscoring the importance of adaptive detection systems that evolve alongside increasingly sophisticated laundering tactics. Helgesson and Mörth \parencite*{helgesson2016involuntary} shift attention to the role of professional intermediaries, demonstrating that lawyers become de facto policy actors when implementing AML and counter-terrorism financing mandates, often shaping regulatory outcomes through everyday compliance decisions. More recently, methodological advances have emerged, such as the reinforcement learning framework proposed by \parencite{rao2025reinforcement}, which models cross-border transaction anomalies from a behavioral-economics perspective and offers a data-driven approach to dynamic AML decision-making. Together, these studies reveal an evolving landscape in which AML effectiveness depends not only on formal rules and technologies but also on how institutions interpret risks, allocate attention, and adapt to strategic adversaries.

Despite significant progress, these AML detection methods suffer from several critical limitations that prevent them from keeping pace with increasingly complex laundering behaviors. Traditional supervised machine learning and deep learning models~\parencite{pettersson2022combating, alotibi2022money, jensen2023fighting, lokanan2024predicting, chen2018machine}, while effective in narrow settings, remain largely static, siloed, and institution-specific. These systems struggle to generalize across heterogeneous financial institutions, cannot integrate cross-bank intelligence, and often fail under data imbalance or rapidly evolving laundering tactics. Recent interpretable or explainable AI approaches~\parencite{chen2018machine, han2020artificial, kute2024explainable, kute2022explainable} focus primarily on feature-level transparency and do not reveal coordinated group behaviors or network structures underlying illicit flows. At the policy and decision-making level, current systems~\parencite{ross2007money, unger2009dodge, helgesson2016involuntary, mccarthy2015modeling} are still reactive, offering little guidance on when and how local institutions should intervene in unfolding laundering schemes under a global coordination system. As a result, AML operations remain fragmented: models detect anomalies, investigators manually infer group structures, and compliance teams determine actions without principled support. The privacy-preserving adaptive framework proposed in this paper directly addresses these gaps. By enabling graph-based federated learning, our method preserves institutional privacy while facilitating cross-bank sharing. The integration of focal-loss–based detection, cross-bank PPR based group identification, and hierarchical reinforcement learning creates a unified pipeline that not only identifies suspicious entities but also uncovers their coordinated structures and recommends cost-sensitive, context-aware interventions. In doing so, the proposed system advances AML practice from isolated, static prediction tools toward proactive, explainable, and strategically optimized financial crime prevention.

\section{Detection module}
\label{sec:Method}

\subsection{Graph Construction}
Financial fraud, particularly money laundering~\parencite{chen2018machine}, poses a major risk to the stability and integrity of financial systems. A fundamental challenge in detecting money laundering is that transactional data are fragmented across banks due to privacy concerns, preventing the construction of a global view of illicit activities~\parencite{le2018preventing, van2024privacy}. This bank isolation limits existing detection systems, leading to both false positives and false negatives, as suspicious behaviors often only emerge when transactions across multiple institutions are jointly analyzed~\parencite{wang2024temporal, amoako2025exploring}. However, directly aggregating transaction data across banks is infeasible due to privacy and regulatory constraints~\parencite{thommandru2023exploring}. 

Traditional machine learning approaches typically treat each transaction as an independent observation and largely ignore the relationships among consecutive money-laundering transactions made by the same group of accounts~\parencite{wang2024temporal, amoako2025exploring, pettersson2022combating, alotibi2022money, jensen2023fighting, lokanan2024predicting}. Such independence assumptions limit their ability to recognize coordinated or evolving laundering schemes, where illicit funds are often transferred through a sequence of interconnected entities to obscure their origin and ownership. Inspires by this observation, we represent the transaction records as a set of graphs in which nodes represent accounts and edges denote fund transfers between them. The conversion to graph-structured data allows us to capture the interdependencies among transactions both direct and indirect within and across laundering groups.

Consider a transaction network\footnote{We use the terms \textit{graph} and \textit{network} interchangeably.} $\mathcal{G}=\{\mathcal{G}_1, \mathcal{G}_2, \ldots, \mathcal{G}_m\}$ consisting of $m$ subgraphs, where each subgraph $\mathcal{G}_i=\{\mathcal{V}_i, \mathcal{E}_i, X_i^V, X_i^E, Y_i\}$ represents the financial transaction system within a specific country. The set of nodes $\mathcal{V}_i (|\mathcal{V}_j|=n_i)$ consists of $n_i$ entities (such as individual or corporate accounts) operating in country $i$, while the set of edges $\mathcal{E}_i \subseteq \mathcal{V}_i \times \mathcal{V}_j $ captures the transaction links among these entities. When an edge connects $\mathcal{V}_i$ and $\mathcal{V}_j$ with $i \neq j$, it represents a cross-border transaction, reflecting international fund transfers between entities in different jurisdictions. $Y_i$ is the label vector, where $Y_{i,j}=1$ indicates that the $j$-th transaction is a money-laudering behavior and $Y_{i,j}=0$, otherwise. Here, we convert the set of edges $\mathcal{E}_i$ into an adjacency matrix $A_i\in \mathbb{R}^{n_i \times n_i}$, where $A_i(j,k)=1$ indicates an existing transaction link and  $A_i(j,k)=0$, otherwise. Each node is associated with an attribute matrix $X_i^V\in \mathbb{R}^{n_i \times d_i}$ that records its observable financial or behavioral characteristics, such as transaction frequency, while each edge is characterized by an attribute matrix $X_i^E$ that describes the properties of a transaction, including payment amount, method, and timing. Ultimately, our objective is to identify money-laundering activities, represented by malicious transaction links $\mathcal{E}_i$ among accounts, and to understand how these illicit behaviors manifest and propagate across different countries.

\subsection{Federated Graph-based Detection module}
\label{sec:detection_model}
Given the adjacency matrix $A_i$, we employ a graph neural network (GNN)–based encoder~\parencite{kipf2016semi, DBLP:conf/iclr/XuHLJ19} to map the pair $(A_i, X_i^V)$ into a set of latent representations at the $L$-layer:
\begin{equation}
    H_i^{L+1} = \sigma(A_i H_i^L W^L),
\end{equation}
where $H_i^0=X_i^V$ is the raw node attribute at the first layer, $W^L$ is the weight matrix at the $L$-th layer, and $\sigma(a)=\max(0,a)$ is the ReLU activation function. Intuitively, this graph encoder aggregates information from each entity’s local neighborhood, allowing the representation of an account to be informed not only by its own attributes but also by the behaviors and characteristics of the entities it transacts with. This relational learning process captures how risk exposure or suspicious activity propagates through the transaction network. In financial systems, money-laundering behaviors often span multiple accounts and evolve through multi-step transaction chains. To emulate such interlinked and layered behaviors, we concatenate the latent representations learned from all layers to form the final embedding $H_i = [H_i^0, H_i^1, \ldots, H_i^L]$. This multi-level aggregation allows the module to integrate both local and global structural information, thereby capturing higher-order dependencies that reflect complex transaction patterns, such as circular money flows, bursty transfers, or structured layering activities commonly used to disguise illicit origins. 
To transform node-level representations into edge-level embeddings for transaction classification, we design an edge construction function that aggregates the features of the source and destination nodes together with the edge-specific attributes before computing the prediction via a multi-layer perceptron (MLP). Specifically, given two latent representations $h_i \in \mathbb{R}^d$, $h_j \in \mathbb{R}^d$ for nodes $i$, $j$, and the edge $\mathcal{E}_{i,j}$ associated attribute vector $X^E_{ij} \in \mathbb{R}^k$, we construct the edge representation $z_{ij}$ as
\begin{equation}
    z_{ij} = \left[\, h_i \,\|\, (h_j - h_i) \,\|\, X^E_{ij} \,\right],
\end{equation}
where $\|$ denotes the vector concatenation operation. In the representation $z_{ij}$, $h_i$ encodes the intrinsic behavioral and structural characteristics of the source account; $(h_j - h_i)$ models the directional discrepancy between the destination and source accounts, thereby encoding the asymmetric flow of information or funds, which is crucial for detecting suspicious transfers; $X^E_{ij}$ preserve contextual information specific to the transaction, such as the amount or transaction type, providing direct evidence for abnormal activity. Overall, the resulting vector $z_{ij}$ integrates both node-level and edge-level information into a unified representation that characterizes each transaction from multiple perspectives. This representation is then passed through a multi-layer perceptron $f_{\theta}(\cdot)$ followed by a sigmoid activation to predict the probability of the transaction being illicit:
\begin{equation}
    \hat{y}_{ij} = \sigma ( f_\theta(z_{ij})),
\end{equation}
where $\sigma(\cdot)$ denotes the sigmoid function. The model can be trained end-to-end using the binary cross-entropy loss between the predicted probability $\hat{y}_{ij}$ and the ground-truth label $y_{ij}$:
\begin{equation}
    \mathcal{L}_{\text{BCE}} = - \frac{1}{|\mathcal{E}|} \sum_{(i,j) \in \mathcal{E}} \Big[\, y_{ij} \log(\hat{y}_{ij}) + (1 - y_{ij}) \log(1 - \hat{y}_{ij}) \,\Big].
\end{equation}
Through this formulation, the model learns to distinguish illicit transactions by jointly considering node characteristics, their directional interactions, and edge-level contextual evidence. 
However, in real-world financial networks, the distribution of transactions is highly skewed as the vast majority are legitimate, and only a minute fraction correspond to money-laundering activities. This extreme class imbalance poses a fundamental learning challenge because the model tends to be biased toward predicting normal behavior, as minimizing cross-entropy naturally favors the dominant class. Consequently, even a trivial classifier that labels every transaction as normal can achieve deceptively high accuracy but fail to identify the rare yet critical illicit links that drive regulatory risk. To mitigate this imbalance, we adopt the focal loss~\parencite{lin2017focal}, which dynamically reweights training samples to focus more on hard-to-classify, minority cases. The focal loss modifies the cross-entropy objective as follows:
\begin{align}
\label{focal_loss}
    \mathcal{L}_i^{FL} = - \sum_j \big[\alpha(1 - \hat{\mathbf{y}}_j)^{\gamma} y_j \log(\hat{\mathbf{y}}_j) + (1-\alpha)\hat{\mathbf{y}}_j^{\gamma} (1-y_j) \log(1-\hat{\mathbf{y}}_j)\big],
\end{align}
where $\alpha \in [0,1]$ is a class-balancing factor that adjusts the relative importance of the minority (suspicious) versus majority (normal) classes, and $\gamma > 0$ is a focusing parameter that controls the attention on misclassified or uncertain samples. Intuitively, $\alpha$ serves as an ``attention weight'' for class imbalance. A higher $\alpha$ increases the importance of correctly identifying suspicious transactions, ensuring that rare but high-impact money-laundering cases receive more focus, whereas a lower $\alpha$ reduces sensitivity to the minority class. The focusing parameter $\gamma$ emphasizes hard-to-classify samples. For a positive example ($y_j = 1$), when the model is confident (\(\hat{\mathbf{y}}_j \approx 1\)), the factor $(1-\hat{\mathbf{y}}_j)^\gamma$ downweights the loss contribution, reducing the influence of easy cases. Conversely, for uncertain or misclassified positive examples (\(\hat{\mathbf{y}}_j \ll 1\)), the loss remains high, drawing attention to these high-risk transactions. Similarly, for negative examples ($y_j = 0$), the factor $\hat{\mathbf{y}}_j^\gamma$ ensures the model focuses on ambiguous normal transactions. Collectively, $\alpha$ and $\gamma$ enable the model to prioritize learning from rare and high-risk transactions without being overwhelmed by the abundant routine activity.

From a managerial perspective, these hyperparameters encode distinct dimensions of institutional risk tolerance. The class-balancing factor $\alpha$ reflects the strategic trade-off between compliance and customer experience: higher $\alpha$ prioritizes catching illicit transactions even at the cost of more false positives, while lower $\alpha$ minimizes unnecessary account friction to preserve customer relationships. The focusing parameter $\gamma$ controls operational attention allocation: higher $\gamma$ concentrates learning on ambiguous, hard-to-classify transactions—precisely the borderline cases that consume investigator time—while lower $\gamma$ treats all misclassifications more uniformly. This parameterization allows institutions to calibrate detection systems to their specific regulatory environment and operational capacity.

As discussed earlier, one of the most critical challenges in cross-border money-laundering detection lies in the bank-isolation problem, where financial institutions are prohibited from directly sharing transactional data across borders due to strict regulatory and privacy constraints. This limitation hinders the ability to identify complex laundering schemes that intentionally distribute suspicious transfers across multiple banks or jurisdictions. To address this challenge while maintaining data confidentiality, we introduce the concept of a \textit{virtual super-node} $\mathcal{V}^S_{uv}$ ($u \in \mathcal{V}_i, v \in \mathcal{V}_j$), which abstractly represents the transactional relationship between two accounts belonging to different financial institutions. The super-node serves as a ``boundary connector'' between two subgraphs $\mathcal{G}_i$ and $\mathcal{G}_j$, enabling institutions to share structured insights rather than raw transaction details. Formally, we denote by $\mathcal{V}^S$ the set of virtual super-nodes, i.e., $\mathcal{V}^S_{uv} \in \mathcal{V}^S$. Each super-node summarizes localized transaction behavior in a privacy-preserving form that can be exchanged for collaborative analysis. 
Specifically, we define the representation of a super-node as: 
\begin{equation}
    h_{\mathcal{V}^S_{iu}} = \frac{1}{|N_{iu}|}\sum_{v \in N_{iu}} h_v, \quad
    h_{\mathcal{V}^S_{jv}} = \frac{1}{|N_{jv}|}\sum_{u \in N_{jv}} h_u,
\end{equation}
where $h_{\mathcal{V}^S_{iu}}$ and $h_{\mathcal{V}^S_{jv}}$ capture aggregated local neighbor information from their neighbor set $N_{iu}$ and $N_{jv}$ in subgraphs $\mathcal{G}_i$ and $\mathcal{G}_j$, respectively. Notice that we can extend the mean aggregation (first-order statistical moment) to the combination of first-order and high-order moment, such as the aggregation of mean and variance, while our empirical results in Section~\ref{supernode_high_order_results} only show the marginal effect of including high-order moment. These embeddings act as compact statistical summaries of local cross-border activities, allowing each institution to participate in a federated learning process without revealing any sensitive data. To ensure semantic consistency across institutions, we introduce a self-consistency loss:
\begin{equation}
    \mathcal{L}_i^{SC} = \sum_{\mathcal{V}^S_{uv} \in \mathcal{V}^S_i} \| h_{\mathcal{V}^S_{iu}} - h_{\mathcal{V}^S_{jv}} \|_2^2,
\end{equation}
where $\mathcal{V}^S_i$ is the subset of super-nodes associated with $\mathcal{G}_i$. This objective encourages alignment between the embeddings of shared super-nodes across banks, ensuring that the same cross-institutional transaction pattern is interpreted consistently. Conceptually, it enables a federated network of banks to ``agree'' on the structural signature of suspicious behavior, thereby strengthening the global defense against coordinated money-laundering schemes while respecting jurisdictional boundaries.

Beyond individual transaction links, money-laundering often manifests in higher-order structures, such as rings, chains, or star-shaped hubs that circulate illicit funds across multiple accounts. To explicitly encourage the model to capture such patterns, we employ a soft-membership differentiable graph cut loss~\parencite{nazi2019gap}, which promotes the identification of dense, irregular clusters in a fully differentiable manner:
\begin{equation}
\mathcal{L}_i^{GCL} = -\frac{\mathrm{trace}(\mathcal{M}_i^\top A_i \mathcal{M}_i)}{\mathrm{trace}(\mathcal{M}_i^\top D_i \mathcal{M}_i)} + \beta \|\mathcal{M}_i^\top \mathcal{M}_i - I\|_F^2,
\end{equation}
where $\mathcal{M}_i = f_\gamma(A_i, H_i)$ represents the soft community assignment of nodes in subgraph $\mathcal{G}_i$, $D_i$ is the diagonal degree matrix of $A_i$, $I$ is the identity matrix, and $\beta$ is a regularization hyperparameter. The first term encourages nodes with strong transactional connections to be grouped together in the same cluster, effectively capturing tightly knit substructures that may indicate coordinated money-laundering activity. The second term regularizes the assignments to promote orthogonality among different clusters, ensuring that each node’s membership is distributed consistently and that communities remain distinct. Intuitively, this loss guides the embeddings to highlight anomalous clusters, such as small rings or star-shaped hubs, that deviate from normal transaction patterns, surfacing laundering groups that may evade detection through standard anomaly scoring alone. Finally, each bank optimizes a multi-objective loss: 
\begin{align}
\label{overall_loss}
    \mathcal{L}_i &= \mathcal{L}_i^{FL} + \lambda_1 \mathcal{L}_i^{GCL} + \lambda_2 \mathcal{L}_i^{SC},
\end{align}
where $\mathcal{L}_i^{FL}$ is the focal loss for edge prediction under weak supervision to distinguish suspicious transaction links, $\mathcal{L}_i^{GCL}$ is the soft-membership differentiable graph cut loss that encourages discovery of dense and anomalous clusters indicative of coordinated laundering schemes, and $\mathcal{L}_i^{SC}$ is the cross-bank consistency loss that aligns shared super-node representations. The hyperparameters $\lambda_1$ and $\lambda_2$ balance the relative importance of the community structure and cross-bank alignment objectives, allowing banks to jointly learn risk-sensitive embeddings while preserving privacy and regulatory compliance.


Building upon the federated learning paradigm, each bank optimizes its local objective to update its model parameters $\theta_i^{local}$ using only its internal transaction data. Critically, raw data never leaves the institution, preserving privacy and regulatory compliance. At predefined communication rounds, each bank transmits only its model updates (gradients or parameters) and the aggregated super-node embeddings to a central server, which acts as a coordination hub rather than a data repository. Following the widely used federated averaging protocol~\parencite{zhang2021survey, li2020review}, the server updates the global model parameters as
\begin{equation}
    \theta^{global} \leftarrow \frac{1}{m} \sum_{i=1}^{m} \theta_i^{local},
\end{equation}
where $m$ is the total number of participating banks. This weighted averaging effectively combines the knowledge learned from heterogeneous local datasets while mitigating biases that may arise from imbalanced or institution-specific transaction patterns. The updated global parameters $\theta^{global}$ are then redistributed to all participating banks, allowing each institution to benefit from the collective intelligence of the network without ever accessing another bank’s raw transactions. Intuitively, this collaborative process enables banks to detect complex, distributed money-laundering schemes that span multiple institutions or jurisdictions. By leveraging federated learning, the system simultaneously improves detection accuracy, respects data privacy, and aligns with regulatory constraints, creating a practical and scalable framework for cross-border financial crime prevention.

\section{Identifying Group Patterns with Cross-bank Personalized PageRank}
\subsection{Personalized PageRank}
While graph-based detection modules achieve strong predictive performance in identifying suspicious transactions, their complexity often limits interpretability, making it challenging for compliance officers to understand why certain accounts are flagged. In our proposed method, we employ a differentiable graph cut loss to capture higher-order structures, such as rings, chains, or star-shaped hubs that circulate illicit funds across multiple accounts. However, this approach alone may not capture group-level suspicious behaviors spanning multiple countries or accounts that often represent a mixture of incomplete laundering schemes rather than compact local structures. To enhance interpretability at the group-level, we incorporate Personalized PageRank (PPR)~\parencite{page1999pagerank}, a well-established algorithm originally developed and deployed by Google in the early 2000s to rank and organize web pages in its search engine. In the web context, PPR models how user interest in a particular page propagates through hyperlinks to identify other closely related pages. This intuitive notion of relevance diffusion motivates its application in financial networks. Specifically, starting from known high-risk accounts, PPR enables us to trace and quantify how suspicion propagates through transactional links to other closely connected accounts. By diffusing risk signals along multi-hop transaction paths, PPR reinforces model predictions while producing transparent and interpretable local clustering results that highlight intermediary accounts, relational dependencies, and latent laundering structures. This relational perspective allows compliance officers to better understand the contextual basis of flagged accounts, thereby increasing trust in model outputs and supporting more targeted and actionable investigations.


Formally, let the transactional network in a country be represented as a directed graph $\mathcal{G}_i$, with a weighted adjacency matrix $P$, where $P_{ij}$ reflects the normalized transaction intensity (e.g., frequency, volume, or dollar amount) from node $i$ to node $j$. Traditional rule-based systems often focus on local anomalies, such as unusually large transfers or sudden spikes in activity. In contrast, money-laundering schemes are inherently relational, spanning multiple transactions, intermediaries, and jurisdictions. PPR addresses this limitation by diffusing initial suspicion scores across the network topology, allowing latent patterns, collusive groups, and hidden intermediaries to surface naturally.
Formally, the PPR vector $\boldsymbol{r} \in \mathbb{R}^{|V|}$ represents a steady-state distribution of a random walk over the network. At each step, the walk either follows an outgoing transaction with probability $(1-\alpha_{PPR})$ or teleports back to a set of high-suspicion seed nodes with probability $\alpha_{PPR}$:
\begin{equation}
    \boldsymbol{r} = (1 - \alpha_{PPR})P \boldsymbol{r} + \alpha_{PPR} \boldsymbol{v},
\label{eq:ppr}
\end{equation}
where $\alpha_{PPR} \in (0,1)$ is a factor that balances local versus global influence, and $\boldsymbol{v}$ is the personalization vector that concentrates probability mass on a set of known high-risk accounts $Q \subset \mathcal{V}_i$. Specifically:
\begin{equation}
v_i = 
\begin{cases}
    \frac{\hat{\mathbf{y}}_i}{\sum_{j \in Q} \hat{\mathbf{y}}_j}, & \text{if } i \in Q, \\
    0, & \text{otherwise,}
\end{cases}
\end{equation}
where $P_i$ is the initial suspiciousness score of node $i$, derived from prior models, expert labeling, or regulatory flags.

\subsection{Cross-bank Personalized PageRank}
While PPR effectively amplifies and contextualizes suspiciousness signals within a single financial institution, it faces inherent limitations in the cross-border setting. Because of strict privacy and data-sharing regulations, transaction networks are fragmented across banks and jurisdictions, preventing the propagation of suspicion signals across institutional boundaries. Local biases in financial markets, documented in contexts such as crowdfunding~\parencite{ni2025xinyang}, suggest that institutions may systematically underweight cross-border signals, further motivating the need for mechanisms like Cross-bank PPR that explicitly integrate geographically distributed intelligence. As a result, money-laundering activities spanning multiple countries may appear incomplete or innocuous when analyzed in isolation as each country observes only a fragment of the global laundering pattern.  

The effectiveness of PPR-based detection depends fundamentally on the separability of money-laundering groups within the network topology—a property we formalize in Theorem \ref{thm:detectability}. To enhance detectability in fragmented settings, we introduce Cross-bank PPR, which incorporates cross-institutional intelligence while maintaining privacy constraints. Specifically, we modify the standard PPR formulation to incorporate the cross-bank suspicious signal from other countries into PPR design. 
Formally, by incorporating the cross-bank component $\boldsymbol{v}_{cross}$, the formulation of PPR can be updated as follows:
\begin{align}
    \boldsymbol{r} &= (1 - \alpha_{PPR})P \boldsymbol{r} + \alpha_{PPR} (\boldsymbol{v}_{local} + \label{PPR_weight} \boldsymbol{v}_{cross}), \\
    P &= D^{-1}(A + \hat{A})  
\label{eq:ppr}
\end{align}
where $\hat{A}_{ij}=\hat{y}_{ij}$ is the prediction of the edge $\mathcal{E}_{ij}$ and $\hat{A}_{ij}=0$ if the edge $\mathcal{E}_{ik}$ does not exist, $D$ is the degree matrix of $(A + \hat{A})$,  $\boldsymbol{v}_{local}$ encodes the bank’s internal suspicion sources, and $\boldsymbol{v}_{cross}$ represents external suspicious signals towards the cross-bank transaction shared across banks. Different from traditional PPR, we define $P$ the normalized weighted adjacency matrix that captures transaction intensity and encodes the suspicious score of the edge through $A$ and $\hat{A}$ respectively.
The teleportation factor $\alpha_{PPR}$ thus determines the balance between exploring local transactional neighborhoods and returning to known suspicious nodes. By incorporating the cross-bank component $\boldsymbol{v}_{cross}$, we enable a privacy-preserving exchange of aggregated suspicion information, allowing each bank to indirectly benefit from the intelligence gathered by others without revealing raw transactions. This cross-institutional diffusion enriches each bank’s PPR computation, helping surface hidden multi-bank laundering rings that would otherwise remain undetected due to regulatory isolation. In this way, the cross-bank PPR formulation combines fragmented data while maintaining compliance, substantially enhancing both the interpretability and completeness of risk detection.

In practice, the Cross-bank PPR vector can be computed iteratively as:
\begin{align}
    \boldsymbol{r}^{(t+1)} &= (1 - \alpha_{PPR})P \boldsymbol{r}^{(t)} + \alpha_{PPR} (\boldsymbol{v}_{local} + \boldsymbol{v}_{cross})  \\
    \boldsymbol{r}_{\text{n}} &= \frac{\boldsymbol{r}^{(t+1)}}{\text{max}(\boldsymbol{r}^{(t+1)})}, 
\end{align}
which converges to a unique solution under mild conditions. Since the largest score in $\boldsymbol{r}^{(t+1)}$ corresponds to the most suspicious seed node with the highest likelihood of being anomalous, we normalize $\boldsymbol{r}^{(t+1)}$ by dividing each node’s score by the maximum value in $\boldsymbol{r}^{(t+1)}$, thereby converting the raw scores into anomaly probabilities. Using a predefined threshold, nodes with sufficiently high probabilities are identified as suspicious. Intuitively, in the PPR, nodes that are both structurally proximate to and frequently transacting with high-suspicion accounts accumulate higher steady-state probabilities. The PPR spreads risk signals through the network: even accounts not directly involved with known suspicious actors may receive elevated scores if they participate in multi-step chains, loops, or intermediary hubs commonly used in layering and integration stages of money laundering. Notice that high-ranking nodes can be traced through their most probable propagation paths back to seed nodes, offering analysts narrative explanations such as “Account $v$ connects to seeded node $Q$ through a series of small but structured transfers.” These interpretable trails facilitate compliance documentation and regulatory auditing. 
Overall, PPR provides a mathematically grounded and operationally transparent mechanism to uncover money laundering rings by diffusing known suspicion through the transactional network. By integrating graph-theoretic reasoning with domain-aware weighting and temporal adaptation, it bridges the gap between static anomaly scoring and dynamic, relationship-centric risk discovery, thus improving both the precision and explainability of anti-money-laundering analytics.
 
To obtain a more comprehensive view of money-laundering activities across multiple countries, we merge clustering results from different national transaction networks into a unified framework. We start by identifying the cluster dictionary with the largest number of accounts as the initial reference. For each cluster in this reference dictionary, we iteratively examine clusters from other country-specific dictionaries to detect overlapping accounts. When overlaps are found, the clusters are merged to form a larger, consolidated cluster, ensuring that duplicated accounts are avoided. This process is repeated iteratively across all clusters, further merging any partially overlapping clusters to capture extended relational structures. By combining clusters in this way, the resulting merged dictionary integrates patterns spanning multiple financial institutions and geographies, allowing us to recover more complete money-laundering behaviors that may be fragmented across different national datasets. The merged clusters are then cross-validated against known laundering attempts to highlight clusters that contain multiple suspicious accounts, enhancing the interpretability and practical utility of the results.

\subsection{Theoretical Insights of Using PPR for Grouping Money Laundering Behaviors}
Having introduced Cross-bank PPR and demonstrated its empirical effectiveness, we now provide theoretical foundations for understanding when and why PPR-based methods can successfully identify money-laundering groups. We show that Personalized PageRank (PPR) initialized from a malicious seed node concentrates its probability mass inside the laundering group whenever the within-group connectivity is sufficiently stronger than the background connectivity, following the analysis by Avrachenkov and Andersen et al.~\parencite{avrachenkov2015pagerank, andersen2006local}. Let $\mathcal{G}=(\mathcal{V},\mathcal{E})$ denote a random graph  generated by a two-block Stochastic Block Model (SBM) with $n$ nodes (we omit the graph index in this theoretical analysis for notation simplicity) and a planted laundering group $S^\star\subset V$ of size $s$. Edges are sampled independently as
\begin{equation}
\Pr[(u,v)\in \mathcal{E}] =
    \begin{cases}
    p_{\mathrm{in}}, & u,v\in S^\star,\\[0.2em]
    p_{\mathrm{out}}, & \text{otherwise},
    \end{cases}
\end{equation}
Let $A$ be the adjacency matrix, $D$ the degree diagonal, and $P=D^{-1}A$ the random-walk transition operator. Given $\alpha\in(0,1)$ and a one-hot seed vector $\mathbf{e}_{s_0}$ (corresponding to a known malicious seed $s_0\in S^\star$), the Personalized PageRank vector is
\begin{equation}
    \mathbf{r} =(1-\alpha)\,(I-\alpha P)^{-1}\mathbf{e}_{s_0}.
\end{equation}
Denote by $\bar P$ the mean-field (two-block averaged) transition matrix from the expected graph, whose block-averaged version is
\begin{equation}
    M =
    \begin{pmatrix}
        p_{\mathrm{in}} & p_{\mathrm{out}}\\
        p_{\mathrm{out}} & p_{\mathrm{out}}
    \end{pmatrix}.
\end{equation}
We write $\mu_{\mathrm{in}}$ and $\mu_{\mathrm{out}}$ for the mean-field average PPR per node inside and outside the planted group:
\begin{equation}
    \label{mu_in_and_out}
    \mu_{\mathrm{in}} =\frac{(1-\alpha)}{s}\big[(I-\alpha M)^{-1}\big]_{11}, \qquad
    \mu_{\mathrm{out}} =\frac{(1-\alpha)}{n-s}\big[(I-\alpha M)^{-1}\big]_{21},
\end{equation}
and define the \emph{mean-field gap}: $\Delta_{\mathrm{mean}}=\mu_{\mathrm{in}}-\mu_{\mathrm{out}}$.

\begin{theorem}[Detectability of PPR under Planted Laundering Group]
\label{thm:detectability}
Let $G\sim\mathrm{SBM}$ $(n,s,p_{\mathrm{in}},p_{\mathrm{out}})$ with $p_{\mathrm{in}}>p_{\mathrm{out}}$ and assume that degrees concentrate around their expectations (i.e., Lemma~\ref{lem:concentration} holds). Let $\mathbf{r}$ be the Personalized PageRank vector seeded at a known malicious account $s_0\in S^\star$ with a constant $\alpha\in(0,1)$, and $\varepsilon_{\infty}$ denote the entrywise perturbation scale arising from stochastic fluctuations of $P$ around the mean-field transition matrix $\bar P$. There exist constants $C,c>0$ such that, with probability at least $1-n^{-c}$, if the mean-field gap satisfies
\[
\Delta_{\mathrm{mean}} \;\ge\; C\cdot\varepsilon_{\infty},
\]
then 
\begin{enumerate}
    \item the average PPR score on $S^\star$ exceeds that on $V\setminus S^\star$ by at least $\tfrac{1}{2}\Delta_{\mathrm{mean}}$;
    \item ordering nodes by normalized PPR $r(v)/d(v)$ and selecting the prefix with smallest conductance recovers a subset $\widehat S$ with $|\widehat S\cap S^\star|\ge\gamma s$ for some constant $\gamma\in(0,1)$.
\end{enumerate}
In the standard regime $p_{\mathrm{out}}\gtrsim \log n/n$, the detectability condition simplifies to
\[
s\,(p_{\mathrm{in}}-p_{\mathrm{out}}) \;\gtrsim\; C'\sqrt{n p_{\max}\log n},
\]
where $p_{\max}=\max\{p_{\mathrm{in}},p_{\mathrm{out}}\}$. 
\end{theorem}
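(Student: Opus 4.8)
The plan is to follow a \emph{mean-field plus perturbation} strategy: first solve the Personalized PageRank problem exactly on the expected (block-averaged) graph, where symmetry collapses the computation to a two-dimensional linear system, and then show that the realized random PPR vector stays uniformly close to this idealized solution so that the planted separation survives. Writing $P=\bar P + E$ with $\bar P=\mathbb{E}[D]^{-1}\mathbb{E}[A]$ the mean-field transition operator and $E$ the stochastic fluctuation, I would first exploit the exchangeability of nodes within each block. On $\bar P$ every node of $S^\star$ plays an identical role, as does every node of $V\setminus S^\star$, so the idealized PPR $\bar{\mathbf r}=(1-\alpha)(I-\alpha\bar P)^{-1}\mathbf{e}_{s_0}$ is constant on each block and is governed entirely by the $2\times 2$ reduced matrix $M$. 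Inverting $I-\alpha M$ in closed form yields the common per-node in-/out-block values $\mu_{\mathrm{in}},\mu_{\mathrm{out}}$ of \eqref{mu_in_and_out} and hence the mean-field gap $\Delta_{\mathrm{mean}}=\mu_{\mathrm{in}}-\mu_{\mathrm{out}}>0$ whenever $p_{\mathrm{in}}>p_{\mathrm{out}}$; this is the signal we must protect against noise.

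The second step is to quantify the noise. Invoking the degree-concentration guarantee (Lemma~\ref{lem:concentration}), I would replace $D$ by $\mathbb{E}[D]$ up to a multiplicative $1+o(1)$ factor on the event where all degrees concentrate, which holds with probability $1-n^{-c}$ in the regime $p_{\mathrm{out}}\gtrsim \log n/n$. A matrix-Bernstein (or standard SBM spectral-norm) bound then controls the centered adjacency fluctuation by $\|A-\mathbb{E}[A]\|_{\mathrm{op}}\lesssim\sqrt{n p_{\max}}$ with high probability, which combined with degree concentration yields an operator-norm bound on $E$. The key structural fact is that $\alpha P$ is a contraction, $\|\alpha P\|\le\alpha<1$, so the resolvent identity
\[
(I-\alpha P)^{-1}-(I-\alpha\bar P)^{-1}
=\alpha\,(I-\alpha P)^{-1}E\,(I-\alpha\bar P)^{-1}
\]
together with $\|(I-\alpha P)^{-1}\|\le (1-\alpha)^{-1}$ shows that the PPR map is Lipschitz in the transition operator. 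Propagating the bound through this identity produces an entrywise perturbation scale $\varepsilon_{\infty}$ for $\mathbf r$ around $\bar{\mathbf r}$.

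Given the gap and the noise scale, the two conclusions follow quickly. For claim (1), since every in-block (resp. out-block) entry of $\bar{\mathbf r}$ equals the common value $\mu_{\mathrm{in}}$ (resp. $\mu_{\mathrm{out}}$), the uniform bound $\|\mathbf r-\bar{\mathbf r}\|_\infty\le\varepsilon_\infty$ implies each block average of $\mathbf r$ differs from $\mu_{\mathrm{in}},\mu_{\mathrm{out}}$ by at most $\varepsilon_\infty$, so under $\Delta_{\mathrm{mean}}\ge C\varepsilon_\infty$ with $C\ge 4$ the realized average gap is at least $\Delta_{\mathrm{mean}}-2\varepsilon_\infty\ge\tfrac12\Delta_{\mathrm{mean}}$. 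For claim (2), I would invoke the sweep-cut / local-partitioning guarantee of \parencite{andersen2006local}: ordering nodes by $r(v)/d(v)$ and taking the minimum-conductance prefix, the separation between in- and out-block normalized scores forces a constant fraction $\gamma s$ of $S^\star$ into the low-conductance prefix, giving $|\widehat S\cap S^\star|\ge\gamma s$. The final simplified threshold is then obtained by substituting the explicit mean-field gap (whose leading dependence is proportional to $p_{\mathrm{in}}-p_{\mathrm{out}}$) and the concentration scale $\varepsilon_\infty\asymp\sqrt{n p_{\max}\log n}$ into $\Delta_{\mathrm{mean}}\ge C\varepsilon_\infty$ and clearing the per-node normalization by $s$, yielding $s(p_{\mathrm{in}}-p_{\mathrm{out}})\gtrsim C'\sqrt{n p_{\max}\log n}$.

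The hard part will be upgrading the $\ell_2$/operator-norm resolvent perturbation to a genuinely \emph{entrywise} ($\ell_\infty$) guarantee, since claim (1)'s per-node separation and the sweep-cut analysis both demand control of individual coordinates rather than aggregate energy. A direct operator-norm bound only gives $\|\mathbf r-\bar{\mathbf r}\|_2$, which is too weak coordinate-wise; obtaining $\varepsilon_\infty$ requires either a leave-one-out decoupling argument that removes the dependence between the seed's row of $E$ and the remaining resolvent, or a Neumann-series expansion in which each term $\alpha^k P^k \mathbf{e}_{s_0}$ is shown to concentrate entrywise via Bernstein bounds on weighted degree sums. Handling the boundary between the two blocks — where expected degrees differ and the $1+o(1)$ degree approximation is loosest — is where I would expect the constants $C,c$ to be pinned down, and where the requirement $p_{\mathrm{out}}\gtrsim\log n/n$ becomes essential to keep $\varepsilon_\infty$ below $\Delta_{\mathrm{mean}}$.
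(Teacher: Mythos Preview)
Your proposal is correct and follows essentially the same mean-field--plus--perturbation route as the paper: the $2\times 2$ block reduction for $\mu_{\mathrm{in}},\mu_{\mathrm{out}}$, matrix-Bernstein for $\|A-\bar A\|$, the resolvent identity for the PPR perturbation, the $C\ge 4$ triangle-inequality argument for part~(1), and the Andersen--Chung--Lang sweep for part~(2) all match the paper's proof. The one place you over-engineer is the $\ell_\infty$ step: the paper does \emph{not} use leave-one-out or an entrywise Neumann analysis, but simply applies the crude conversion $\|\Delta r\|_\infty \le \|\,\cdot\,\|_{1\to\infty}\le \sqrt{n}\,\|\,\cdot\,\|_2$ to the operator-norm resolvent bound, defining $\varepsilon_\infty := \sqrt{n}\cdot \tfrac{2\alpha\varepsilon_P}{\delta^2}$ and then absorbing the resulting extra factors into the constant $C'$ when deriving the simplified threshold; so what you flagged as ``the hard part'' is in fact handled loosely rather than sharply.
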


\vspace{0.5em}
\noindent\textit{Proof.}
See Appendix~\ref{app:proofs}.

Theorem~\ref{thm:detectability} establishes a detectability condition for identifying a money-laundering group using Personalized PageRank (PPR). Intuitively, PPR captures money-laundering behavior because it \emph{propagates the suspiciousness signal} from a known malicious seed node through the transactional network. Nodes densely connected to the seed, likely participating in group laundering, accumulate higher PPR scores, while weakly connected nodes receive lower scores. The theorem formalizes this by showing that a well-separated group will stand out in the PPR ranking.
In our proposed cross-bank PPR, we modify the transition matrix in PPR by assigning higher weights to edges more likely associated with laundering transactions as defined in Equation~\ref{PPR_weight}, which further amplifies scores for nodes strongly connected via suspicious interactions. Specifically, weighted edges increase the transition probability along suspicious links, causing PPR to concentrate more on nodes involved in laundering. This effectively increases \(\mu_{\mathrm{in}}\) relative to \(\mu_{\mathrm{out}}\), enlarging the gap \(\Delta_{\mathrm{mean}}\). Consequently, the detectability threshold is easier to satisfy, and PPR can recover a larger fraction of the laundering group or rank them higher in the graph cut procedure. On the other hand, to ensure that PPR starts with the malicious seed node, we reduce the probability of starting with a normal seed nodes by incorporating the cross-bank suspicious signal from other countries.

\subsection{Label Refinement via Label Propagation}
In the previous subsections, we demonstrated how cross-bank PPR captures group-wise money laundering patterns and provided theoretical insights into the detectability of PPR under planted laundering groups. In Section~\ref{PPR_interpretation}, we further visualize the identified laundering groups to highlight the interpretability of the proposed method. Beyond interpretation, a natural follow-up question arises: \emph{Can the discovered group-wise laundering patterns be further exploited to improve detection performance?} To address this question, we propose a label refinement strategy based on label propagation. This design is motivated by the observation that money laundering activities are inherently group-based and densely connected. Similar to the diffusion mechanism underlying PPR, we hypothesize that malicious signals identified within high-confidence laundering groups can be propagated to neighboring nodes. This is particularly useful for suspicious nodes that receive low confidence scores from the base detection module and may otherwise be overlooked. By propagating malicious signals from confidently identified nodes to their neighbors, we increase the likelihood of recovering such borderline cases.

Formally, given a weighted graph represented by a sparse adjacency matrix, we first construct a row-normalized transition matrix $\mathbf{S}$. Starting from the edge-level prediction $\hat{\mathbf{y}}_{uv}$ produced by the detection module, we compute a node-level malicious score for each account $u$ as
\begin{align}
    \hat{\mathbf{y}}_u = \frac{1}{|\{v \mid (v,u)\in E\}|} \sum_{v:(v,u)\in E} \hat{\mathbf{y}}_{vu},
\end{align}
which corresponds to the average malicious score over all edges incident to node $u$. These node-level scores form the initial label vector $\hat{\mathbf{y}}_{\text{nodes}}$.
Following the same idea of PPR, the label propagation process then iteratively updates the soft node labels according to
\begin{align}
    \mathbf{\hat{R}}^{(t+1)} &= \alpha \mathbf{S}\mathbf{\hat{R}}^{(t)} + (1-\alpha)\hat{\mathbf{y}}_{\text{nodes}}, \\
    \nonumber \mathbf{S} &= \mathbf{\hat{D}}^{-1}(A + A_{\text{PPR}}),
\end{align}
where $\mathbf{\hat{D}}$ is the degree matrix, $A$ denotes the original graph adjacency matrix, and $A_{\text{PPR}}$ encodes the group-wise laundering structures identified by cross-bank PPR. The parameter $\alpha \in (0,1)$ controls the strength of propagation. Iterations continue until convergence, measured by the $\ell_1$-norm difference between successive estimates, or until a predefined maximum number of iterations is reached. The resulting node labels are normalized to ensure numerical stability and rescale the malicious score within the range of [0,1]:
\begin{align}
    \mathbf{\hat{R}}^{(t+1)} \leftarrow \frac{\mathbf{\hat{R}}^{(t+1)}}{\max(\mathbf{\hat{R}}^{(t+1)})}.
\end{align}

Since $\mathbf{\hat{R}}$ represents node-level malicious scores, we need to convert them back to edge-level signals by assigning each edge $(u,v)$ the maximum score of its incident nodes,
\begin{align}
    \mathbf{\bar{R}}_{uv} = \max\left(\mathbf{\hat{R}}^{(t+1)}_u, \mathbf{\hat{R}}^{(t+1)}_v\right).
\end{align}
Finally, the original edge prediction $\hat{\mathbf{y}}_{uv}$ is refined by incorporating the propagated malicious signal:
\begin{align}
    \hat{\mathbf{y}}_{uv} \leftarrow (1-\alpha_{\text{lp}})\hat{\mathbf{y}}_{uv} + \alpha_{\text{lp}}\mathbf{\bar{R}}_{uv},
\end{align}
where $\alpha_{\text{lp}}$ balances the contribution between the base detector and the propagated group-level information.

\section{Hierarchical Adaptive Decision-Making for Financial Security}
\subsection{Hierarchical Adaptive Decision-Making module}
Regulatory authorities face clear-cut cases where intervention is straightforward: transactions flagged with very high confidence as suspicious should be immediately frozen. However, a critical challenge arises with transactions that receive low or moderate suspicion scores. Simply ignoring them risks missing coordinated laundering activities that unfold across multiple accounts and institutions, while indiscriminately freezing low-confidence transactions can lead to excessive disruption and unnecessary false positives. To address this challenge, we propose a \emph{Hierarchical Adaptive Decision-Making module} that integrates local threshold-based reinforcement learning with a global coordinator to ensure the consistency between local and global policies. The module treats the outputs of the graph-based detection as initial probabilistic labels for each transaction, providing prior knowledge of suspicious activity. Then, it captures both short-term (local bank-level) decision feedback and long-term (global system-level) optimization of cooperative anti-laundering policies.

Each bank $b$ manages a local transaction network $\mathcal{G}^b=(\mathcal{V}^b,\mathcal{E}^b)$, where each transaction $\mathcal{E}_{uv}^b$ carries an anomaly probability $\hat{\mathbf{y}}_{uv}$.  
At time $t$, the bank maintains an intervention threshold $\tau_b(t)$ and decides whether to intervene:
\begin{equation}
    a_{uv}^b:= \text{Intervene on } \mathcal{E}_{uv}^b \iff \hat{\mathbf{y}}_{uv} \geq \tau_b(t).
\end{equation}

To adaptively determine the threshold, we formulate the problem of adjusting thresholds as a reinforcement learning (RL) problem~\parencite{li2017deep, szepesvari2022algorithms, ladosz2022exploration}. At time $t$, the agent observes state $s_{uv}^b$, selects an action $a_{uv}^b$ corresponding to a threshold $\tau_b(t)$, and receives a reward $r_t^b$ that measures the quality of decisions across all edges $\mathcal{E}^{b}_t$ at the time $t$:
\begin{align}
    r_t^b &= \sum_{(u,v)\in \mathcal{E}^{b}_t} R(s_{uv}^b, a_{uv}^b, C_{uv}, y_{uv}) \\
    R(s_{uv}^b, a_{uv}^b, C_{uv}, y_{uv}) &= 
    \begin{cases}
        \alpha_{r1}\log(C_{uv}), & \text{if } y_{uv}=1 \text{ and } a_{uv}^b = \text{Freeze}   \\
        \alpha_{r2}\log(C_{uv}), & \text{if } y_{uv}=1 \text{ and } a_{uv}^b = \text{Monitor} \\
        -\alpha_{r3}\log(C_{uv}), & \text{if } y_{uv}=1 \text{ and } a_{uv}^b = \text{No Intervention} \\
        -\alpha_{r4}, & \text{if } y_{uv}=0 \text{ and } a_{uv}^b = \text{Freeze}   \\
        -\alpha_{r5}, & \text{if } y_{uv}=0 \text{ and } a_{uv}^b = \text{Monitor} \\
        \alpha_{r6}, & \text{if } y_{uv}=0 \text{ and } a_{uv}^b = \text{No Intervention} \\
     \end{cases}
\end{align}
where $\text{C}_{uv}$ is the cost of missing effective actions that should be deployed (which could be the amount of financial loss in this transaction), $y_{uv}$ is the label showing whether this transaction is illicit, $\alpha_{r1}$, $\alpha_{r2}$, $\alpha_{r3}$, $\alpha_{r4}$, $\alpha_{r5}$ and $\alpha_{r6}$ are positive hyperparameters balancing the importance of different actions in six scenarios. Specifically, when a transaction is malicious (i.e.,  $(y_{uv}=1$), we reward the model if the model begins to intervene in this transaction either by freezing or monitoring this transaction, with $\alpha_{r1} > \alpha_{r2} > \alpha_{r3}$ showing the different weight for three actions. Otherwise, we reduce its rewards if no intervention is deployed. When a transaction is normal activity, we reward the model if it does not intervene in this normal behavior. Otherwise, we penalize the model at different degrees with $\alpha_{r4} \geq \alpha_{r6} >\alpha_{r5}$. This reward design encourages high-probability suspicious transactions to trigger appropriate interventions, penalizes unnecessary actions on benign transactions, and incorporates operational cost through both the constant penalties and the cost-aware logarithmic terms. The goal of the local policy $\pi_\theta^b(a|s_{uv}^b)$ is to maximize the expected cumulative reward:
\begin{equation}
    J(\pi_\theta^b) = \mathbb{E}_{\pi^b}\!\Big[\sum_{t=0}^{\infty} \gamma^t r_t^b \Big].
\end{equation}

Our dynamic decision-making framework draws on the tradition of modeling sequential choice under uncertainty, where agents make decisions that affect both immediate outcomes and future states~\parencite{mehta2017ni}. To address the negative impact of bank isolation issue, we propose a global-local coordinator to ensure the consistency between local and global policies. Specifically, local institutions adaptively respond to evolving transaction patterns, while the global coordinator aligns their behavior toward collective objectives. The global policy $\pi_\phi^{\text{global}}$ updates local thresholds using soft coordination:
\begin{equation}
    \tau_b(t+1) \leftarrow \tau_b(t) + \eta_g \big(\bar{\tau}(t) - \tau_b(t)\big), \quad 
    \bar{\tau}(t) = \sum_{b=1}^{B} w_b \tau_b(t), 
\end{equation}
where $\omega_b$ represents the relative institutional importance or transaction volume weight. To ensure the consistency between local and global policies, we define the coupling constraint as:
\begin{align}
    \mathcal{L}_{\text{couple}} &= 
    \sum_{b=1}^B \big\| \tau_b(t) - \bar{\tau}(t) \big\|^2
    + \xi \sum_{b=1}^B D_{\mathrm{KL}}\!\big(\pi_\theta^b \,||\, \pi_\phi^{\text{global}}\big), 
\end{align}
where the first term enforces consistency between local and global thresholds, and the second term aligns local policies with global intent through a KL-divergence regularization weighted by $\xi$. 
The joint optimization objective becomes:
\begin{equation}
    \max_{\{\pi_\theta^b\},\, \pi_\phi^{\text{global}}}
    \sum_{b=1}^B J(\pi_\theta^b)
    - \lambda_{\text{c}} \mathcal{L}_{\text{couple}},
\end{equation}
where $\lambda_{\text{c}}$ controls the strength of policy coupling.
The proposed hierarchical adaptive decision-making framework establishes a principled coordination structure for multi-institution financial security systems. Local RL agents autonomously learn transaction-level thresholds for each bank, while a global coordinator ensures cooperative, privacy-preserving, and regulation-compliant adaptation.

\section{Empirical Results}

\subsection{Dataset Statistics and Feature Preprocessing}
Our empirical analysis leverages the IBM Anti–Money Laundering (AML) Small dataset, which comprises over 5 million transaction records collected between September 1 and September 10, 2022. IBM Anti–Money Laundering dataset is a synthetic dataset, which is generated through a structured, multi-stage simulation framework designed to mimic real-world financial behavior while injecting realistic illicit activities~\parencite{altman2023realistic}. According to Altman et al.~\parencite*{altman2023realistic}, the process begins by constructing a population of synthetic customers, accounts, and financial entities whose demographic and behavioral attributes are sampled from empirically observed distributions. A transaction network is then formed by modeling normal financial activities using probabilistic rules calibrated to real banking data, capturing patterns such as salary deposits, bill payments, peer-to-peer transfers, and business-to-consumer flows. On top of this baseline, the system overlays money-laundering schemes—such as structuring, smurfing, round-tripping, and funnel accounts—through explicit scenario scripts that specify how illicit funds are introduced, layered, and integrated. Each scenario defines the actors involved, temporal patterns, transaction amounts, and network structures, ensuring that both benign and suspicious behaviors emerge organically within the same simulated environment. This combination of bottom-up population modeling and top-down illicit activity injection allows the IBM AML dataset to faithfully approximate real transaction ecosystems while providing ground-truth labels for evaluating detection modus~\footnote{Please refer to the paper by ~\parencite{altman2023realistic} for more details of generating the AML dataset.}.

\begin{table}[h!]
\centering
\caption{Statistics of transaction by country}
\resizebox{0.9\textwidth}{!}{
\begin{tabular}{lcccc}
\hline
\textbf{Country} & \textbf{\#Accounts} & \textbf{\#Total Transactions} & \textbf{\#Illicit Transactions} & \textbf{Ratio of Illicity}\\
\hline
United States & 71,796 & 855,006  & 3,043 & 0.35\%\\
Germany       & 31,566 & 275,129  & 1,308 & 0.47\%\\
France        & 28,126 & 244,589  & 952   & 0.38\%\\
Italy         & 23,262 & 194,155  & 804   & 0.41\%\\
Spain         & 24,363 & 207,098  & 763   & 0.36\%\\
China         & 21,345 & 181,341  & 961   & 0.52\%\\
Rest Countries & 16,213 & 111,847 & 952   & 0.85\%\\
\hline
\end{tabular}}
\label{tab:country_graph_stats}
\end{table}

Since most of the baseline methods is not scalable to the large-scale dataset, we first compare the performance of our detection module with the existing methods in a subset with 1.4 million transaction records in this section and then we demonstrate the scalability of our method in Section~\ref{scalability_analysis}. We summarize the transaction networks across different countries in Table~\ref{tab:country_graph_stats} for the subset with 1.4 million transaction records. To mimic the cross-boarder money-laundering detection, we split the entire dataset into multiple subsets based on the nationality of banks. Transactions involving multiple banks are duplicated so that they appear in the networks of all relevant countries. After splitting, the United States represents the largest network, with over 71,000 accounts and approximately 855,000 transactions, followed by major European economies such as Germany, France, Italy, and Spain. Despite the variations in network size, a consistent pattern emerges across all markets: illicit transactions account for less than 1\% of total activity. For instance, while China exhibits the highest proportion of suspicious transactions at 0.52\%, the overall prevalence of illicit activity remains extremely low, reflecting the severe class imbalance that characterizes real-world financial data. Interestingly, the “Rest Countries” category, comprising smaller or less-regulated markets, shows a noticeably higher illicit ratio of 0.85\%, suggesting that money-laundering activities may be more concentrated in regions with weaker oversight or fragmented compliance mechanisms. The prevalence of informal financial channels in emerging markets~\parencite{bao2018nisingh} creates additional vulnerabilities that sophisticated laundering networks can exploit. This statistical profile underscores the operational challenge faced by financial institutions: effectively identifying rare, high-risk activities within overwhelmingly legitimate transaction flows. We partition each country-level dataset into a 5\% training set and a 95\% test set to evaluate model generalization under limited supervised data. 

On the IBM AML dataset, a critical feature is \emph{transaction amount}, which ranges from 1 cent to 8.04 billion dollars. Such a wide scale increases the number of training iterations required for most models to converge, and prior work has shown that appropriate feature normalization significantly accelerates deep network training~(\cite{ioffe2015batch}). Therefore, feature normalization is essential in our preprocessing pipeline. During preprocessing, we observed that the \emph{order} of feature normalization and data partitioning meaningfully affects the performance of both our method and baseline models. This performance degradation stems from \emph{subpopulation shift}~\parencite{koh2021wilds}---a form of distribution shift where the overall population remains fixed, but the characteristics or proportions of underlying subgroups differ. In our setting, these subpopulations correspond to transactions originating from different countries/markets. When features are normalized \emph{before} partitioning, each country receives data with similar statistical properties. However, when we first split the data by country and then apply feature normalization within each subset, the resulting country-level scales diverge substantially, increasing the degree of subpopulation shift.

\begin{table}[ht]
\caption{Data statistics of the \emph{transaction amount} feature across countries. For each country, $\Delta_{\text{metric}}$ ($\text{metric} \in {\text{Min}, \text{Max}, \text{Mean}, \text{SD}}$) denotes the difference between the country-level statistic and the global-level (Overall) statistic. Large deviations across these metrics illustrate the presence of subpopulation shift after data partition.}
\centering
\small
\resizebox{0.9\textwidth}{!}{
\begin{tabular}{lcccccccc}
\hline
\textbf{GraphName} & \textbf{Min} & \textbf{Max} & \textbf{Mean} & \textbf{SD} & \textbf{$\Delta$ Min} & \textbf{$\Delta$ Max} & \textbf{$\Delta$ Mean} & \textbf{$\Delta$ SD} \\
\hline
United States       & 0.01 & 2134359601 & 388627.61 & 10616792.77 & 0 & 5911955518 & 50279.46 & 9334910.41 \\
Germany             & 0.01 & 8046315118 & 541468.28 & 31701228    & 0 & 0          & -102561.21 & -11749524.82 \\
France              & 0.02 & 2134359601 & 410596.91 & 11738295.85 & -0.01 & 5911955518 & 28310.16 & 8213407.33 \\
Italy               & 0.01 & 1825924651 & 446492.93 & 11949073    & 0 & 6220390468 & -7585.86 & 8002630.18 \\
Spain               & 0.01 & 1825924651 & 472743.68 & 11263536.13 & 0 & 6220390468 & -33836.61 & 8688167.05 \\
China               & 0.01 & 1825924651 & 432847.84 & 11948125.86 & 0 & 6220390468 & 6059.23 & 8003577.32 \\
Rest Countries      & 0.01 & 7512426017 & 618635.73 & 34107069.54 & 0 & 533889100.7 & -179728.66 & -14155366.36 \\
Overall             & 0.01 & 8046315118 & 438907.07 & 19951703.18 & 0 & 0          & 0 & 0 \\
\hline
\end{tabular}}
\label{data_distribution_shift_demo}
\end{table}

Subpopulation shift poses a practical challenge: (1) data cannot be shared across banks due to privacy and regulatory constraints, yet (2) proper normalization is necessary for model convergence. To validate our hypothesis that post-partition normalization amplifies subpopulation shift, we conduct an exploratory analysis using Min--Max normalization:
\begin{equation}
    \label{min_max_normalization}
    x^{\text{scaled}}_i = \frac{x_i - x^{\text{min}}_i}{x^{\text{max}}_i - x^{\text{min}}_i}.
\end{equation}
Table~\ref{data_distribution_shift_demo} reports the minimum, maximum, mean, and standard deviation (denoted as SD) of \emph{transaction amount} for each country and for the global dataset. To quantify deviations between country-level and global distributions, we compute
\begin{equation}
    \Delta_{\text{metric}} = \text{Overall}_{\text{metric}} - \text{Country}_{\text{metric}},
\end{equation}
where $\text{metric} \in \{\text{Min}, \text{Max}, \text{Mean}, \text{SD}\}$. Across all countries, the last three metrics vary substantially. This implies that widely used normalization strategies, such as Min--Max Normalization, Z-Score Normalization, and Mean Normalization, are likely to induce significant subpopulation shift when country-level statistics are used as proxies for global ones.

Next, we empirically evaluate how subpopulation shift affects model performance. To isolate its impact, we design a controlled experiment with two settings. \textbf{(1) Country-level Normalization:} We normalize the feature in each subset independently based on the country-level statistics. Notice that the country-level normalization induces the significant subpopulation shift due to the significant difference between the country-level statistics and global statistics. \textbf{(2) Global-level Normalization:} We normalize the feature in the entire dataset using global statistics. Because the second setting applies consistent scaling across all countries, it eliminates subpopulation shift. In both settings, we aggregate all transactions from different subsets into a single dataset and train each model once on the unified dataset. Table~\ref{subpopulation_shift_results} presents the results, where Panel~A corresponds to the setting country-level normalization and Panel~B to the setting with global-level normalization. The results reveal two key findings. First, Random Forest shows mixed behavior: with global-level normalization, its Type~II error decreases by over 11\%, but its AUPRC drops by about 3\% compared with country-level normalization, making it difficult to infer the overall effect solely from this model. Second, rest models, including Support Vector Machine (SVM), logistic regression, Federated MLP, and our proposed method, achieve higher AUPRC and lower Type II error when subpopulation shift is removed or replacing country-level normalization with global-level normalization. These empirical findings confirm that subpopulation shift meaningfully reduces the detection performance of most machine learning models in the AML setting.

\begin{table*}[ht!]
\centering
\caption{Comparison of Methods Under \emph{Country-level Normalization} (Panel A) and \emph{Global-level Normalization} (Panel B), where \emph{Country-level Normalization} induces significant subpopulation shift.}
\small
\resizebox{0.9\textwidth}{!}{
\begin{tabular}{l|cc|cc|cc|cc|cc}
\multicolumn{11}{c}{\textbf{Panel A: Country-level Normalization}} \\
\hline
\textbf{Graph} &
\multicolumn{2}{c|}{\textbf{Logistic Regression}} &
\multicolumn{2}{c|}{\textbf{SVM}} &
\multicolumn{2}{c|}{\textbf{Random Forest}} &
\multicolumn{2}{c|}{\textbf{Federated MLP}} &
\multicolumn{2}{c}{\textbf{Our Method}} \\
& AUPRC  & Type II & AUPRC  & Type II & AUPRC  & Type II & AUPRC  & Type II & AUPRC  & Type II \\
\hline
United States        & 0.2381 & 0.0193 & 0.3412 & 1.0000 & 0.7503 & 0.3961 & 0.4977 & 0.0699 & 0.6881 & 0.0636 \\
Germany   & 0.2875 & 0.0293 & 0.3340 & 1.0000 & 0.4101 & 0.7594 & 0.3771 & 0.0920 & 0.5544 & 0.0657 \\
France    & 0.2370 & 0.0179 & 0.3086 & 1.0000 & 0.7266 & 0.3824 & 0.3369 & 0.0702 & 0.5299 & 0.0413 \\
Italy     & 0.2717 & 0.0147 & 0.3070 & 1.0000 & 0.6896 & 0.4517 & 0.3759 & 0.1195 & 0.5363 & 0.0753 \\
Spain     & 0.2712 & 0.0157 & 0.2754 & 1.0000 & 0.7010 & 0.3916 & 0.3347 & 0.0892 & 0.5026 & 0.0507 \\
China     & 0.3202 & 0.0217 & 0.3443 & 1.0000 & 0.7047 & 0.4393 & 0.3924 & 0.0564 & 0.4716 & 0.0477 \\
Rest Countries & 0.5487 & 0.0156 & 0.6186 & 1.0000 & 0.6542 & 0.7337 & 0.6527 & 0.1020 & 0.7854 & 0.0666 \\
\hline
Overall        & 0.3106 & 0.0192 & 0.3613 & 1.0000 & \textbf{0.6624} & 0.5077 & 0.4239 & 0.0856 & 0.6046 & 0.0600 \\
\hline
\end{tabular}}
\resizebox{0.9\textwidth}{!}{
\begin{tabular}{l|cc|cc|cc|cc|cc}
\multicolumn{11}{c}{\textbf{Panel B: Global-level Normalization}} \\
\hline
\textbf{Graph} &
\multicolumn{2}{c|}{\textbf{Logistic Regression}} &
\multicolumn{2}{c|}{\textbf{SVM}} &
\multicolumn{2}{c|}{\textbf{Random Forest}} &
\multicolumn{2}{c|}{\textbf{Federated MLP}} &
\multicolumn{2}{c}{\textbf{Our Method}} \\
& AUPRC  & Type II & AUPRC  & Type II & AUPRC  & Type II & AUPRC  & Type II & AUPRC  & Type II \\
\hline
United States   & 0.2761 & 0.0076 & 0.3752 & 0.8974 & 0.6446 & 0.3508 & 0.5136 & 0.0542 & 0.6922 & 0.0641 \\
Germany         & 0.2886 & 0.0020 & 0.3642 & 0.8443 & 0.5961 & 0.4014 & 0.3867 & 0.0748 & 0.5617 & 0.0647 \\
France          & 0.2471 & 0.0041 & 0.3453 & 0.8308 & 0.5714 & 0.4017 & 0.3485 & 0.0495 & 0.5399 & 0.0440 \\
Italy           & 0.2722 & 0.0033 & 0.3801 & 0.8003 & 0.5758 & 0.4354 & 0.3757 & 0.1015 & 0.5406 & 0.0736 \\
Spain           & 0.2585 & 0.0017 & 0.3300 & 0.8269 & 0.5762 & 0.3881 & 0.3379 & 0.0752 & 0.5153 & 0.0507 \\
China           & 0.3017 & 0.0043 & 0.3458 & 0.8699 & 0.5919 & 0.4118 & 0.3865 & 0.0477 & 0.4774 & 0.0477 \\
Rest Countries  & 0.5742 & 0.0042 & 0.6151 & 0.8229 & 0.7954 & 0.3541 & 0.6804 & 0.0949 & 0.7833 & 0.0609 \\
\hline
Overall         & \textbf{0.3169} & \textbf{0.0039} & \textbf{0.3937} & \textbf{0.8418} & 0.6216 & \textbf{0.3919} & \textbf{0.4327} & \textbf{0.0711} & \textbf{0.6105} & \textbf{0.0596} \\
\hline
\end{tabular}}
\label{subpopulation_shift_results}
\end{table*}

Although our method with global-level normalization outperforms the country-level normalization baseline, privacy concerns prevent institutions from sharing raw data across domains, making global-level normalization less practical in real-world settings. Moreover, the evolving nature of transaction data necessitates frequent updates to normalization statistics, further complicating coordination across institutions. To address this issue, we propose a simple yet effective solution, termed fixed-value min–max normalization. Unlike the min–max normalization defined in Equation~(\ref{min_max_normalization}), which relies on global-level statistics computed over the entire dataset, our approach randomly selects the minimum and maximum values from a country-level subset and shares only these fixed values with other subsets. As a result, all subsets apply the same minimum and maximum values during normalization without directly sharing sensitive data. To evaluate whether it alleviate subpopulation shift, we conduct a comparative empirical study to examine whether the proposed fixed-value min–max normalization can achieve performance comparable to global-level normalization. The results in Table~\ref{fixed_min_max_results} show that our method with fixed-value min–max normalization achieves slightly better performance than global-level min–max normalization across all four evaluation metrics, demonstrating the effectiveness of the proposed normalization strategy. Overall, this fixed-value min–max normalization strategy offers two key advantages. First, it explicitly addresses privacy concerns by avoiding the exchange of raw data across institutions. Second, it effectively mitigates subpopulation shift by enforcing a consistent normalization scale across different subsets.

\begin{table}[ht!]
    \centering
    \caption{Performance comparison of our method under different normalization strategies (Global-level Min-max Normalization vs Fixed Value Min-max Normalization. }
    \resizebox{0.90\textwidth}{!}{
    \begin{tabular}{l|cccc|cccc}
    \hline
    \multirow{2}{*}{\textbf{Market}} 
    & \multicolumn{4}{c|}{\textbf{Global-level Min--max Normalization}} 
    & \multicolumn{4}{c}{\textbf{Fixed Value Min--max Normalization}} \\
    \cline{2-9}
    & AUCROC & AUPRC & Type I Error & Type II Error 
    & AUCROC & AUPRC & Type I Error & Type II Error \\
    \hline
    United States           & 0.9818 & 0.6922 & 0.0520 & 0.0641 & 0.9831 & 0.7004 & 0.0524 & 0.0660 \\
    Germany                 & 0.9773 & 0.5617 & 0.0723 & 0.0647 & 0.9778 & 0.5880 & 0.0702 & 0.0512 \\
    France                  & 0.9787 & 0.5399 & 0.0719 & 0.0440 & 0.9787 & 0.5515 & 0.0711 & 0.0480 \\
    Italy                   & 0.9778 & 0.5406 & 0.0670 & 0.0736 & 0.9789 & 0.5730 & 0.0656 & 0.0626 \\
    Spain                   & 0.9793 & 0.5153 & 0.0680 & 0.0507 & 0.9790 & 0.5048 & 0.0653 & 0.0452 \\
    China                   & 0.9720 & 0.4774 & 0.0710 & 0.0477 & 0.9740 & 0.4955 & 0.0687 & 0.0358 \\
    Rest Countries          & 0.9865 & 0.7833 & 0.0396 & 0.0609 & 0.9866 & 0.7775 & 0.0390 & 0.0558 \\
    \hline
    Overall                 & 0.9800 & 0.6105 & 0.0610 & 0.0596 & \textbf{0.9809} & \textbf{0.6121} & \textbf{0.0602} & \textbf{0.0553} \\
    \hline
    \end{tabular}
    \label{fixed_min_max_results}
    }
\end{table}

Next, we examine the effect of data fragmentation on model performance. In this experiment, we first normalize the features and then partition the dataset. In the first setting, machine learning models are trained separately on each country-level subset. In the second setting, we aggregate all transactions into a single dataset and train each model once on the unified data. Comparing these two settings allows us to test our hypothesis that data fragmentation negatively affects model performance. Table~\ref{data_fragmentation_results} reports the results, with Panel~A corresponding to the fragmented (country-level) setting and Panel~B to the aggregated setting. The findings show that Logistic Regression, SVM, and the MLP with Focal Loss (MLP+Focal) consistently achieve higher AUPRC scores when trained on the aggregated dataset. Notably, SVM, Logistic Regression and MLP + Focal improve AUPRC and reduce Type~II error in the aggregated setting. In particular, MLP + Focal increases AUPRC by more than 4.5\% and reduces Type~II error from 20\% to 6.18\%. These results demonstrate that data fragmentation leads to degraded performance relative to training on the fully aggregated dataset.

\begin{table*}[ht]
\centering
\caption{Examining the Impact of Data Fragmentation.}

\resizebox{0.9\textwidth}{!}{
\begin{tabular}{lcccccccc}
\multicolumn{9}{c}{\textbf{Panel A: Results of Methods \textbf{With Partition}}} \\
\hline
\multirow{2}{*}{\textbf{Market}} 
& \multicolumn{2}{c}{\textbf{Logistic Regression}} 
& \multicolumn{2}{c}{\textbf{SVM}} 
& \multicolumn{2}{c}{\textbf{Random Forest}}
& \multicolumn{2}{c}{\textbf{MLP+Focal}} \\
\cline{2-9}
& AUPRC & Type II & AUPRC & Type II & AUPRC & Type II & AUPRC & Type II \\
\hline
United States        & 0.2262 & 0.0076 & 0.3102 & 1.0000 & 0.6173 & 0.4350 & 0.5012 & 0.1568 \\
Germany   & 0.2882 & 0.0020 & 0.3222 & 0.9990 & 0.6176 & 0.5976 & 0.3991 & 0.2042 \\
France    & 0.2495 & 0.0041 & 0.2782 & 1.0000 & 0.6689 & 0.6589 & 0.2916 & 0.2545 \\
Italy     & 0.2636 & 0.0033 & 0.3266 & 0.9411 & 0.6693 & 0.6301 & 0.3569 & 0.2815 \\
Spain     & 0.2601 & 0.0017 & 0.2234 & 1.0000 & 0.6232 & 0.6731 & 0.3351 & 0.2028 \\
China     & 0.3119 & 0.0564 & 0.3055 & 1.0000 & 0.5528 & 0.6922 & 0.3249 & 0.2095 \\
Rest Countries & 0.5198 & 0.0042 & 0.5626 & 0.1261 & 0.8133 & 0.2082 & 0.6120 & 0.0907 \\
Overall        & 0.3027 & 0.0113 & 0.3327 & 0.8666 & \textbf{0.6518} & 0.5564 & 0.4030 & 0.2000 \\
\hline
\end{tabular}
}

\resizebox{0.9\textwidth}{!}{
\begin{tabular}{lcccccccc}
\multicolumn{9}{c}{\textbf{Panel B: Results of Methods \textbf{Without Partition}}} \\
\hline
\multirow{2}{*}{\textbf{Market}}
& \multicolumn{2}{c}{\textbf{Logistic Regression}} 
& \multicolumn{2}{c}{\textbf{SVM}} 
& \multicolumn{2}{c}{\textbf{Random Forest}}
& \multicolumn{2}{c}{\textbf{MLP+Focal}} \\
\cline{2-9}
& AUPRC & Type II & AUPRC & Type II & AUPRC & Type II & AUPRC & Type II \\
\hline
United States   & 0.2761 & 0.0076 & 0.3752 & 0.8974 & 0.6446 & 0.3508 & 0.4743 & 0.0623 \\
Germany         & 0.2886 & 0.0020 & 0.3642 & 0.8443 & 0.5961 & 0.4014 & 0.4081 & 0.0688 \\
France          & 0.2471 & 0.0041 & 0.3453 & 0.8308 & 0.5714 & 0.4017 & 0.3860 & 0.0454 \\
Italy           & 0.2722 & 0.0033 & 0.3801 & 0.8003 & 0.5758 & 0.4354 & 0.4073 & 0.0769 \\
Spain           & 0.2585 & 0.0017 & 0.3300 & 0.8269 & 0.5762 & 0.3881 & 0.3685 & 0.0455 \\
China           & 0.3017 & 0.0043 & 0.3458 & 0.8699 & 0.5919 & 0.4118 & 0.4004 & 0.0506 \\
Rest Countries  & 0.5742 & 0.0042 & 0.6151 & 0.8229 & 0.7954 & 0.3541 & 0.6917 & 0.0836 \\
Overall         & \textbf{0.3169} & \textbf{0.0039} & \textbf{0.3937} & \textbf{0.8418} & 0.6216 & \textbf{0.3919} & \textbf{0.4480} & \textbf{0.0618} \\
\hline
\end{tabular}
}
\label{data_fragmentation_results}
\end{table*}

\subsection{Empirical Results of Detection Module}
\subsubsection{Effectiveness of Detection Module} 
To verify the capability of Focal loss addressing the severe class imbalance issue, we compare the model performance with the Cross-Entropy and Focal Loss approaches in Table~\ref{tab:different_loss_performance}. Both of them deliver strong overall predictive accuracy, as reflected in their comparable AUROC scores across all country-level transaction networks. However, the distinction between the two becomes more apparent when considering the model’s ability to detect rare money-laundering activities. The Cross-Entropy Loss tends to favor the majority class (normal transactions), leading to lower false alarm rates (Type I errors) but a higher tendency to overlook true laundering cases (Type II errors). In contrast, the Focal Loss introduces a more balanced learning process by assigning greater weight to these hard-to-detect, high-risk transactions. This results in a better trade-off between precision and recall, reducing missed detections while maintaining strong overall accuracy. This improvement is particularly valuable, as identifying a higher proportion of true laundering events enhances compliance effectiveness and reduces the risk of undetected illicit flows even at the cost of slightly more alerts. Essentially, the use of Focal Loss helps the model more effectively address the data imbalance challenge inherent in illicit financial activity detection, improving both operational robustness and decision confidence.

\begin{table}[h!]
    \centering
    \caption{Performance comparison with and without graph cut loss}
    \resizebox{0.9\textwidth}{!}{
    \begin{tabular}{l|cccc|cccc}
        \hline
        \multirow{2}{*}{\textbf{Market}} & \multicolumn{4}{c|}{\textbf{No graph cut loss}} & \multicolumn{4}{c}{\textbf{Add graph cut loss}} \\
        \cline{2-9}
         & AUCROC & AUPRC & Type I Error & Type II Error & AUCROC & AUPRC & Type I Error & Type II Error \\
        \hline
        United States       & 0.9821 & 0.7098 & 0.0488 & 0.0731 & 0.9831 & 0.7004 & 0.0524 & 0.0660 \\
        Germany             & 0.9782 & 0.5936 & 0.0670 & 0.0603 & 0.9778 & 0.5880 & 0.0702 & 0.0512 \\
        France              & 0.9783 & 0.5688 & 0.0677 & 0.0526 & 0.9787 & 0.5515 & 0.0711 & 0.0480 \\
        Italy               & 0.9783 & 0.5726 & 0.0639 & 0.0681 & 0.9789 & 0.5730 & 0.0656 & 0.0626 \\
        Spain               & 0.9800 & 0.5196 & 0.0614 & 0.0413 & 0.9790 & 0.5048 & 0.0653 & 0.0452 \\
        China               & 0.9740 & 0.5026 & 0.0648 & 0.0472 & 0.9740 & 0.4955 & 0.0687 & 0.0358 \\
        Rest Countries      & 0.9865 & 0.7723 & 0.0393 & 0.0510 & 0.9866 & 0.7775 & 0.0390 & 0.0558 \\
        \hline
        Overall             & 0.9806 & 0.6307 & 0.0570 & 0.0605 & \textbf{0.9809} & \textbf{0.6121} & \textbf{0.0602} & \textbf{0.0553} \\
        \hline
    \end{tabular}}
    \label{tab:graph_cut_comparison}
\end{table}

We further investigate the impact of incorporating the graph cut loss, which is designed to encourage clearer structural separation between high-risk and low-risk transaction regions in the learned representation space. As shown in Table~\ref{tab:graph_cut_comparison}, adding the graph cut loss yields modest but consistent improvements in overall AUCROC, accompanied by a reduction in Type II errors, indicating enhanced sensitivity to illicit activities. While AUPRC slightly decreases for several country-level subgraphs, the observed decline is generally accompanied by lower missed-detection rates, suggesting a shift toward more effective recall of positive cases. This trade-off reflects a more risk-aware detection strategy, where the model prioritizes identifying subtle laundering patterns embedded in complex transaction graphs. Overall, the results demonstrate that the graph cut loss serves as a complementary regularization mechanism that improves global discrimination performance and robustness, particularly in reducing false negatives across heterogeneous banking networks.

We also examine the effectiveness of virtual super-node. The virtual super-node is designed to overcome the long-standing challenge of bank isolation, where strict regulatory and privacy rules prevent financial institutions from directly sharing cross-border transaction data. Rather than exchanging raw information, the super-node enables banks to share structured relationship insights, serving as a bridge that connects otherwise separated financial subgraphs. This design allows the model to recognize patterns of cross-border transactions that would typically remain hidden within institutional boundaries. As shown in Table~\ref{tab:supernode_comparison}, introducing the virtual super-node leads to consistent performance gains across all country-level banking networks. The improvement in key performance indicators such as AUCROC and AUPRC demonstrates that incorporating structured interconnections via virtual super-node helps strengthen risk detection accuracy while reducing false positives. Although a slight rise in Type II errors suggests a more conservative detection approach, the overall gains indicate that the super-node provides a practical, privacy-conscious mechanism for enhancing collaborative anti-money-laundering intelligence across jurisdictions.

\begin{table}[h!]
    \centering
    \caption{Performance comparison with and without super-node}
    \resizebox{0.9\textwidth}{!}{
    \begin{tabular}{l|cccc|cccc}
        \hline
        \multirow{2}{*}{\textbf{Market}} & \multicolumn{4}{c|}{\textbf{Without super-node}} & \multicolumn{4}{c}{\textbf{With super-node}} \\
        \cline{2-9}
         & AUCROC & AUPRC & Type I Error & Type II Error & AUCROC & AUPRC & Type I Error & Type II Error \\
        \hline
        United States       & 0.9813 & 0.6747 & 0.0525 & 0.0605 & 0.9831 & 0.7004 & 0.0524 & 0.0660 \\
        Germany             & 0.9767 & 0.5579 & 0.0728 & 0.0617 & 0.9778 & 0.5880 & 0.0702 & 0.0512 \\
        France              & 0.9772 & 0.5047 & 0.0730 & 0.0440 & 0.9787 & 0.5515 & 0.0711 & 0.0480 \\
        Italy               & 0.9775 & 0.5385 & 0.0676 & 0.0704 & 0.9789 & 0.5730 & 0.0656 & 0.0626 \\
        Spain               & 0.9774 & 0.4992 & 0.0685 & 0.0490 & 0.9790 & 0.5048 & 0.0653 & 0.0452 \\
        China               & 0.9720 & 0.4626 & 0.0720 & 0.0448 & 0.9740 & 0.4955 & 0.0687 & 0.0358 \\
        Rest Countries      & 0.9872 & 0.7932 & 0.0403 & 0.0609 & 0.9866 & 0.7775 & 0.0390 & 0.0558 \\
        \hline
        Overall             & 0.9794 & 0.5948 & 0.0617 & 0.0571 & \textbf{0.9809} & \textbf{0.6121} & \textbf{0.0602} & \textbf{0.0553} \\
        \hline
    \end{tabular}}
    \label{tab:supernode_comparison}
\end{table}

\begin{table}[h!]
\centering
\caption{Comparison of model performance for different types of loss}
\resizebox{0.9\textwidth}{!}{
\begin{tabular}{l|cccc|cccc}
\hline
\multirow{2}{*}{\textbf{Market}} & \multicolumn{4}{c|}{\textbf{Cross-Entropy Loss}} & \multicolumn{4}{c}{\textbf{Focal Loss}} \\
\cline{2-9}
 & AUCROC & AUPRC & Type I Error & Type II Error & AUCROC & AUPRC & Type I Error & Type II Error \\
\hline
United States  & 0.9821 & 0.6973 & 0.0042 & 0.4987 & 0.9831 & 0.7004 & 0.0524 & 0.0660 \\
Germany        & 0.9769 & 0.5709 & 0.0079 & 0.5774 & 0.9778 & 0.5880 & 0.0702 & 0.0512 \\
France         & 0.9782 & 0.5443 & 0.0072 & 0.6259 & 0.9787 & 0.5515 & 0.0711 & 0.0480 \\
Italy          & 0.9779 & 0.5579 & 0.0062 & 0.6334 & 0.9789 & 0.5730 & 0.0656 & 0.0626 \\
Spain          & 0.9786 & 0.5261 & 0.0049 & 0.6434 & 0.9790 & 0.5048 & 0.0653 & 0.0452 \\
China          & 0.9727 & 0.4735 & 0.0075 & 0.7370 & 0.9740 & 0.4955 & 0.0687 & 0.0358 \\
Rest Countries & 0.9870 & 0.7957 & 0.0058 & 0.5680 & 0.9866 & 0.7775 & 0.0390 & 0.0558 \\
\hline
Overall        & 0.9800 & \textbf{0.6171} & \textbf{0.0056} & 0.5828 
               & \textbf{0.9809} & 0.6121 & 0.0602 & \textbf{0.0553} \\
\hline
\end{tabular}
}
\label{tab:different_loss_performance}
\end{table}

\begin{figure}[htbp]
    \centering
    \centering
    \includegraphics[width=\linewidth]{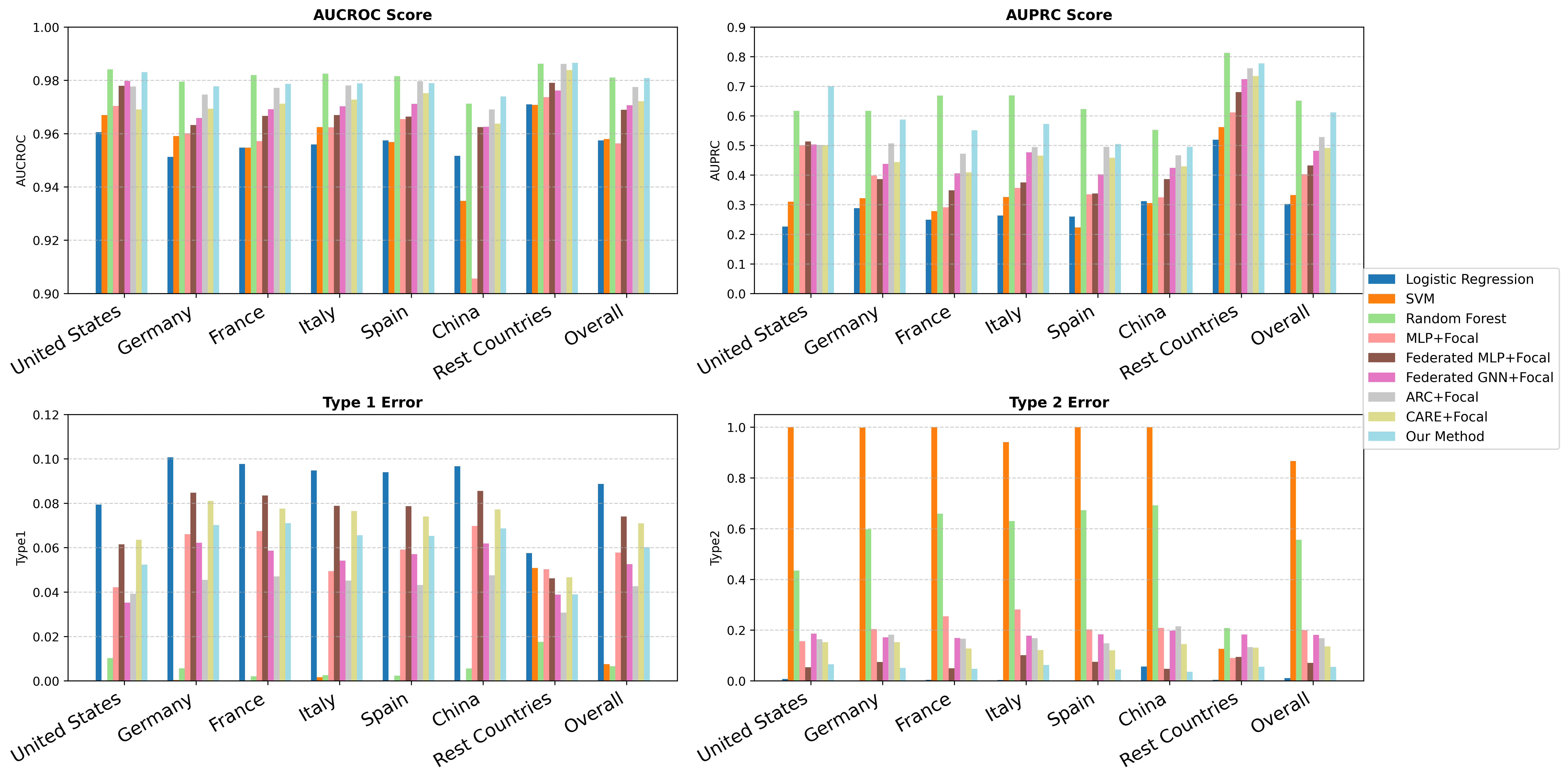}
    \caption{Comparison with traditional machine learning methods and deep learning approaches. All baseline methods excepted two federated methods are trained on every sub-dataset (i.e., each country) separately.}
    \label{fig:detection_model_results}
\end{figure}

 
To evaluate the effectiveness of our proposed method, we compare it against three traditional machine learning algorithms, (i.e., Logistic Regression, Support Vector Machine (SVM), and Random Forest), three deep learning baselines (i.e., a three-layer MLP with Focal Loss, Federated three-layer MLP with Focal Loss and Federated Graph Convolutional Network (GCN) with Focal Loss), and two recent and advanced anomaly detection methods published in top AI Conferences, i.e., CARE~\parencite{zheng2025cluster}, ARC~\parencite{liu2024arc}. Notice that both federated MLP and federated GNN are trained using a federated averaging protocol while ARC is a generalist graph anomaly detection model trained on all sub-datasets. The rest baseline models are trained independently on each sub-dataset (i.e., each country). Figure \ref{fig:detection_model_results} reports the performance of our method relative to these baselines. 

To assess the benefits of federated learning, we compare the standard MLP with the Federated MLP. As shown in Figure \ref{fig:detection_model_results}, the Federated MLP with focal loss achieves higher AUCROC and AUPRC, while also reducing both Type I and Type II errors. We attribute these performance gains to the shared global model parameters maintained under the federated paradigm. By aggregating model updates rather than raw data, federated learning enables the model to capture cross-country behavioral patterns that would be inaccessible in isolated training. This promotes better generalization, reduces overfitting to local data, and allows each client model to benefit from the implicit knowledge of all others without compromising data privacy. In effect, the federated setting provides the model with a richer representation space, boosting predictive stability and robustness across markets.

To understand the value of incorporating structural information from graph data, we further compare Federated GCN with Federated MLP. The results show that Federated GCN achieves higher AUPRC, lower Type I error, and competitive AUCROC, while incurring only a modest increase in Type II error. The performance advantage of Federated GCN stems from its ability to leverage relational structures that encode how accounts interact with one another. These graph-based dependencies capture behavioral patterns that are difficult to learn from tabular features alone. By integrating both local node attributes and cross-account connectivity, Federated GCN can more effectively distinguish normal account behavior from suspicious activities. This structural signal proves pivotal in further boosting detection accuracy.

Two advanced graph anomaly detection methods (i.e., CARE, ARC) achieves better performance than Federated GNN with respect to AUPRC score and lower Type II error due to their special design for identifying the anomalies against the normal nodes, such as maximizing the difference of representations between the anomalies and normal nodes. Overall, our method achieves stronger performance than most competing approaches in terms of AUCROC and AUPRC. Although Random Forest achieves a higher AUPRC than our model, it does so at the expense of substantially higher Type II error (54.5\%), compared with only 5.53\% for our method. This trade-off reflects a model that favors positive predictions but sacrifices recall on the negative class, making it less suitable for money-laundering detection.

\subsubsection{Transferability of Detection Module}

\begin{table}[t]
    \centering
    \caption{Cross-market generalization under leave-one-country-out evaluation}
    \resizebox{\textwidth}{!}{
    \begin{tabular}{l|cccc|cccc}
        \hline
        \multirow{2}{*}{\textbf{Market}} 
        & \multicolumn{4}{c|}{\textbf{All Countries}} 
        & \multicolumn{4}{c}{\textbf{Leave Germany out}} \\
        \cline{2-9}
         & AUCROC & AUPRC & Type I & Type II
         & AUCROC & AUPRC & Type I & Type II \\
        \hline
        United States  & 0.9830 & 0.7135 & 0.0513 & 0.0534 & 0.9844 & 0.7257 & 0.0483 & 0.0575 \\
        Germany        & 0.9791 & 0.6047 & 0.0720 & 0.0421 & 0.9792 & 0.5970 & 0.0682 & 0.0512 \\
        France         & 0.9792 & 0.5638 & 0.0704 & 0.0387 & 0.9801 & 0.5674 & 0.0676 & 0.0464 \\
        Italy          & 0.9784 & 0.5678 & 0.0670 & 0.0571 & 0.9792 & 0.5724 & 0.0641 & 0.0645 \\
        Spain          & 0.9809 & 0.5246 & 0.0657 & 0.0334 & 0.9814 & 0.5222 & 0.0629 & 0.0373 \\
        China          & 0.9759 & 0.5131 & 0.0713 & 0.0325 & 0.9755 & 0.5096 & 0.0656 & 0.0423 \\
        Rest Countries & 0.9865 & 0.7791 & 0.0416 & 0.0415 & 0.9864 & 0.7735 & 0.0398 & 0.0558 \\
        \hline
        Overall        & 0.9804 & 0.6095 & 0.0628 &\textbf{0.0427} & 0.9809 & 0.6097 & 0.0595 & 0.0507 \\
        \hline\hline

        \multirow{2}{*}{\textbf{Market}} 
        & \multicolumn{4}{c|}{\textbf{Leave United States out}} 
        & \multicolumn{4}{c}{\textbf{Leave China out}} \\
        \cline{2-9}
         & AUCROC & AUPRC & Type I & Type II
         & AUCROC & AUPRC & Type I & Type II \\
        \hline
        United States  & 0.9787 & 0.6523 & 0.0498 & 0.0927 & 0.9821 & 0.7139 & 0.0485 & 0.0796 \\
        Germany        & 0.9799 & 0.6093 & 0.0660 & 0.0398 & 0.9781 & 0.5869 & 0.0671 & 0.0603 \\
        France         & 0.9808 & 0.5976 & 0.0655 & 0.0588 & 0.9781 & 0.5702 & 0.0672 & 0.0604 \\
        Italy          & 0.9819 & 0.6199 & 0.0599 & 0.0626 & 0.9783 & 0.5809 & 0.0635 & 0.0645 \\
        Spain          & 0.9818 & 0.5529 & 0.0599 & 0.0452 & 0.9805 & 0.5230 & 0.0625 & 0.0373 \\
        China          & 0.9764 & 0.5266 & 0.0643 & 0.0488 & 0.9733 & 0.4951 & 0.0635 & 0.0520 \\
        Rest Countries & 0.9873 & 0.8001 & 0.0402 & 0.0399 & 0.9866 & 0.7829 & 0.0389 & 0.0590 \\
        \hline
        Overall        & \textbf{0.9810} & \textbf{0.6227} & \textbf{0.0579} & 0.0554 & 0.9796 & 0.6075 & 0.0588 & 0.0590 \\
        \hline
    \end{tabular}}
    \label{tab:leave_one_country_out}
\end{table}

Although the proposed system demonstrates strong in-sample performance, a central question is whether the learned financial intelligence generalizes to markets unseen during training. To examine this, we conduct a leave-one-country-out evaluation, in which the model is trained on all countries except one and then tested on the excluded market, mimicking deployment in a new or emerging jurisdiction. As shown in Table~\ref{tab:leave_one_country_out}, overall detection performance remains remarkably stable across all hold-out scenarios. Relative to the full-data benchmark (overall AUCROC = 0.9804, AUPRC = 0.6095), excluding Germany has virtually no impact on global performance (AUCROC = 0.9809), while excluding the United States slightly improves overall AUPRC to 0.6227, indicating that the learned representations are not overly dependent on any single market. At the market (country) level, performance degradation on the held-out country is modest and economically interpretable. When the United States is excluded from training, the U.S. subgraph exhibits an increase in Type II error from 5.3\% to 9.3\%, reflecting the absence of high-volume, highly connected transaction patterns during training. At the same time, detection performance in other countries improves, with AUPRC increasing to 0.6093 in Germany and 0.6199 in Italy. This pattern is consistent with the disproportionate scale of the U.S. market (approximately 855k transactions) relative to other countries (each below 250k), suggesting that training without the dominant market encourages the model to learn more balanced and transferable transaction structures. A similar but weaker effect is observed when China is held out. The China subgraph experiences only a moderate increase in Type II error (from 3.3\% to 5.2\%), while overall performance remains close to the baseline (AUCROC = 0.9796), indicating robust generalization even to structurally distinct financial ecosystems. Across all settings, detection performance in smaller and less-connected markets (“Rest Countries”) remains consistently high, with AUCROC exceeding 0.986. 

Overall, these results provide strong evidence that the proposed system captures structural laundering patterns that generalize across countries rather than country-specific transaction idiosyncrasies. While excluding a dominant financial hub predictably reduces performance in that specific market, global detection capability remains stable and, in some cases, improves elsewhere. This highlights the system’s potential to support scalable cross-border deployment and to bootstrap effective anti-money-laundering detection in new or data-scarce markets without extensive local historical data.

\begin{table}[t]
    \centering
    \caption{Communication frequency vs. performance in federated learning}
    \label{tab:communication_fed}
    \resizebox{\textwidth}{!}{
    \begin{tabular}{l|cccc|cccc}
        \hline
        \multirow{2}{*}{\textbf{Market}} 
        & \multicolumn{4}{c|}{\textbf{frequency=1 }} 
        & \multicolumn{4}{c}{\textbf{frequency=1/2 }} \\
        \cline{2-9}
         & AUCROC & AUPRC & Type I & Type II
         & AUCROC & AUPRC & Type I & Type II \\
        \hline
        United States  & 0.9790 & 0.6659 & 0.0230 & 0.1935 & 0.9795 & 0.6812 & 0.0392 & 0.1048 \\
        Germany        & 0.9766 & 0.5584 & 0.0296 & 0.2059 & 0.9787 & 0.5884 & 0.0524 & 0.1081 \\
        France         & 0.9773 & 0.5301 & 0.0298 & 0.1734 & 0.9785 & 0.5391 & 0.0545 & 0.0805 \\
        Italy          & 0.9787 & 0.5691 & 0.0253 & 0.2284 & 0.9790 & 0.5684 & 0.0484 & 0.1105 \\
        Spain          & 0.9784 & 0.5073 & 0.0254 & 0.2083 & 0.9794 & 0.5036 & 0.0504 & 0.0884 \\
        China          & 0.9726 & 0.4735 & 0.0304 & 0.2520 & 0.9727 & 0.4968 & 0.0526 & 0.1122 \\
        Rest Countries & 0.9858 & 0.7883 & 0.0229 & 0.1866 & 0.9864 & 0.7765 & 0.0328 & 0.0909 \\
        \hline
        Overall        & 0.9783 & 0.5847 & 0.0266 & 0.2069 & 0.9792 & 0.5934 & 0.0472 & 0.0993 \\
        \hline
    \end{tabular}}

    \resizebox{\textwidth}{!}{
    \begin{tabular}{l|cccc|cccc}
        \hline
        \multirow{2}{*}{\textbf{Market}} 
        & \multicolumn{4}{c|}{\textbf{frequency=1/4}} 
        & \multicolumn{4}{c}{\textbf{frequency=1/8}} \\
        \cline{2-9}
         & AUCROC & AUPRC & Type I & Type II
         & AUCROC & AUPRC & Type I & Type II \\
        \hline
        United States  & 0.9830 & 0.7135 & 0.0513 & 0.0534 & 0.9824 & 0.7195 & 0.0494 & 0.0635 \\
        Germany        & 0.9791 & 0.6047 & 0.0720 & 0.0421 & 0.9792 & 0.5944 & 0.0672 & 0.0501 \\
        France         & 0.9792 & 0.5638 & 0.0704 & 0.0387 & 0.9807 & 0.5849 & 0.0670 & 0.0480 \\
        Italy          & 0.9784 & 0.5678 & 0.0670 & 0.0571 & 0.9792 & 0.5899 & 0.0635 & 0.0700 \\
        Spain          & 0.9809 & 0.5246 & 0.0657 & 0.0334 & 0.9809 & 0.5277 & 0.0625 & 0.0452 \\
        China          & 0.9759 & 0.5131 & 0.0713 & 0.0325 & 0.9757 & 0.5151 & 0.0657 & 0.0488 \\
        Rest Countries & 0.9865 & 0.7791 & 0.0416 & 0.0415 & 0.9873 & 0.7947 & 0.0394 & 0.0526 \\
        \hline
        Overall        & 0.9804 & 0.6095 & 0.0628 & 0.0427 & \textbf{0.9808} & \textbf{0.6180} & 0.0593 & 0.0540 \\
        \hline
    \end{tabular}}
\end{table}

\subsubsection{Communication Frequency of Detection Module}
To quantify the efficiency–accuracy trade-off in federated learning, we vary the communication frequency while holding the total training budget fixed. We fixed the training epochs to 4000 and vary the communication frequency from 1 to 1/2, 1/4, and 1/8 (communicate once every 8 epochs). Notice that the lower communication frequency mean the lower local updates before aggregation. Table~\ref{tab:communication_fed} reports a clear and economically meaningful pattern. When communication occurs every epoch (frequency=1), the system exhibits severe underperformance, with an overall AUPRC of 0.5847 and a Type II error rate of 20.7\%, indicating that overly frequent synchronization prevents institutions from adequately learning localized transaction structures. Halving the communication interval (frequency=1/2) improves AUPRC by nearly 0.9\% (from 0.5847 to 0.5934) and cuts the Type II error rate by more than half (from 20.7\% to 9.9\%), suggesting substantial gains from allowing richer local updates before aggregation. Performance improves when we further halve the communication frequencies. At frequency=1/4, the system achieves an overall AUPRC of 0.6095 with a 2.5\% improvement relative to frequency=1, while greatly reducing the Type II error from 20.69\% to 4.3\%. This improvement is consistent across all major countries/markets for example, in the United States subgraph, the missed-detection rate drops from 19.4\% at frequency=1 to 5.3\% at frequency=1/4, while AUPRC increases from 0.666 to 0.714. Importantly, these gains are obtained with a fourfold reduction in communication rounds, highlighting a strong efficiency payoff. Moving to an even lower communication regime (frequency=1/8) yields diminishing but still positive returns. While overall AUPRC increases slightly to 0.6180, Type II error rises modestly to 5.4\%, indicating a trade-off between recall and communication sparsity. This pattern suggests that excessively infrequent synchronization can introduce model drift across institutions, particularly in heterogeneous markets such as China and Italy. Nevertheless, even at frequency=1/8, performance remains substantially superior to high-frequency communication, with a 74\% reduction in missed detections relative to frequency=1.

These results demonstrate that federated efficiency in cross-border anti-money-laundering systems is non-monotonic in communication frequency. Moderate synchronization intervals dominate both extremes by simultaneously improving detection accuracy and reducing communication costs. This implies that regulators and financial institutions can achieve higher detection effectiveness while substantially lowering coordination and compliance overhead by avoiding overly frequent model aggregation. The findings provide practical guidance for designing scalable, privacy-preserving financial intelligence systems in globally distributed settings.

\subsubsection{Granularity of super-node sharing}
\label{supernode_high_order_results}1
In this section, we aim to investigate the impact of varying the granularity of supernode feature sharing by comparing two aggregation strategies: using only the mean of connected node features (first-order statistical moment) versus using both the mean and variance (first-order and second-order statistical moment). Table~\ref{tab:supernode_aggregation} shows that the differences in overall performance between the two approaches are minimal. While the mean + variance aggregation slightly reduces Type II errors in some markets (e.g., Germany, Spain, China), it also increases Type I errors in others (e.g., United States, France). Across all markets, AUCROC and AUPRC remain largely stable, indicating that incorporating higher-order moments beyond the first has diminishing returns. These results suggest that capturing the first moment (mean) of supernode features is sufficient for effective detection, and additional statistics such as variance provide only marginal benefits relative to the added complexity of computation and communication in a cross-bank setting. Therefore, a simple mean-based aggregation offers a favorable trade-off between performance and efficiency for supernode sharing in heterogeneous financial networks.
\begin{table}[htbp]
\centering
\caption{Comparison of supernode feature aggregation strategies: Mean aggregation vs. Mean + Variance aggregation. Performance is reported for each market in terms of AUC, Average Precision (AP), Type I error, and Type II error.}
\resizebox{\textwidth}{!}{
\begin{tabular}{lcccc|cccc}
\hline
\multirow{2}{*}{\textbf{Market}} & \multicolumn{4}{c|}{\textbf{Mean Aggregation}} & \multicolumn{4}{c}{\textbf{Mean + Variance Aggregation}} \\
\cline{2-9}
 & \textbf{AUCROC} & \textbf{AUPRC} & \textbf{Type I} & \textbf{Type II} & \textbf{AUCROC} & \textbf{AUPRC} & \textbf{Type I} & \textbf{Type II} \\
\hline
United States & 0.9824 & 0.7195 & 0.0494 & 0.0635 & 0.9825 & 0.6951 & 0.0540 & 0.0625 \\
Germany       & 0.9792 & 0.5944 & 0.0672 & 0.0501 & 0.9789 & 0.5960 & 0.0719 & 0.0375 \\
France        & 0.9807 & 0.5849 & 0.0670 & 0.0480 & 0.9802 & 0.5850 & 0.0717 & 0.0433 \\
Italy         & 0.9792 & 0.5899 & 0.0635 & 0.0700 & 0.9788 & 0.5921 & 0.0658 & 0.0497 \\
Spain         & 0.9809 & 0.5277 & 0.0625 & 0.0452 & 0.9807 & 0.5372 & 0.0669 & 0.0334 \\
China         & 0.9757 & 0.5151 & 0.0657 & 0.0488 & 0.9751 & 0.5064 & 0.0697 & 0.0325 \\
Rest Countries & 0.9873 & 0.7947 & 0.0394 & 0.0526 & 0.9861 & 0.8003 & 0.0412 & 0.0351 \\
\hline
\textbf{Overall} & \textbf{0.9808} & \textbf{0.6180} & \textbf{0.0593} & 0.0540 & 0.9803 & 0.6160 & 0.0630 & \textbf{0.0420} \\
\hline
\end{tabular}}
\label{tab:supernode_aggregation}
\end{table}

\begin{table}[h!]
\centering
\caption{Statistics of transaction by country}
\resizebox{0.9\textwidth}{!}{
\begin{tabular}{lcccc}
\hline
\textbf{Country} & \textbf{\#Accounts} & \textbf{\#Total Transactions} & \textbf{\#Illicit Transactions} & \textbf{Ratio of Illicity}\\
\hline
United States   & 396,370 & 4,175,722   & 3,489     & 0.084\%\\
Germany         & 197,543 & 1,599,846   & 1,542     & 0.096\%\\
France          & 168,372 & 1,284,859   & 1,145     & 0.089\%\\
Italy           & 149,425 & 1,082,159   & 977       & 0.090\%\\
Spain           & 137,726 & 989,019     & 911       & 0.092\%\\
China           & 130,099 & 973,821     & 1,050     & 0.011\%\\
Rest Countries  & 278,480 & 1,700,615   & 1,284     & 0.076\%\\
\hline
Overall         & 208,288 & 1,686,577   & 1,485     & 0.077\%\\
\hline
\end{tabular}}
\label{tab:country_graph_stats_full}
\end{table}

\subsubsection{Scalability to Enterprise-Level Transaction Volumes}
\label{scalability_analysis}
We evaluate the detection module on the full IBM AML dataset comprising over 5 million transactions (Table \ref{tab:country_graph_stats_full}). This setting presents substantially greater challenges than the primary experiments: the illicit transaction ratio drops to approximately 0.08\%—nearly five times more imbalanced than the 1.4 million transaction subset—and the labeled training data is limited to 20\% of transactions, reflecting realistic constraints on investigator capacity for manual labeling.

Table~\ref{scalability_analysis} reports the results. The baseline GNN with focal loss exhibits severe majority-class bias, achieving low Type I error (0.33\%) but unacceptably high Type II error (62.96\%), indicating that the model learns to classify nearly all transactions as legitimate. Our method, combining federated graph learning with structure-preserving downsampling (20:1 ratio), substantially improves detection sensitivity: Type II error decreases from 62.96\% to 23.13\%—a 63\% reduction in missed illicit transactions. This improvement comes at the cost of moderately higher Type I error (6.79\%), reflecting the fundamental precision-recall trade-off under extreme imbalance.

Several observations merit discussion. First, the performance gap between the 1.4M and 5M transaction settings highlights the compounding difficulty of extreme class imbalance: at 0.08\% positive rate, even focal loss struggles to prevent majority-class dominance. Second, AUPRC scores (0.33 overall) are lower than in the smaller dataset, consistent with the information-theoretic limits of learning from very few positive examples per training batch. Third, cross-country heterogeneity persists, with "Rest Countries" showing the weakest performance (Type II = 29.02\%), potentially reflecting greater distributional shift in smaller, less-regulated markets.

These results suggest two directions for improvement that we leave to future work: (1) increasing the proportion of labeled training data, which is constrained by investigator capacity in practice, and (2) developing curriculum learning strategies that progressively increase task difficulty as the model learns basic laundering patterns. Despite these limitations, the substantial reduction in Type II error demonstrates that our framework provides meaningful detection improvements even at enterprise scale, where baseline methods effectively fail.

\begin{table*}[ht!]
\centering
\caption{Performance Comparison Between Our Method and the Baseline Method (GNN with Focal Loss)}
\small
\resizebox{0.95\textwidth}{!}{
\begin{tabular}{l|cccc|cccc}
\hline
\multirow{2}{*}{\textbf{Markets}} 
& \multicolumn{4}{c|}{\textbf{Our Method}} 
& \multicolumn{4}{c}{\textbf{GNN + Focal Loss}} \\
\cline{2-9}
& AUCROC & AUPRC & Type I & Type II 
& AUCROC & AUPRC & Type I & Type II \\
\hline
United States & 0.9667 & 0.4407 & 0.0554 & 0.2164 & 0.9693 & 0.3867 & 0.0036 & 0.5550 \\
Germany       & 0.9619 & 0.3941 & 0.0688 & 0.2156 & 0.9684 & 0.3728 & 0.0034 & 0.5932 \\
France        & 0.9608 & 0.3438 & 0.0694 & 0.2074 & 0.9664 & 0.3593 & 0.0033 & 0.5841 \\
Italy         & 0.9510 & 0.2937 & 0.0776 & 0.2570 & 0.9635 & 0.2655 & 0.0027 & 0.7110 \\
Spain         & 0.9556 & 0.2887 & 0.0804 & 0.2373 & 0.9644 & 0.3565 & 0.0028 & 0.6296 \\
China         & 0.9657 & 0.3129 & 0.0649 & 0.1952 & 0.9672 & 0.3285 & 0.0043 & 0.6155 \\
Rest Countries     & 0.9523 & 0.2231 & 0.0586 & 0.2902 & 0.9549 & 0.1842 & 0.0030 & 0.7186 \\
\hline
Overall             & 0.9591 & \textbf{0.3281} & 0.0679 & \textbf{0.2313} 
                    & \textbf{0.9649} & 0.3219 & \textbf{0.0033} & 0.6296 \\
\hline
\end{tabular}
}
\label{scalability_analysis}
\end{table*}

\subsection{Empirical Results of Cross-bank PPR}
\subsubsection{Interpretation via Identified Group Patterns}
\label{PPR_interpretation}

\begin{figure}[htbp]
    \centering
    \vspace{-3mm}
    \begin{subfigure}{0.5\linewidth}
        \centering
        \includegraphics[width=\linewidth]{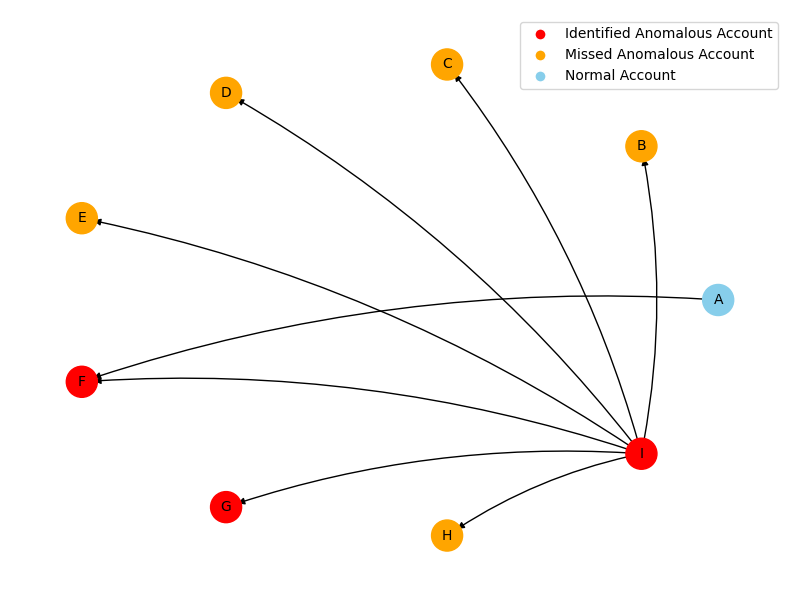}
        \caption{Pattern with Fan-out Structure}
    \end{subfigure}%
    \hfill
    \begin{subfigure}{0.5\linewidth}
        \centering
        \includegraphics[width=\linewidth]{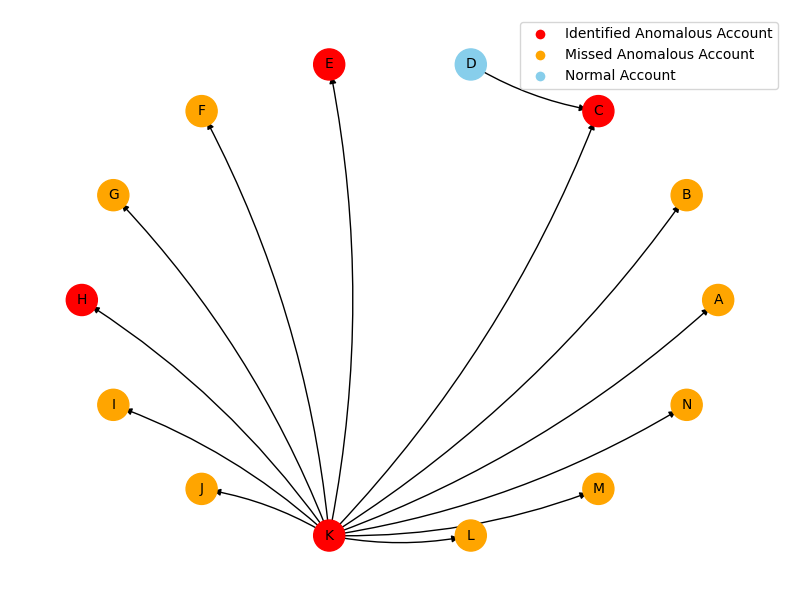}
        \caption{Pattern with Another Fan-out Structure}
    \end{subfigure}%
    \vspace{0.5em} 
    
    \vspace{-3mm}
    \begin{subfigure}{0.5\linewidth}
        \centering
        \includegraphics[width=\linewidth]{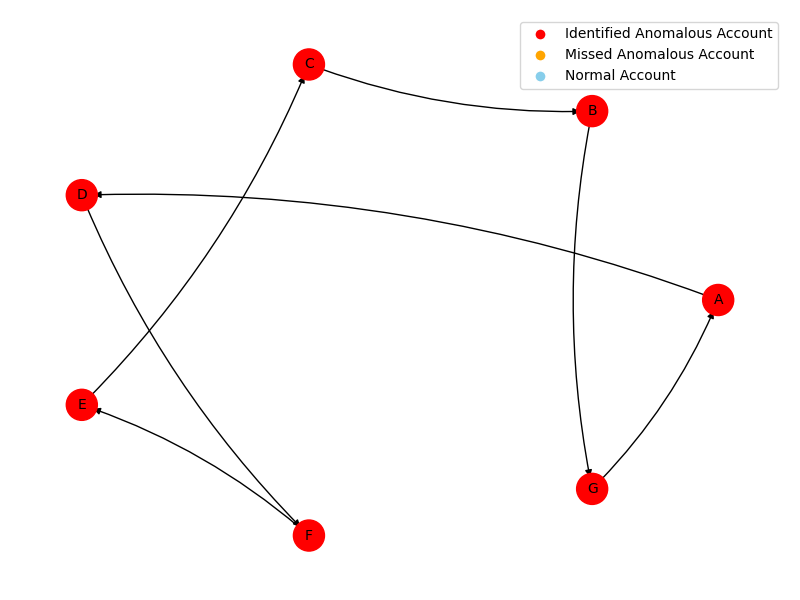}
        \caption{Pattern with Loop Structure}
    \end{subfigure}%
    \hfill
    \begin{subfigure}{0.5\linewidth}
        \centering
        \includegraphics[width=\linewidth]{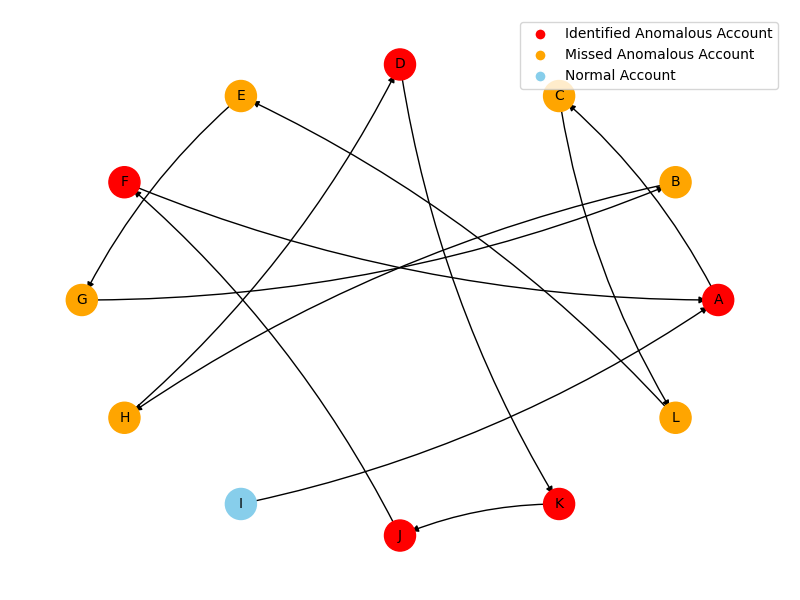}
        \caption{Another Pattern with Loop Structure}
    \end{subfigure}
    \vspace{0.5em} 
    
    \vspace{-3mm}
    \begin{subfigure}{0.5\linewidth}
        \centering
        \includegraphics[width=\linewidth]{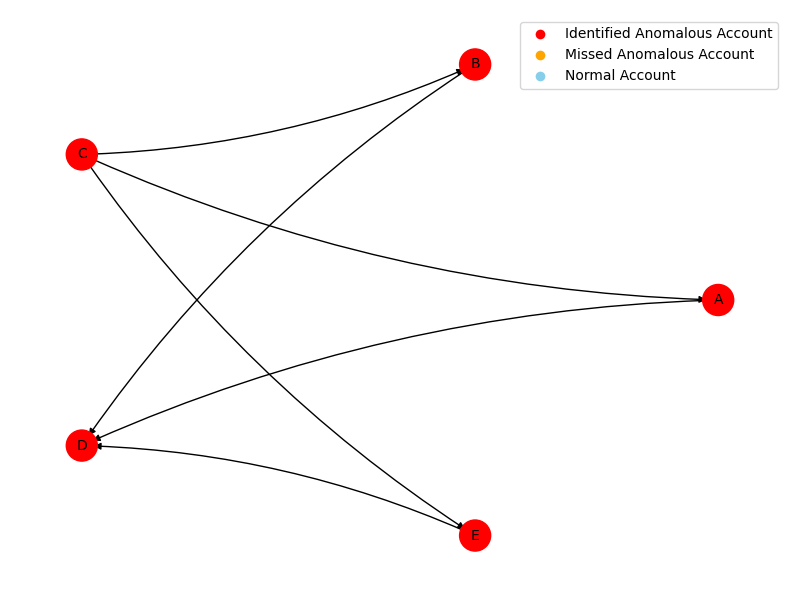}
        \caption{Pattern with Gather-scatter Structure}
    \end{subfigure}%
    \hfill
    \begin{subfigure}{0.5\linewidth}
        \centering
        \includegraphics[width=\linewidth]{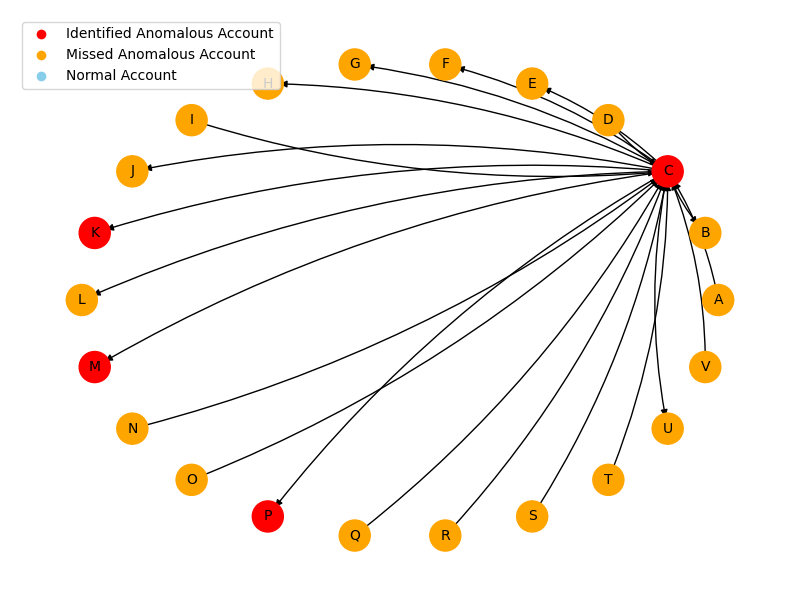}
        \caption{Pattern with Gather and Fan-out Structure}
    \end{subfigure}
    \caption{Identified Group Money-Laundering Patterns}
    \label{fig:money_laundering_demo}
\end{figure}

We examine the interpretability of group money-laundering patterns identified by the cross-bank Personalized PageRank method. Figure~\ref{fig:money_laundering_demo} illustrates representative patterns detected from the merged transaction clusters. We focus on four distinct types of money-laundering patterns: fan-out, loop, gather-scatter, and hybrid structures combining gather and fan-out mechanisms.

Figures~\ref{fig:money_laundering_demo} (a) and (b) depict fan-out structures, in which funds from a single account are distributed to multiple downstream accounts, reflecting the dispersal behavior commonly used to obscure illicit money flows. In Figure~\ref{fig:money_laundering_demo} (a), funds originate from account I and are distributed to accounts B, C, D, E, F, G, and H. Account A is a normal account incorrectly flagged as suspicious, highlighting the need for further expert review to refine labels. Importantly, our method correctly identifies the primary control account I, which orchestrates the laundering activity. Some accounts, such as B, C, D, E and H, are missed, which we attribute to the “bank isolation” issue, where complete group structures are fragmented across institutions, complicating detection. Similarly, in Figure~\ref{fig:money_laundering_demo} (b), our method correctly identifies account K as the principal node managing the funds.

Figures~\ref{fig:money_laundering_demo} (c) and (d) illustrate loop structures, where funds circulate among accounts before returning to the origin, representing more sophisticated laundering strategies. In Figure~\ref{fig:money_laundering_demo} (c), our method successfully identifies the full loop: funds flow from account A to D, then through accounts D–G, before returning to A. In Figure~\ref{fig:money_laundering_demo} (d), laundering begins at account B in China, with funds moving across accounts in Europe and the United States along multiple chains, including $B \rightarrow H \rightarrow D$ (Europe-based chain), $D \rightarrow K \rightarrow J \rightarrow F \rightarrow A$ (United States-based chain) and  $A \rightarrow C \rightarrow L$ (Europe-based chain). Our method captures the U.S.-based chain but misses the cross-border components, illustrating the challenges posed by fragmented international transactions. Despite partial detection, identifying even a segment of the loop provides valuable signals of illicit activity.

Figures~\ref{fig:money_laundering_demo} (e) and (f) demonstrate gather-scatter mechanisms and hybrid structure combining gather-scatter and fan-out mechanisms, where multiple accounts funnel funds into a central account before redistribution. In Figure~\ref{fig:money_laundering_demo} (e), account C distributes funds to accounts A, B, and E, and account D gather all money flows, completing the laundering cycle, which our method successfully detects. In Figure~\ref{fig:money_laundering_demo} (f), a more complex behavior emerges: account C gathers funds from multiple sources and redistributes them to multiple recipients. While only four of the 22 suspicious accounts are identified, account C is correctly flagged as the pivotal node orchestrating the activity.

These results provide a clear interpretation of group money-laundering behaviors by revealing the underlying relational structures within transaction networks. By identifying patterns such as fan-out, loop, gather-scatter, and hybrid mechanisms, the method highlights how funds flow between accounts and pinpoints the pivotal nodes that orchestrate these operations. Even when full laundering cycles are fragmented across banks or countries, the detection of key control accounts and partial transaction chains offers interpretable insights into the mechanisms of illicit activity. This structural perspective allows analysts to understand not only which accounts are suspicious, but also how money moves through the network, supporting more informed decision-making, targeted investigations, and effective regulatory interventions.

\begin{figure}[htbp]
    \centering
    \vspace{-3mm}
    \begin{subfigure}{0.5\linewidth}
        \centering
        \includegraphics[width=\linewidth]{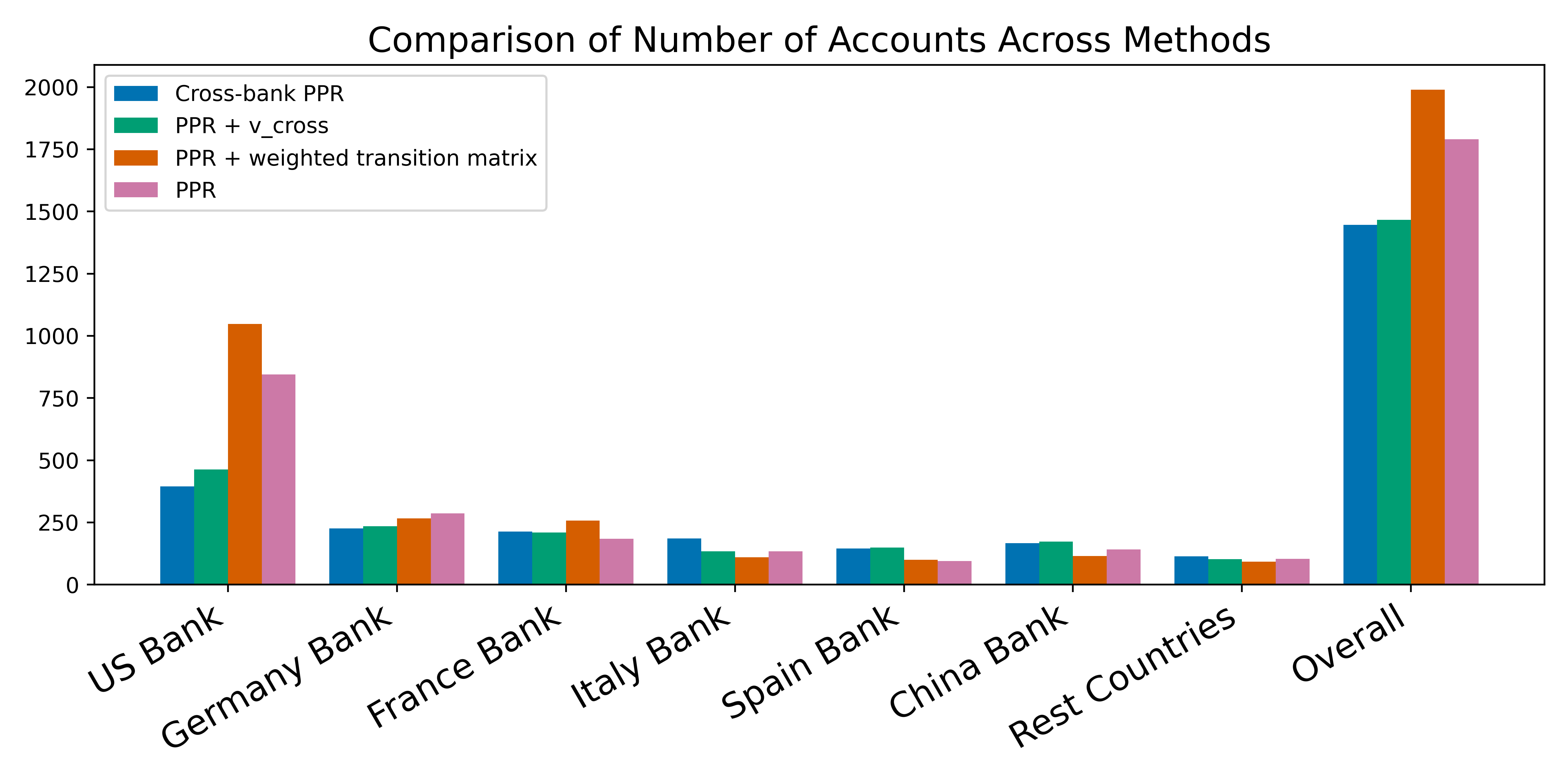}
        \begin{center}
            \caption{Total number of accounts included \\
            across all clusters}
        \end{center}
    \end{subfigure}%
    \hfill
    \begin{subfigure}{0.5\linewidth}
        \centering
        \includegraphics[width=\linewidth]{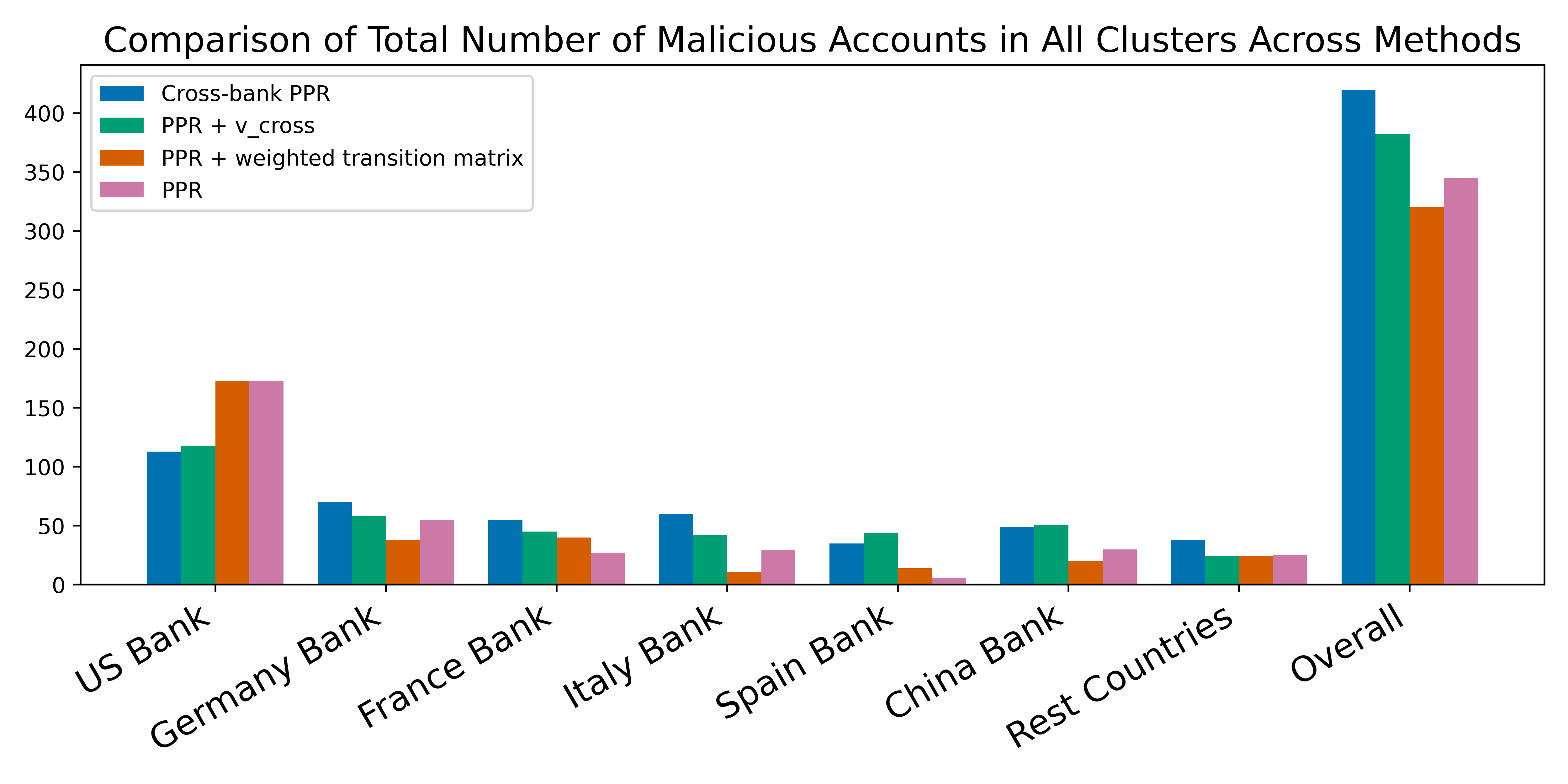}
        \begin{center}
            \caption{Number of malicious accounts  \\
            in all clusters}
        \end{center}
    \end{subfigure}%
    \vspace{0.5em} 
    
    \begin{subfigure}{0.5\linewidth}
        \centering
        \includegraphics[width=\linewidth]{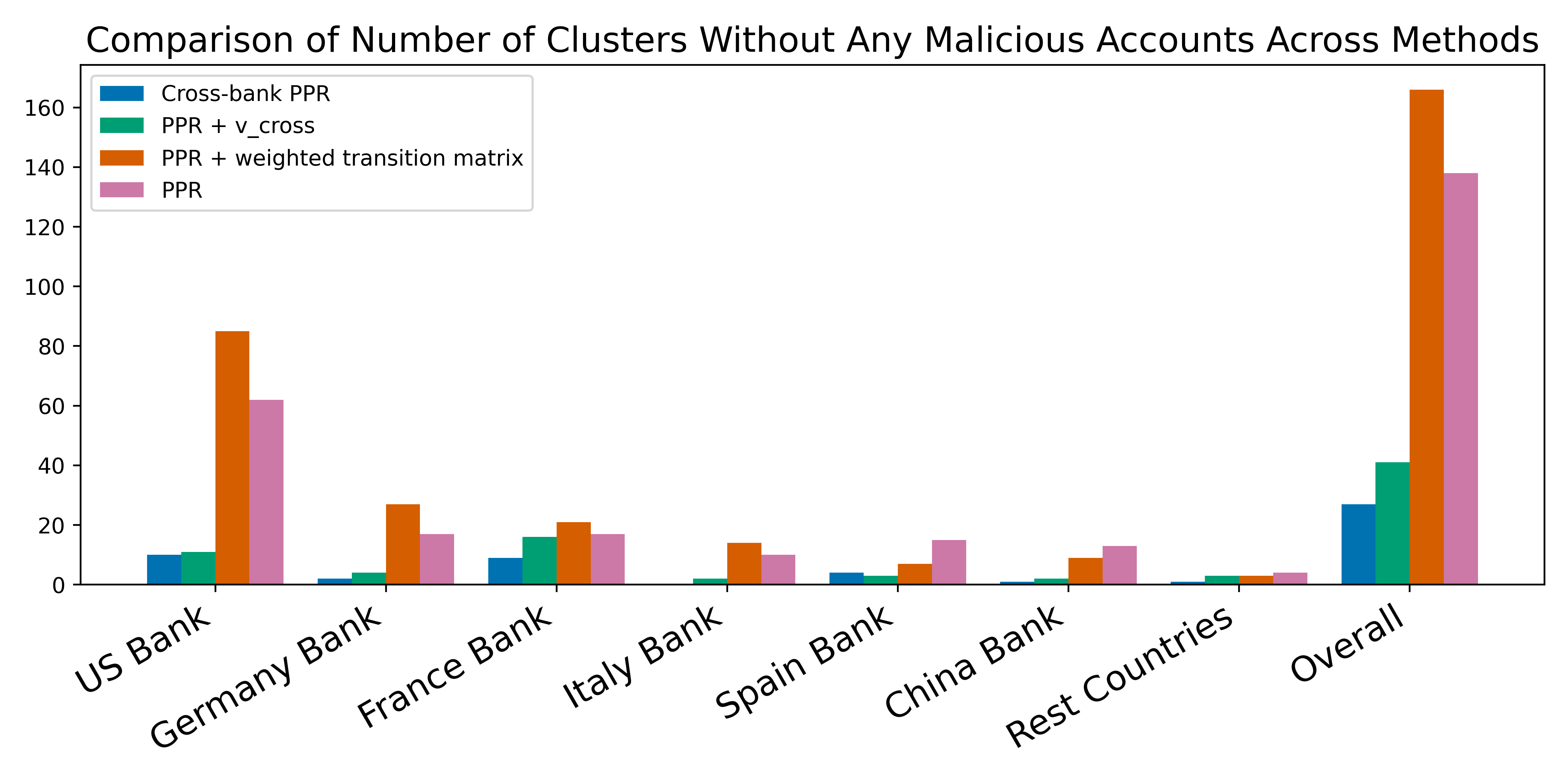}
        \begin{center}
            \caption{Number of clusters containing no \\
            malicious accounts}
        \end{center}
    \end{subfigure}%
    \hfill
    \begin{subfigure}{0.5\linewidth}
        \centering
        \includegraphics[width=\linewidth]{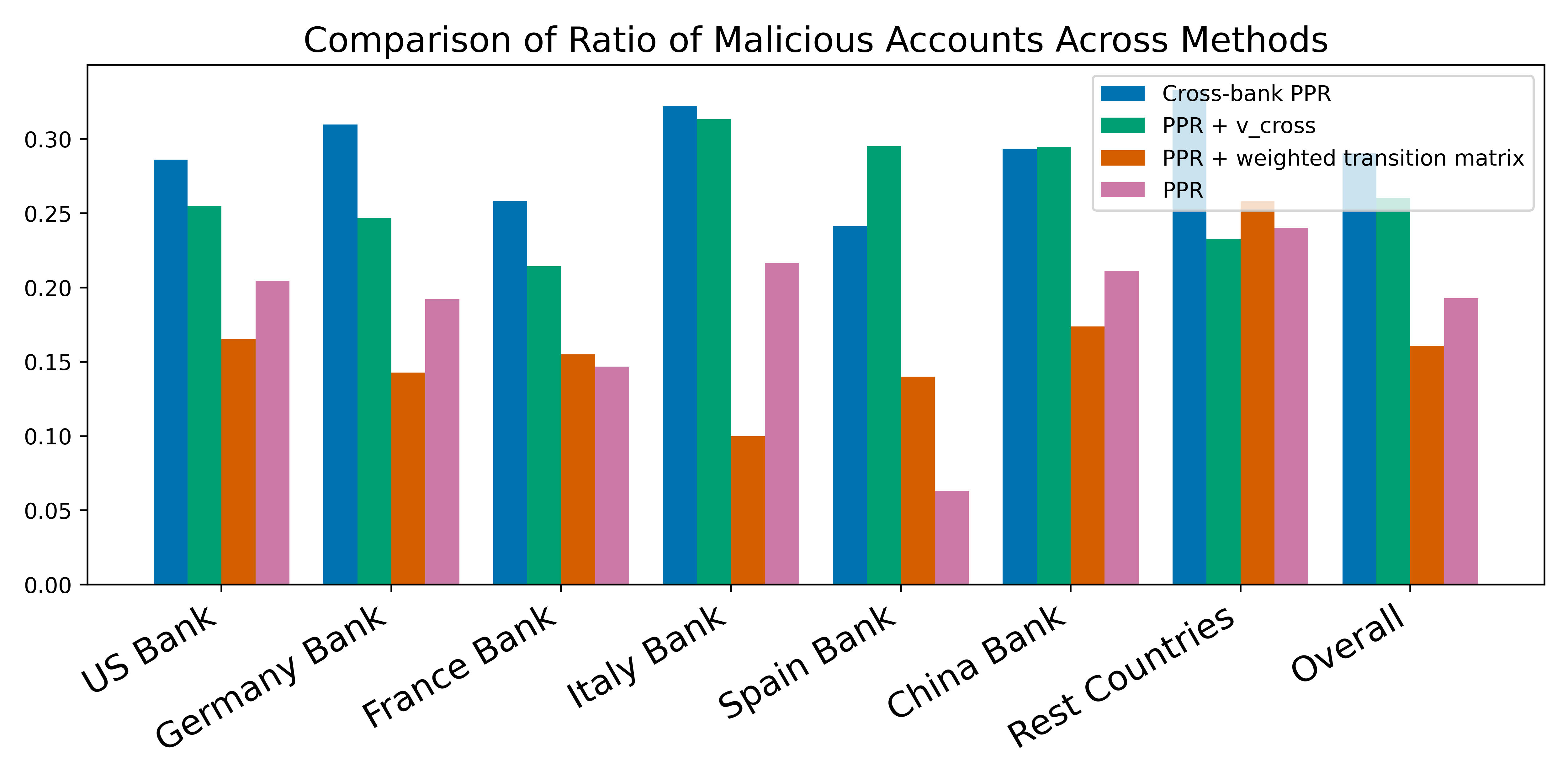}
        \begin{center}
            \caption{Overall proportion of malicious \\
            accounts}
        \end{center}
    \end{subfigure}
    \vspace{0.5em} 
    
    \caption{Comparison between Cross-bank PPR and PPR}
    \label{fig:PPR_evaluation}
\end{figure}

\subsubsection{Effectiveness of Cross-bank PPR}
In this subsection, we examine the hypothesis that Cross-bank PPR improves malicious-account detection and reduces false positives, compared with PPR. To verify this hypothesis, we compare cross-bank PPR with vanilla PPR. Figure \ref{fig:PPR_evaluation} reports summary statistics for clusters generated by Cross-bank PPR, its two ablated variants, and standard PPR. The statistics include: the total number of accounts included across all clusters, the number of identified malicious accounts, the number of clusters containing no malicious accounts, and the overall proportion of malicious accounts (malicious accounts divided by total accounts).
Combining the results from Figures \ref{fig:PPR_evaluation} (a) and (b), we observe that Cross-bank PPR identifies fewer suspicious accounts overall than the other methods, yet it discovers substantially more true malicious accounts. This pattern indicates that Cross-bank PPR is simultaneously more conservative in flagging suspicious neighbors but more accurate in finding genuinely malicious ones. This cautiousness is further supported by Figure \ref{fig:PPR_evaluation} (c), which shows that Cross-bank PPR produces the fewest clusters with no malicious accounts, indicating fewer false positives and more targeted clustering. Finally, Figure \ref{fig:PPR_evaluation} (d) demonstrates that Cross-bank PPR achieves the highest malicious-account identification rate across most countries, as well as the highest overall performance.

Next, we evaluate the effectiveness of two key components (i.e., introducing the cross-bank component $v_{cross}$ and adding the edge prediction score $\hat{y}_{ij}$ into the transition matrix) of cross-bank PPR compared with vanilla PPR. When the cross-bank suspicious signal ($v_{cross}$) is incorporated into PPR, we improve the capability of this new variant (`PPR + v\_cross' in Figure \ref{fig:PPR_evaluation}) detecting malicious accounts and reduce the false positives, compared with PPR. This can be verified by the fewer suspicious accounts overall than PPR, more true malicious accounts included across all clusters, and less clusters with no malicious accounts, shown in \ref{fig:PPR_evaluation} (a), (b) and (c). Different from the effectiveness of cross-bank suspicious signal ($v_{cross}$), when we only incorporate the edge malicious score ($\hat{y}_{ij}$) to update the weighted transition matrix into PPR (denoted as `PPR + weighted transition matrix' in Figure \ref{fig:PPR_evaluation}), we immediately observe the performance drops across all four statistics in Figure \ref{fig:PPR_evaluation}. 
Notice that Theorem~\ref{thm:detectability} reveals a detectability condition for identifying a money-laundering group using Personalized PageRank (PPR) and whether the money-laundering group is well-separated heavily rely on the quality of the weighted transition matrix. Therefore, we conjecture that the edge malicious score by a single bank is not reliable and incorporating the unreliable score into the weighted transition matrix only leads to the worse results than PPR. Crucially, when $\hat{y}_{ij}$ is combined with $v_{cross}$ in Cross-bank PPR, its previously unreliable signal is corrected, yielding superior performance. This demonstrates that the two components are complementary and jointly critical for achieving high precision and effectiveness in malicious-account detection.

Together, these results highlight that both components ($v_{cross}$ and $\hat{y}_{ij}$) play a critical role in improving the precision and effectiveness of Cross-bank PPR in identifying malicious financial activity. Unlike standard PPR, Cross-bank PPR identifies fewer accounts overall but captures substantially more true malicious accounts, demonstrating higher precision in detecting risky entities. This cautious yet effective behavior is a key advantage in financial monitoring.

Identifying module and decision-making module play complementary roles: Cross-bank PPR produces structured, interpretable group signals under fragmentation, while the hierarchical decision model translates these signals into economically optimal interventions. Importantly, the decision layer remains robust even when group structures are partially observed, reflecting realistic cross-border compliance conditions. 

\subsubsection{Effectiveness of Label Propagation}
\label{effectivenss_label_propagation}
\begin{table}[t]
\centering
\caption{Detection performance with and without label propagation.}
\label{tab:label_propagation}
\resizebox{\linewidth}{!}{
\begin{tabular}{l|cccc|cccc}
\hline
 & \multicolumn{4}{c|}{\textbf{Without Label Propagation}} 
 & \multicolumn{4}{c}{\textbf{With Label Propagation}} \\
\hline
\textbf{Market} 
 & AUCROC & AUPRC & Type I & Type II 
 & AUCROC & AUPRC & Type I & Type II \\
\hline
United States        & 0.9819 & 0.6878 & 0.0459 & 0.0786 & 0.9820 & 0.6903 & 0.0534 & 0.0655 \\
Germany   & 0.9790 & 0.6007 & 0.0702 & 0.0455 & 0.9790 & 0.6025 & 0.0695 & 0.0478 \\
France    & 0.9788 & 0.5595 & 0.0686 & 0.0557 & 0.9788 & 0.5622 & 0.0706 & 0.0526 \\
Italy     & 0.9782 & 0.5867 & 0.0634 & 0.0755 & 0.9784 & 0.5991 & 0.0650 & 0.0645 \\
Spain     & 0.9802 & 0.5356 & 0.0603 & 0.0530 & 0.9802 & 0.5360 & 0.0651 & 0.0491 \\
China     & 0.9742 & 0.4949 & 0.0601 & 0.0504 & 0.9745 & 0.5091 & 0.0677 & 0.0358 \\
Rest Countries & 0.9861 & 0.8004 & 0.0398 & 0.0526 & 0.9862 & 0.8038 & 0.0410 & 0.0415 \\
\hline
\textbf{Overall} 
               & 0.9798 & 0.6094 & 0.0583 & 0.0588 
               & \textbf{0.9799} & \textbf{0.6147} & 0.0618 & \textbf{0.0510} \\
\hline
\end{tabular}
}
\end{table}

In this section, we validate the hypothesis that the discovered group-wise laundering patterns can be further exploited to improve detection performance through label propagation, as summarized in Table~\ref{tab:label_propagation}. Overall, incorporating propagated group-level signals consistently improves detection accuracy across all evaluated transaction subgraphs. In particular, we observe systematic gains in AUPRC for every bank-level subgraph, with especially notable improvements in more challenging settings such as China Bank and Italy Bank. These results indicate that label propagation is effective at recovering suspicious nodes and transactions that initially receive low confidence scores but are structurally connected to high-confidence laundering groups. Across all subgraphs, the AUCROC remains stable or slightly improves after label propagation, indicating that the refinement step does not degrade the model’s global discrimination ability. While we observe modest increases in Type I error rates for some graphs (e.g., US Bank and Spain Bank), these increases are accompanied by notable reductions in Type II errors in several cases, most prominently for China Bank and the aggregated Rest Countries graph. This trade-off reflects the intended behavior of label propagation: prioritizing recall of coordinated laundering activity by propagating malicious signals within densely connected groups, even at the cost of a small increase in false positives. These results confirm that group-wise laundering patterns captured by cross-bank PPR provide complementary information beyond edge-level predictions. By integrating this information through label propagation, the detection modu moves beyond isolated transaction analysis and leverages collective behavioral evidence, leading to more robust identification of coordinated money laundering activity across heterogeneous financial networks. We next evaluate the economic value of label propagation by incorporating it into the decision-making module to quantify its impact on prevented losses and resource allocation in the follow section.

\subsection{Empirical Results of Decision-Making Module}
\subsubsection{Economic Value of Hierarchical Decision-Making Module}
A central question for financial institutions is the economic value generated by a more sophisticated intervention mechanism. Specifically, how much financial loss can the proposed hierarchical decision-making module avert relative to traditional, fixed-threshold approaches? To address this, we estimate both the total financial loss attributable to money-laundering transactions and the prevented loss achieved when intervention policies are deployed by (i) a decision-making module with a fixed threshold and (ii) our hierarchical decision-making module. Table~\ref{tab:economic_values} summarizes these results in Panels A and B, respectively.

Our empirical analysis is conducted on the IBM Anti–Money Laundering (AML) dataset, which contains more than 1.4 million transaction records collected between September 1 and September 10, 2022. To emulate a realistic operational environment, we partition the data chronologically: the first 80\% of transactions (September 1–8) are sued for model training, while the remaining 20\% (September 8–10) are used exclusively for evaluation. Within the first part, we further split the data into equal halves for training and validation. We define  \textit{Threshold} as the threshold to freeze the suspicious transactions, \textit{Total Loss} as the aggregated dollar amount associated with successfully executed illicit transactions and \textit{Prevented Loss} as the amount that would have been blocked had the recommended intervention (i.e., transaction freezing) been deployed at the time of execution. We first train the federated graph-based detection modu (Section~\ref{sec:detection_model}) on the training data to produce transaction-level maliciousness scores across the training, validation and test sets. These scores in training set are then used to train the hierarchical decision-making module, which learns optimal thresholding policies for interventions. Hyperparameters are selected using the validation set. During evaluation, we also track Type I and Type II errors to capture the operational quality of intervention decisions. A missed freeze on a money-laundering transaction is counted as a false negative (Type II error), while an unnecessary freeze on a legitimate transaction is counted as a false positive (Type I error). These error metrics reflect the business risks of either failing to prevent financial crime or unnecessarily disrupting customer activity.

\begin{table}[htp]
\centering
\caption{Comparison of decision-making module with Fixed Threshold and Hierarchical decision-making module Performance Across Countries. The fixed threshold in Panel A is selected based on the overall threshold in Panel B for fair comparison.}
\resizebox{0.9\textwidth}{!}{
\begin{tabular}{lcccccc}
\hline
\multicolumn{7}{c}{\textbf{Panel A: Decision-making module with fixed threshold}} \\
\hline
\textbf{Market} & \textbf{Threshold} & \textbf{Total Loss} & \textbf{Prevented Loss} & \textbf{Prevented Loss Ratio} & \textbf{Type I Error} & \textbf{Type II Error} \\
\hline
United States   & 0.3974 & \$456.41 Million & \$252.43 Million & 0.5531 & 0.0178 & 0.4152 \\
Germany         & 0.3974 & \$357.24 Million & \$201.59 Million & 0.5643 & 0.0125 & 0.5830 \\
France          & 0.3974 & \$212.05 Million & \$110.13 Million & 0.5194 & 0.0119 & 0.5546 \\
Italy           & 0.3974 & \$223.72 Million & \$101.80 Million & 0.4550 & 0.0089 & 0.6141 \\
Spain           & 0.3974 & \$280.05 Million & \$57.51 Million  & 0.2053 & 0.0094 & 0.5933 \\
China           & 0.3974 & \$292.43 Million & \$147.15 Million & 0.5032 & 0.0116 & 0.6140 \\
Rest Countries  & 0.3974 & \$118.45 Million & \$88.10 Million  & 0.7438 & 0.0069 & 0.4610 \\
\textbf{Overall} & 0.3974 & \$1940.35 Million & \$958.70 Million & 0.4941 & 0.0113 & 0.5479 \\
\hline
\end{tabular}}
\resizebox{0.9\textwidth}{!}{
\begin{tabular}{lcccccc}
\multicolumn{7}{c}{\textbf{Panel B: Hierarchical Decision-Making Module (without Label Propagation)}} \\
\hline
\textbf{Market} & \textbf{Threshold} & \textbf{Total Loss} & \textbf{Prevented Loss} & \textbf{Prevented Loss Ratio} & \textbf{Type I Error} & \textbf{Type II Error} \\
\hline
United States & 0.3163 & \$456.41 Million & \$455.29 Million & 0.9975 & 0.1029 & 0.0554 \\
Germany & 0.4195 & \$357.24 Million & \$343.75 Million & 0.9622 & 0.0348 & 0.3138 \\
France & 0.5226 & \$212.05 Million & \$24.32 Million & 0.1147 & 0.0083 & 0.6753 \\
Italy & 0.3163 & \$223.72 Million & \$223.60 Million & 0.9995 & 0.1023 & 0.0403 \\
Spain & 0.4711 & \$280.05 Million & \$80.30 Million & 0.2868 & 0.0150 & 0.5075 \\
China & 0.3679 & \$292.43 Million & \$292.01 Million & 0.9985 & 0.0847 & 0.1462 \\
Rest Countries & 0.3679 & \$118.45 Million & \$118.36 Million & 0.9993 & 0.0654 & 0.0487 \\
\textbf{Overall} & 0.3974 & \$1940.35 Million & \$1537.64 Million & 0.7925 & 0.0591 & 0.2553 \\
\hline
\end{tabular}}
\resizebox{0.9\textwidth}{!}{
\begin{tabular}{lcccccc}
\multicolumn{7}{c}{\textbf{Panel C: Hierarchical Decision-Making Module with Label Propagation}} \\
\hline
\textbf{Market} & \textbf{Threshold} & \textbf{Total Loss} & \textbf{Prevented Loss} & \textbf{Prevented Loss Ratio} & \textbf{Type I Error} & \textbf{Type II Error} \\
\hline
United States & 0.3679 & \$456.41M & \$442.96M & 0.9705 & 0.0650 & 0.0964 \\
Germany & 0.3163 & \$357.24M & \$356.92M & 0.9991 & 0.0775 & 0.0789 \\
France & 0.4711 & \$212.05M & \$141.62M & 0.6678 & 0.0189 & 0.3190 \\
Italy & 0.6258 & \$223.72M & \$50.46M & 0.2255 & 0.0011 & 0.7490 \\
Spain & 0.4195 & \$280.05M & \$227.58M & 0.8126 & 0.0327 & 0.2276 \\
China & 0.3163 & \$292.43M & \$280.09M & 0.9578 & 0.0797 & 0.0731 \\
Rest Countries & 0.4195 & \$118.45M & \$118.08M & 0.9968 & 0.0140 & 0.1721 \\
\textbf{Overall} & 0.4195 & \$1,940.35M & \$1,617.71M & 0.8337 & 0.0413 & 0.2452 \\
\hline
\end{tabular}}
\label{tab:economic_values}
\end{table}

When using a fixed threshold of 0.3974 (the threshold is selected based on the overall threshold in Panel B), the model prevents 49.41\% of total financial loss. This low prevented-loss ratio reveals that the fixed-threshold mechanism fails to translate predictive signals into effective business actions. The high Type II error rate of 54.79\% further underscores this issue that the model does not freeze the vast majority of illicit transactions, allowing substantial losses to occur. From a risk management perspective, such performance indicates that a one-size-fits-all thresholding policy is ill-suited for complex, heterogeneous transaction environments. Differently, Panel B provides strong evidence that the hierarchical decision-making module substantially enhances intervention effectiveness. The model prevents 79.25\% of the potential financial loss, which is over 60\% improvement over the fixed-threshold baseline. The significant reduction in Type II error from 54.79\% to 25.53\% demonstrates that the model more reliably identifies and disrupts illicit flows, improving both operational accuracy and economic efficiency. Interestingly, we also observe that the learned intervention thresholds vary across bank branches in different countries (ranging from 0.3163 to 0.5226), suggesting that optimal risk tolerance levels are context-dependent. This finding has practical implications for multinational financial institutions that uniform thresholding may mask local transaction behaviors, whereas adaptive, hierarchical policies can better align with regional risk patterns. Importantly, the combination of learned thresholds and low Type II error indicates that a lower maliciousness score, particularly one near the threshold, does not necessarily imply normal behavior. Rather, these scores must be interpreted in combination with the model’s learned hierarchical decision logic. These findings demonstrate that the model’s ability to tailor intervention thresholds to regional patterns offers a scalable way to harmonize global compliance requirements with local risk dynamics beyond improving accuracy. For financial institutions facing rising regulatory scrutiny and increasingly sophisticated money-laundering schemes, such models can deliver both economic value and operational resilience.

In Section~\ref{effectivenss_label_propagation}, we have demonstrated the effectiveness of label propagation of incorporating the results of cross-bank PPR. One key question arises: \textit{What is the economic value of label propagation based on the structural signals discovered through cross-bank PPR?} To answer this, we evaluate the financial impact of integrating label propagation into the hierarchical decision-making framework. Table~\ref{tab:economic_values} Panel B and C compares the performance of the hierarchical decision-making module with and without label propagation across countries. Overall, integrating label propagation increases the total prevented loss from \$1,537.64M to \$1,617.71M and improves the prevented loss ratio from 0.7925 to 0.8337, demonstrating a substantial enhancement in loss mitigation. Type I errors decrease from 0.0591 to 0.0413, indicating fewer false alarms, while Type II errors slightly decrease overall (0.2553 → 0.2452), reflecting a minor improvement in detection of unprevented losses. At the country level, the effects vary: for example, the United States sees a reduction in Type I error (0.1029 → 0.0650) but a small increase in Type II error (0.0554 → 0.0964), whereas Germany achieves near-complete loss prevention (prevented loss ratio 0.9622 → 0.9991) with balanced error rates. Smaller markets, including Rest Countries, maintain consistently high intervention efficiency under both models. These results indicate that incorporating label propagation into the hierarchical decision-making framework enhances global mitigation of financial losses, improves overall robustness against false positives, and maintains effective detection across diverse market structures.

\subsubsection{Economic Value of Hierarchical Decision-Making Module Under Regulatory Budget Constraint}
\begin{table}[htbp]
\centering
\caption{Comparison of Fixed Threshold and Label Propagation under a Fixed Budget Constraint. Thresholds are not reported, with a maximum intervention budget of 1\% of transactions.}
\resizebox{0.9\textwidth}{!}{
\begin{tabular}{lccccc}
\hline
\multicolumn{6}{c}{\textbf{Panel A: Decision-Making Module with Fixed Threshold Under Fixed Budget Constraint}} \\
\hline
\textbf{Market} & \textbf{Total Loss} & \textbf{Prevented Loss} & \textbf{Prevented Loss Ratio} & \textbf{Type I Error} & \textbf{Type II Error} \\
\hline
United States & \$456.41M & \$23.996M & 0.0526 & 0.0017 & 0.7982 \\
Germany & \$357.24M & \$55.766M & 0.1561 & 0.0013 & 0.8057 \\
France & \$212.05M & \$23.983M & 0.1131 & 0.0015 & 0.7902 \\
Italy & \$223.72M & \$50.583M & 0.2261 & 0.0016 & 0.8087 \\
Spain & \$280.05M & \$1.289M & 0.0046 & 0.0019 & 0.8284 \\
China & \$292.43M & \$19.927M & 0.0681 & 0.0016 & 0.8538 \\
Rest Countries & \$118.45M & \$35.556M & 0.3002 & 0.0003 & 0.7987 \\
\textbf{Overall} & \$1,940.35M & \$211.10M & 0.1088 & 0.0014 & 0.8120 \\
\hline
\end{tabular}}
\resizebox{0.9\textwidth}{!}{
\begin{tabular}{lccccc}
\multicolumn{6}{c}{\textbf{Panel B: Hierarchical Decision-Making Module with Label Propagation Under Fixed Budget Constraint}} \\
\hline
\textbf{Market} & \textbf{Total Loss} & \textbf{Prevented Loss} & \textbf{Prevented Loss Ratio} & \textbf{Type I Error} & \textbf{Type II Error} \\
\hline
United States & \$456.41M & \$82.393M & 0.1805 & 0.0056 & 0.5911 \\
Germany & \$357.24M & \$92.672M & 0.2594 & 0.0091 & 0.5749 \\
France & \$212.05M & \$0.142M & 0.0007 & 0.0508 & 0.1092 \\
Italy & \$223.72M & \$56.804M & 0.2539 & 0.0069 & 0.6644 \\
Spain & \$280.05M & \$37.419M & 0.1336 & 0.0031 & 0.7015 \\
China & \$292.43M & \$11.531M & 0.0394 & 0.0045 & 0.6813 \\
Rest Countries & \$118.45M & \$0.377M & 0.0032 & 0.0140 & 0.1721 \\
\textbf{Overall} & \$1,940.35M & \$281.34M & 0.1450 & 0.0134 & 0.4992 \\
\hline
\end{tabular}}
\label{tab:fixed_budget}
\end{table}

In practice, regulatory authorities face strict constraints on how many interventions they can deploy. For instance, regulators may be allowed to freeze only a small fraction of transactions—say, 0.5\%—to avoid disrupting normal operations and to comply with operational or legal limits. Under such a budget constraint, the key challenge is not just detecting suspicious activity, but allocating limited intervention resources effectively across diverse markets and transaction networks. This reframes anti-money-laundering efforts as a classical resource allocation problem: given a fixed budget of interventions, how can regulators maximize prevented losses while minimizing missed laundering activity?

Table~\ref{tab:fixed_budget} addresses this question by comparing a fixed-threshold strategy with label propagation under the same intervention budget. With a fixed threshold, the system behaves very conservatively: overall, only \$211.10 Million financial losses are prevented with 0.1088 of Prevented loss ratio, and Type II errors are extremely high (0.8120), meaning most laundering activity are not intervened. 0.0014 of Type I errors, however, are almost negligible, reflecting that the limited intervention budget is spent on the normal transactions. By contrast, the hierarchical decision-making module prioritizes interventions more strategically. Specifically, total prevented loss increases to \$281.34 Million with 0.1450 of prevented loss ratio, while Type II errors drop significantly to 0.4992, showing a much higher capture of laundering activity under the same budget. This improvement comes with a modest increase in Type I errors from 0.14\% to 1.34\%, reflecting a slight increase in false positives that is acceptable given the large gain in prevented losses. At the country/market level, the impact is even more illustrative. High-risk and high-volume markets, such as the US and Germany, see substantial improvements in prevented losses, while smaller markets maintain effective coverage without overspending the limited budget. These results demonstrate that hierarchical decision-making module effectively leverages structural connections among suspicious nodes to guide interventions under resource constraints, translating enhanced detection performance into real economic value. In other words, by propagating group-level laundering signals, the system learns where to “spend” its limited freezes most effectively, making regulatory intervention more efficient and impactful.

\section{Conclusion}
This paper develops a unified, privacy-preserving framework that advances anti–money laundering (AML) detection from isolated, rule-based monitoring toward coordinated, explainable, and action-oriented financial crime prevention. By integrating graph-based federated learning, cross-bank Personalized PageRank, interpretable group identification, and hierarchical reinforcement learning, the proposed system addresses four long-standing challenges in AML: institutional data silos, lack of interpretability, weak decision support following detection, and severe class imbalance.

Our federated graph learning framework enables institutions to learn global laundering behaviors without sharing sensitive data, significantly reducing both false alarms and missed detections. Empirical analysis demonstrates that federated training improves performance across countries, while graph-based relational modeling captures complex behavioral dependencies that feature-driven methods overlook. Through cross-bank PPR, the system transforms raw detection signals into interpretable group structures, uncovering coordinated laundering patterns such as fan-out, gather–scatter, and hybrid schemes distributed across institutions and borders. These interpretable clusters strengthen investigative transparency and provide regulators with clearer evidence of network-level criminal activity.

Furthermore, the proposed hierarchical reinforcement learning mechanism shifts AML from static alerting to adaptive intervention. By optimizing when to escalate, monitor, or intervene, the framework aligns detection outputs with operational constraints, regulatory requirements, and cost considerations. This represents a meaningful departure from traditional rule-based escalation policies, enabling timely and economically efficient responses to suspicious activity.

Our results show that combining privacy-preserving collaboration with structure-aware learning and principled decision-making yields substantial gains in detection accuracy, interpretability, and operational effectiveness. The insights also highlight broader managerial implications: institutions that can securely leverage cross-organizational signals and act adaptively will be better positioned to counter increasingly sophisticated laundering threats. Future research may explore extending this framework to real-time streaming environments~(\cite{zheng2024online, zheng2024mulan}), adversarial laundering strategies, and multi-modal datasets incorporating customer profiles, text reports, and cross-border regulatory information. By bridging detection, explanation, and action, this work contributes a new foundation for scalable, transparent, and collaborative financial crime prevention in a globally interconnected economy.

\printbibliography

\appendix
\section{Proofs of Theoretical Results}
\label{app:proofs}
Before we prove Theorem~\ref{thm:detectability} in Appendix~\ref{proof_theorem}, we need the following intermediate lemmas on mean-field and concentration lemma and the spectral norm of the difference between adjacency matrix and its mean.

\subsection{Mean-field and Concentration Lemmas}

\begin{lemma}[Mean-field $2\times 2$ reduction]
\label{lem:meanfield}

Let $M=\begin{pmatrix} q_{II} & q_{IO}\\ q_{OI} & q_{OO}\end{pmatrix}$ denote the mean-field block transition matrix.
Then, for a seed node $s_0\in S^\star$,
\[
\mathbb{E}[r(S^\star)] = (1-\alpha)\sum_{k=0}^\infty \alpha^k (M^k)_{11}
= (1-\alpha)\big[(I-\alpha M)^{-1}\big]_{11}.
\]
\end{lemma}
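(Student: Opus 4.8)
The plan is to start from the Neumann (geometric) series for the mean-field PPR vector and then collapse the full $n$-state random walk onto the two-state block process via strong lumpability. I interpret $\mathbb{E}[r(S^\star)]$ as the PPR mass on the planted block computed on the mean-field operator $\bar P$ (the block-symmetric expected transition matrix), so the identity is exact rather than an approximation. First I would write
\[
\bar{\mathbf r}=(1-\alpha)(I-\alpha\bar P)^{-1}\mathbf e_{s_0}=(1-\alpha)\sum_{k\ge 0}\alpha^{k}\bar P^{k}\mathbf e_{s_0},
\]
where convergence is immediate because $\bar P$ is row-stochastic, so $\|\alpha\bar P\|_\infty=\alpha<1$ and the resolvent exists. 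Taking the total mass on the planted block, $r(S^\star)=\mathbf 1_{S^\star}^\top\bar{\mathbf r}$, reduces the problem to evaluating the $k$-step block-occupation probability $\mathbf 1_{S^\star}^\top\bar P^{k}\mathbf e_{s_0}=\Pr[X_k\in S^\star\mid X_0=s_0]$ for the $\bar P$-walk $(X_k)_{k\ge 0}$.

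Second, I would verify that the mean-field chain is strongly lumpable with respect to the partition $\{S^\star,\,V\setminus S^\star\}$ and that the lumped kernel is exactly $M$. By block symmetry of the SBM, every inside node has the same expected degree $\bar d_{\mathrm{in}}=(s-1)p_{\mathrm{in}}+(n-s)p_{\mathrm{out}}$, and the mean-field transition probability from any inside node to a target depends only on the target's block. Summing over target blocks gives $\sum_{v\in S^\star}\bar P(u,v)=q_{II}$ and $\sum_{v\notin S^\star}\bar P(u,v)=q_{IO}$ for every $u\in S^\star$, and symmetrically $q_{OI},q_{OO}$ for outside nodes, with $q_{II}+q_{IO}=q_{OI}+q_{OO}=1$. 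This constant-block-row-sum property is precisely the Kemeny--Snell strong-lumpability criterion, so the block-membership process is itself Markov with transition matrix $M$ for an arbitrary initial state within its block. Since $s_0\in S^\star$ lies in block $1$, this yields $\Pr[X_k\in S^\star\mid X_0=s_0]=(M^{k})_{11}$.

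Third, I would substitute this identity and resum the $2\times 2$ series,
\[
r(S^\star)=(1-\alpha)\sum_{k\ge 0}\alpha^{k}(M^{k})_{11}=(1-\alpha)\Big[\sum_{k\ge 0}(\alpha M)^{k}\Big]_{11}=(1-\alpha)\big[(I-\alpha M)^{-1}\big]_{11},
\]
where the interchange of the scalar sum with the $(1,1)$-entry extraction is justified entrywise, and the matrix Neumann series converges because $M$ is row-stochastic, so $\rho(\alpha M)\le\alpha<1$ and $(I-\alpha M)$ is invertible. This gives the claimed closed form.

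The step I expect to be the main obstacle is the lumpability reduction: one must argue not merely that $\bar P$ is block-constant but that this block-constancy is the exact strong-lumpability condition guaranteeing the induced block process is Markov with kernel $M$ even from the degenerate single-state initial law $\mathbf e_{s_0}$. Two subtleties warrant care: (i) the self-loop/diagonal correction in the inside degree (the $s-1$ versus $s$ count of inside neighbors), which I would absorb into the definition of $q_{II}$ so that the rows of $M$ sum to one exactly; and (ii) clarifying that $\mathbb{E}[r(S^\star)]$ refers to the mean-field value on $\bar P$ rather than the random-graph expectation of the nonlinear random PPR, so that no interchange of expectation with the matrix inverse is invoked here. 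The transition from this mean-field identity to the genuinely random graph is deferred to the separate concentration lemma.
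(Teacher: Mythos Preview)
Your proposal is correct and follows essentially the same route as the paper: expand the PPR vector as a Neumann series, reduce the $k$-step block-occupation probability $\mathbf 1_{S^\star}^\top(\cdot)^k\mathbf e_{s_0}$ to $(M^k)_{11}$ via the block symmetry of the mean-field chain, and resum the $2\times 2$ geometric series. Your argument is in fact more careful than the paper's on two points: you invoke the Kemeny--Snell strong-lumpability criterion explicitly where the paper simply appeals to ``node exchangeability within each block,'' and you correctly flag that the identity should be read as an exact statement on the mean-field operator $\bar P$ rather than as a literal SBM expectation of the random PPR---the paper writes $\mathbb{E}[\mathbf 1_{S^\star}^\top P^k\mathbf e_{s_0}]=(M^k)_{11}$ without addressing the interchange of expectation with matrix powers that this would require.
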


\begin{proof}
The PPR vector satisfies
\[
\mathbf{r} = (1-\alpha)\sum_{k\ge0}\alpha^k P^k \mathbf{e}_{s_0}.
\]
The total PPR mass on the planted block $S^\star$ is
\[
r(S^\star) = \mathbf{1}_{S^\star}^\top \mathbf{r}
= (1-\alpha)\sum_{k\ge0}\alpha^k 
\mathbf{1}_{S^\star}^\top P^k \mathbf{e}_{s_0}.
\]
Taking expectations over the SBM ensemble and exploiting node exchangeability within each block,
the expected probability that a random walk is in block $I$ after $k$ steps is governed by the 2-state Markov chain with transition matrix $M$. Hence,
\[
\mathbb{E}\!\left[\mathbf{1}_{S^\star}^\top P^k \mathbf{e}_{s_0}\right]
= (M^k)_{11}.
\]
Summing the geometric series yields
\[
\mathbb{E}[r(S^\star)] = (1-\alpha)\sum_{k\ge0}\alpha^k (M^k)_{11}
= (1-\alpha)\big[(I-\alpha M)^{-1}\big]_{11},
\]
which completes the proof.
\end{proof}

Following Lemma 10 in~\parencite{avrachenkov2015pagerank}, we provide lemma~\ref{lem:concentration} below as the supplement to the assumption in Theorem~\ref{thm:detectability}.
\begin{lemma}[Degree and adjacency concentration]
\label{lem:concentration}
Let $d_v$ denote the degree of node $v$ in $\mathcal{G}$. There exist constants $C_1,C_2>0$ such that, with probability at least $1-n^{-c}$:
\begin{enumerate}
\item[(i)] (\textbf{Degrees}) For every node $v$,
\[
    |d_v-\mathbb{E}[d_v]| \le C_1\sqrt{\mathbb{E}[d_v]\log n}.
\]
\item[(ii)] (\textbf{Adjacency spectral norm})
\[
    \|A-\bar A\| \le C_2\sqrt{n p_{\max}\log n}, \quad p_{\max} = \max\{p_{\mathrm{in}},p_{\mathrm{out}}\}.
\]
\end{enumerate}
\end{lemma}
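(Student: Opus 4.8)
\textit{Proof plan.} Both bounds follow from standard concentration for sums of independent Bernoulli random variables applied to the entries $A_{uv}$ of the SBM adjacency matrix, which are mutually independent up to the symmetry constraint $A_{uv}=A_{vu}$, bounded in $[0,1]$, and satisfy $\mathrm{Var}(A_{uv})\le p_{\max}$. The plan is to establish (i) and (ii) separately, each on an event of probability at least $1-n^{-(c+1)}$, and then combine them by a final union bound so that both hold simultaneously with probability at least $1-n^{-c}$. Throughout I follow the route of Lemma~10 in Avrachenkov et al.

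For part (i), I would fix a node $v$ and write $d_v=\sum_{u\neq v}A_{uv}$, a sum of independent Bernoulli variables with success probability $p_{\mathrm{in}}$ or $p_{\mathrm{out}}$ according to block membership, so that $\mathbb{E}[d_v]$ is the corresponding block-weighted sum. A Bernstein inequality for bounded independent summands, using $\sum_u\mathrm{Var}(A_{uv})\le\mathbb{E}[d_v]$, gives
\[
\Pr\bigl[|d_v-\mathbb{E}[d_v]|\ge t\bigr]\le 2\exp\!\Bigl(\tfrac{-t^2/2}{\mathbb{E}[d_v]+t/3}\Bigr).
\]
Setting $t=C_1\sqrt{\mathbb{E}[d_v]\log n}$ and invoking the density assumption $\mathbb{E}[d_v]\gtrsim\log n$ (so the variance term dominates $t/3$ in the denominator) makes the right-hand side at most $2\exp(-c'C_1^2\log n)$; choosing $C_1$ large enough forces this below $n^{-(c+1)}$, and a union bound over the $n$ nodes yields the uniform degree bound.

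For part (ii), I would decompose the centered matrix into independent, mean-zero, symmetric, rank-at-most-two summands,
\[
A-\bar A=\sum_{u<v}(A_{uv}-\bar A_{uv})\bigl(e_u e_v^\top+e_v e_u^\top\bigr),
\]
and apply the matrix Bernstein inequality. Each summand has operator norm at most $1$, so the boundedness parameter is $R=1$; the matrix variance $\sigma^2=\bigl\|\sum_{u<v}\mathbb{E}[X_{uv}^2]\bigr\|$ reduces to a diagonal matrix each of whose entries is a sum of at most $n$ variances, each bounded by $p_{\max}$, giving $\sigma^2\le np_{\max}$. Matrix Bernstein then yields
\[
\Pr\bigl[\|A-\bar A\|\ge t\bigr]\le 2n\exp\!\Bigl(\tfrac{-t^2/2}{np_{\max}+t/3}\Bigr).
\]
Taking $t=C_2\sqrt{np_{\max}\log n}$ and again using $np_{\max}\gtrsim\log n$ (so $t/3$ is lower order than $np_{\max}$) bounds the exponent by $-c'C_2^2\log n$, hence the probability by $2\exp(\log n-c'C_2^2\log n)\le n^{-(c+1)}$ once $C_2$ is large enough.

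The main technical obstacle is part (ii): one must verify that the boundedness contribution $t/3=R t/3$ does not overwhelm the variance term $\sigma^2=np_{\max}$ inside the Bernstein exponent, which is exactly where the density assumption $np_{\max}\gtrsim\log n$ (equivalently the standard regime $p_{\mathrm{out}}\gtrsim\log n/n$) is used; in the sparser regime the additive $R$ term would dominate and the clean $\sqrt{np_{\max}\log n}$ scaling would break down. A secondary bookkeeping point is the symmetric/self-loop accounting in the decomposition: the diagonal of $A$ and the pairing of the $(u,v)$ and $(v,u)$ entries must be handled consistently so that the summands $X_{uv}$ are genuinely independent and the variance constant comes out correctly, which is routine but needs care.
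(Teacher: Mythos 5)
Your proposal matches the paper's proof essentially step for step: part (i) uses scalar Bernstein on $d_v=\sum_{u\neq v}A_{uv}$ with $t=C_1\sqrt{\mathbb{E}[d_v]\log n}$ and a union bound over the $n$ nodes, and part (ii) uses matrix Bernstein applied to the symmetric rank-two edge summands with variance parameter $\sigma^2\lesssim np_{\max}$ and $R=1$, relying on the same density regime $np_{\max}\gtrsim\log n$ that the paper invokes implicitly when it assumes $t\ll np_{\max}$. Your only additions—the explicit final union bound combining (i) and (ii) at level $n^{-(c+1)}$ each, and the remark about diagonal/symmetry bookkeeping—are minor refinements of bookkeeping the paper leaves implicit, so the argument is correct and the same as the paper's.
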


\begin{proof}
We divide the proof into two parts: (i) concentration of node degrees and (ii) concentration of the adjacency matrix in spectral norm.

\noindent(i) Given a node $v$, its degree can be written as $d_v = \sum_{u\neq v} A_{vu}$, where $\{A_{vu}\}_{u\neq v}$ are independent Bernoulli random variables with mean either $p_{\mathrm{in}}$ or $p_{\mathrm{out}}$ depending on the community memberships. Let $\mu_v = \mathbb{E}[d_v]$ denote its expected degree. Each $A_{vu}\in[0,1]$, and hence $d_v$ is a sum of $(n-1)$ independent bounded random variables.
By Bernstein’s inequality~\parencite{govil1987bernstein}, for any $t>0$, we have
\[
    \Pr(|d_v - \mu_v| > t)
    \le 2\exp\!\left(-\frac{t^2}{2(\mu_v+t/3)}\right).
\]
Setting $t=C_1\sqrt{\mathbb{E}[ \deg(v) ]\log n}$, we have 
\[
    \Pr(|d_v - \mu_v| > t) \le 2\exp\!\left(-\frac{C_1^2 \mu_v \log n}{2(\mu_v + \tfrac{1}{3}C_1\sqrt{\mu_v\log n})} \right) =
    2\exp\!\left( -\frac{C_1^2\log n}{2(1 + \tfrac{C_1}{3}\sqrt{\tfrac{\log n}{\mu_v}})} \right).
\]
When $\mu_v \gtrsim \log n$ (as assumed in standard dense or semi-sparse regimes), the denominator in the exponent is $O(1)$, so for sufficiently large $C_1$ we have
\[
    \Pr(|d_v - \mu_v| > t) \le 2n^{-c-1}.
\]
\noindent Applying a union bound over $n$ nodes gives
\[
    \Pr\! \left(\exists v:\, |d_v - \mu_v| > C_1\sqrt{\mu_v\log n}\right) \le 2n^{-c}.
\]
Hence, with probability at least $1-n^{-c}$,
\[
    |d_v-\mathbb E[d_v]| \le C_1\sqrt{\mathbb E[d_v]\log n} \quad\text{for all }v\in[n],
\]
which proves part (i).

\noindent(ii) For the spectral bound, $A-\bar A=\sum_{i<j} X_{ij}$, where $X_{ij}$ is the symmetric zero-mean random matrix with $(i,j)$ and $(j,i)$ entries equal to $A_{ij}-\mathbb{E}[A_{ij}]$. To apply the Matrix Bernstein inequality~\parencite{tropp2012user}, we compute the matrix variance parameter:
\[
    \sigma^2 := \left\|\sum_{i<j}\mathbb E[X_{ij}^2]\right\|.
\]
Observe that $X_{ij}^2$ is diagonal with entries 
$\mathbb E[(A_{ij}-\mathbb E[A_{ij}])^2] = \mathrm{Var}(A_{ij}) \le p_{\max}$ 
on rows $i$ and $j$. Therefore, for some absolute constant $C>0$, we have
\[
    \sigma^2 \le \left\| \sum_{j\neq i} p_{\max} e_i e_i^\top \right\| \le  C\,n p_{\max}.
\] 
The maximum single-summand norm satisfies $\|X_{ij}\|\le 1$.  The Matrix Bernstein inequality yields, for all $t>0$,
\begin{equation}
    \Pr(\|A-\bar A\| \ge t) \le  n \exp\!\left(-\frac{t^2/2}{\sigma^2 + t/3} \right).
    \label{eq:matrixbern}
\end{equation}
Setting $t = C_2\sqrt{n p_{\max}\log n} $ and using $\sigma^2 \le C n p_{\max}$ and $t\ll n p_{\max}$, the exponent in~\eqref{eq:matrixbern} satisfies
\[
\frac{t^2}{\sigma^2 + t/3}
\gtrsim
\frac{C_2^2 n p_{\max} \log n}{C n p_{\max} + C_2\sqrt{n p_{\max}\log n}/3}
\gtrsim c' C_2^2 \log n.
\]
Hence, for sufficiently large $C_2$, the probability in~\eqref{eq:matrixbern} is at most $n^{-c}$, proving that with probability at least $1-n^{-c}$,
\[
\|A-\bar A\| \le C_2\sqrt{n p_{\max}\log n}.
\]\noindent 
This completes the proof of Lemma~\ref{lem:concentration}.
\end{proof}

\begin{lemma}[Transition operator perturbation]
\label{lem:transition}
Under the high-probability event of Lemma~\ref{lem:concentration} and assuming $\min_i d_i \ge c_0 n p_{\mathrm{out}}$ for some constant $c_0>0$,
there exists a constant $C>0$ such that
\begin{equation}
    \label{eq:epsinf}
    \|P-\bar P\| \le \varepsilon_P := C\frac{\sqrt{n p_{\max}\log n}}{n p_{\mathrm{out}}},
\end{equation}
where $\bar P$ is the mean-field transition matrix.
\end{lemma}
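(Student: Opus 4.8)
The plan is to control $\|P-\bar P\|$ through an additive split that separates the fluctuation of the adjacency matrix from the fluctuation of the normalizing degrees. Writing $P=D^{-1}A$ and $\bar P=\bar D^{-1}\bar A$, where $\bar D$ is the diagonal of expected degrees, I would use the telescoping identity
\[
P-\bar P = D^{-1}(A-\bar A) + (D^{-1}-\bar D^{-1})\bar A,
\]
and bound the two spectral norms separately via the triangle inequality, showing each is of order $\sqrt{n p_{\max}\log n}/(n p_{\mathrm{out}})$.

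For the first term I would factor $\|D^{-1}(A-\bar A)\|\le \|D^{-1}\|\,\|A-\bar A\|$. The degree lower bound $\min_i d_i\ge c_0 n p_{\mathrm{out}}$ gives $\|D^{-1}\|=1/\min_i d_i\le 1/(c_0 n p_{\mathrm{out}})$, while Lemma~\ref{lem:concentration}(ii) supplies $\|A-\bar A\|\le C_2\sqrt{n p_{\max}\log n}$ on the high-probability event. Multiplying yields a bound at exactly the target scale.

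The delicate term is the second one, and this is where the main obstacle lies. The naive estimate $\|(D^{-1}-\bar D^{-1})\bar A\|\le \|D^{-1}-\bar D^{-1}\|\,\|\bar A\|$ is too lossy, because $\bar A$ is block-constant, hence of rank at most two, with leading eigenvalue of order $n p_{\max}$, which would overwhelm the target. The fix I plan is to absorb one copy of $\bar D^{-1}$ into $\bar A$ and write
\[
(D^{-1}-\bar D^{-1})\bar A = \operatorname{diag}\!\Big(\tfrac{\bar d_i-d_i}{d_i}\Big)\,\bar P,
\]
so that the large factor becomes the mean-field transition matrix $\bar P=\bar D^{-1}\bar A$ rather than $\bar A$ itself. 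Because $\bar P$ is row-stochastic and, under the two-block model, has only two distinct rows (hence rank at most two) with each row of $\ell_2$-norm $O(1/\sqrt n)$, its Frobenius norm and therefore its spectral norm are $O(1)$. It then remains to bound the diagonal prefactor: the degree concentration in Lemma~\ref{lem:concentration}(i) gives $|\bar d_i-d_i|\le C_1\sqrt{\bar d_i\log n}\le C_1\sqrt{n p_{\max}\log n}$, and dividing by $d_i\ge c_0 n p_{\mathrm{out}}$ shows the prefactor is bounded uniformly in $i$ by a quantity of order $\sqrt{n p_{\max}\log n}/(n p_{\mathrm{out}})$. Multiplying by $\|\bar P\|=O(1)$ matches the first term's scale.

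Combining the two bounds via the triangle inequality then yields $\|P-\bar P\|\le \varepsilon_P=C\sqrt{n p_{\max}\log n}/(n p_{\mathrm{out}})$ for a constant $C$ absorbing $c_0,C_1,C_2$, on the event of Lemma~\ref{lem:concentration}, which holds with probability at least $1-n^{-c}$. The only step requiring care beyond routine bookkeeping is the rank and row-stochasticity argument certifying $\|\bar P\|=O(1)$; everything else follows directly from the concentration estimates already established.
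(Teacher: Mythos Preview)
Your proposal is correct and follows essentially the same route as the paper: the identical telescoping split $P-\bar P = D^{-1}(A-\bar A) + (D^{-1}-\bar D^{-1})\bar A$, with the first term bounded exactly as you describe. The only difference is in packaging the second term. You rewrite it as $\operatorname{diag}\bigl(\tfrac{\bar d_i-d_i}{d_i}\bigr)\bar P$ and invoke $\|\bar P\|=O(1)$; the paper instead uses the ``naive'' factorization $\|D^{-1}-\bar D^{-1}\|\,\|\bar A\|$ you dismissed, obtaining an extra factor $p_{\max}/p_{\mathrm{out}}$ and then explicitly assuming $p_{\mathrm{in}},p_{\mathrm{out}}$ are of the same order to absorb it. Your claim that each row of $\bar P$ has $\ell_2$-norm $O(1/\sqrt n)$ hides the very same ratio assumption (the $s$ entries of size $p_{\mathrm{in}}/\bar d_I$ contribute $\asymp s\,p_{\mathrm{in}}^2/(n p_{\mathrm{out}})^2$ to the squared row norm), so neither approach is actually tighter than the other---you have simply relocated where the constant $p_{\max}/p_{\mathrm{out}}$ is swallowed. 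The naive bound is not ``too lossy''; it works for the paper and would have worked for you.
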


\begin{proof}
We first decompose $P - \bar P$:
\[
P - \bar P = D^{-1}A - \bar D^{-1}\bar A = D^{-1}(A-\bar A) + (D^{-1}-\bar D^{-1})\bar A.
\]
Hence, we have
\[
\|P-\bar P\| \le \|D^{-1}\|\|A-\bar A\| + \|D^{-1}-\bar D^{-1}\|\|\bar A\|.
\]

\noindent
\textbf{(1) Bounding $\|D^{-1}\|\|A-\bar A\|$:}  
By Lemma~\ref{lem:concentration}, $\min_i d_i \ge \frac{1}{2}c_0 n p_{\mathrm{out}}$ with high probability, so $\|D^{-1}\|\le \frac{2}{c_0 n p_{\mathrm{out}}}$.  Thus, we have
\[
\|D^{-1}\|\|A-\bar A\|
\le C_1'\frac{\sqrt{n p_{\max}\log n}}{n p_{\mathrm{out}}}.
\]

\noindent
\textbf{(2) Bounding $\|D^{-1}-\bar D^{-1}\|\|\bar A\|$:}  Using $D^{-1}-\bar D^{-1} = -\bar D^{-1}(D-\bar D)D^{-1}$, we obtain
\[
\|D^{-1}-\bar D^{-1}\|
\le \|\bar D^{-1}\|\|D-\bar D\|\|D^{-1}\|
\le C_2'\frac{\sqrt{n p_{\max}\log n}}{(n p_{\mathrm{out}})^2}.
\]
Since $\|\bar A\|\le C_3' n p_{\max}$, the second term is
\[
\|D^{-1}-\bar D^{-1}\|\|\bar A\|
\le C_4'\frac{\sqrt{n p_{\max}\log n}}{n p_{\mathrm{out}}}
\frac{p_{\max}}{p_{\mathrm{out}}}.
\]
Assuming $p_{\mathrm{in}},p_{\mathrm{out}}$ are of the same order,  the scaling matches the first term. 

\noindent Combining the results from (1) and (2), we have 
\begin{align}
    \notag \|P-\bar P\| &\le \|D^{-1}\|\|A-\bar A\| + \|D^{-1}-\bar D^{-1}\|\|\bar A\| \\
    \notag &\le (C_1' +C_4'\frac{p_{\max}}{p_{\mathrm{out}}}) \frac{\sqrt{n p_{\max}\log n}}{n p_{\mathrm{out}}}  \\
    \notag &= C\frac{\sqrt{n p_{\max}\log n}}{n p_{\mathrm{out}}} =  \varepsilon_P
\end{align}
which completes the proof.
\end{proof}

\begin{lemma}[Resolvent / Neumann propagation]
\label{lem:resolvent}
Assume $\alpha\|\bar P\|\le 1-\delta$ for some $\delta\in(0,1]$ and $\alpha\varepsilon_P \le \delta/2$. Then, with high probability,
\[
    \|(I-\alpha P)^{-1} - (I-\alpha \bar P)^{-1}\| \le \frac{2\alpha\varepsilon_P}{\delta^2}.
\]
\end{lemma}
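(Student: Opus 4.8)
The plan is to invoke the second resolvent identity and then control each factor by a Neumann-series bound. Setting $A = I - \alpha P$ and $B = I - \alpha\bar P$, the resolvent identity gives $A^{-1} - B^{-1} = A^{-1}(B-A)B^{-1}$; since $B - A = \alpha(P - \bar P)$, this becomes
\[
(I-\alpha P)^{-1} - (I-\alpha\bar P)^{-1} = \alpha\,(I-\alpha P)^{-1}(P-\bar P)(I-\alpha\bar P)^{-1}.
\]
Taking norms and using submultiplicativity reduces the task to bounding three quantities: $\|(I-\alpha P)^{-1}\|$, $\|P-\bar P\|$, and $\|(I-\alpha\bar P)^{-1}\|$.

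The middle factor is handled immediately by Lemma~\ref{lem:transition}, which supplies $\|P-\bar P\|\le\varepsilon_P$ on the high-probability event. For the mean-field resolvent, the hypothesis $\alpha\|\bar P\|\le 1-\delta<1$ renders the Neumann series $\sum_{k\ge0}(\alpha\bar P)^k$ convergent, so that $\|(I-\alpha\bar P)^{-1}\|\le\sum_{k\ge0}(1-\delta)^k = 1/\delta$.

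The step that needs genuine care is bounding the perturbed resolvent $\|(I-\alpha P)^{-1}\|$, since a priori the fluctuation $P-\bar P$ could destroy invertibility or blow up the resolvent norm. Here I would combine the triangle inequality with Lemma~\ref{lem:transition} to obtain $\alpha\|P\|\le\alpha\|\bar P\|+\alpha\|P-\bar P\|\le(1-\delta)+\alpha\varepsilon_P$, and then invoke the second hypothesis $\alpha\varepsilon_P\le\delta/2$ to conclude $\alpha\|P\|\le 1-\delta/2<1$. The Neumann series for $(I-\alpha P)^{-1}$ thus also converges, giving $\|(I-\alpha P)^{-1}\|\le 1/(\delta/2)=2/\delta$. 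This is the crux of the argument: the assumption $\alpha\varepsilon_P\le\delta/2$ is precisely what keeps the spectral radius of $\alpha P$ bounded away from $1$, yielding a uniform resolvent bound despite the stochastic perturbation.

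Combining the three estimates multiplicatively then yields
\[
\|(I-\alpha P)^{-1} - (I-\alpha\bar P)^{-1}\|\le\alpha\cdot\frac{2}{\delta}\cdot\varepsilon_P\cdot\frac{1}{\delta}=\frac{2\alpha\varepsilon_P}{\delta^2},
\]
which is the claimed bound and holds on the same high-probability event as Lemma~\ref{lem:transition}.
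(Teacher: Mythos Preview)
Your proposal is correct and mirrors the paper's own argument essentially step for step: both apply the resolvent identity $(I-\alpha P)^{-1}-(I-\alpha\bar P)^{-1}=(I-\alpha P)^{-1}\alpha(P-\bar P)(I-\alpha\bar P)^{-1}$, bound the mean-field resolvent by $1/\delta$ via Neumann, use the triangle inequality plus $\alpha\varepsilon_P\le\delta/2$ to get $\alpha\|P\|\le 1-\delta/2$ and hence $\|(I-\alpha P)^{-1}\|\le 2/\delta$, and multiply. There is no substantive difference in method or order.
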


\begin{proof}
By the resolvent identity,
\[
    (I-\alpha P)^{-1} - (I-\alpha \bar P)^{-1} = (I-\alpha P)^{-1}\alpha(P-\bar P)(I-\alpha \bar P)^{-1}.
\]
We take the spectral norm on both sides:
\[
    \|(I-\alpha P)^{-1} - (I-\alpha \bar P)^{-1}\| \le \|(I-\alpha P)^{-1}\| \cdot \alpha\|P-\bar P\| \cdot \|(I-\alpha \bar P)^{-1}\|.
\]
The assumption $\alpha\|\bar P\|\le 1-\delta$ implies that all eigenvalues of $\alpha\bar P$ lie within the unit disk of radius $(1-\delta)$. Therefore, the smallest singular value of $(I - \alpha\bar P)$ is at least $\delta$, and hence
\[
\|(I - \alpha \bar P)^{-1}\| \le \frac{1}{\delta}.
\]
By the triangle inequality and Lemma~\ref{lem:transition}, $\|P\| \le \|\bar P\| + \|P - \bar P\| \le \|\bar P\| + \varepsilon_P$.
Then
\[
\alpha\|P\| \le \alpha\|\bar P\| + \alpha\varepsilon_P 
\le (1 - \delta) + \frac{\delta}{2} = 1 - \frac{\delta}{2}.
\]
Hence, the spectral radius of $\alpha P$ is at most $(1 - \frac{\delta}{2})$, 
so $(I - \alpha P)$ is invertible, and by the Neumann series expansion
\[
(I - \alpha P)^{-1} = \sum_{k=0}^{\infty} (\alpha P)^k, \qquad \text{for } \|\alpha P\|<1.
\]
Taking norms gives
\[
\|(I - \alpha P)^{-1}\| \le \frac{1}{1 - \alpha\|P\|} 
\le \frac{1}{\delta/2} = \frac{2}{\delta}.
\]
Combining the above inequalities, we obtain
\[
    \|(I - \alpha P)^{-1} - (I - \alpha \bar P)^{-1}\| \le \frac{2}{\delta} \cdot \alpha\varepsilon_P \cdot \frac{1}{\delta} = \frac{2\alpha\varepsilon_P}{\delta^2}.
\]
which completes the proof. 
\end{proof}


\subsubsection{Proof of Theorem~\ref{thm:detectability}}
\begin{theorem}[Detectability of PPR under Planted Laundering Group]
Let $G\sim\mathrm{SBM}(n,s,p_{\mathrm{in}},p_{\mathrm{out}})$ with $p_{\mathrm{in}}>p_{\mathrm{out}}$ and assume that degrees concentrate around their expectations (i.e., Lemma~\ref{lem:concentration} holds). Let $\mathbf{r}$ be the Personalized PageRank vector seeded at a known malicious account $s_0\in S^\star$ with a constant $\alpha\in(0,1)$, and $\varepsilon_{\infty}$ denote the entrywise perturbation scale arising from stochastic fluctuations of $P$ around the mean-field transition matrix $\bar P$. There exist constants $C,c>0$ such that, with probability at least $1-n^{-c}$, if
\[
\Delta_{\mathrm{mean}} \;\ge\; C\cdot\varepsilon_{\infty},
\]
then
\begin{enumerate}
    \item the average PPR score on $S^\star$ exceeds that on $V\setminus S^\star$ by at least $\tfrac{1}{2}\Delta_{\mathrm{mean}}$;
    \item ordering nodes by normalized PPR $r(v)/d(v)$ and selecting the prefix with smallest conductance recovers a subset $\widehat S$ with $|\widehat S\cap S^\star|\ge\gamma s$ for some constant $\gamma\in(0,1)$.
\end{enumerate}
In the standard regime $p_{\mathrm{out}}\gtrsim \log n/n$,
the detectability condition simplifies to
\[
s\,(p_{\mathrm{in}}-p_{\mathrm{out}})
\;\gtrsim\;
C'\sqrt{n p_{\max}\log n},
\]
where $p_{\max}=\max\{p_{\mathrm{in}},p_{\mathrm{out}}\}$. 
\end{theorem}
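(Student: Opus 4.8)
The plan is to realize the observed Personalized PageRank vector $\mathbf{r}$ as a perturbation of its mean-field counterpart $\bar{\mathbf{r}}=(1-\alpha)(I-\alpha\bar P)^{-1}\mathbf{e}_{s_0}$ and to show that the mean-field block separation $\Delta_{\mathrm{mean}}$ survives the stochastic fluctuations of $P$ around $\bar P$. By Lemma~\ref{lem:meanfield} the block-averaged entries of $\bar{\mathbf{r}}$ are exactly $\mu_{\mathrm{in}}$ and $\mu_{\mathrm{out}}$, so the mean-field gap between the in-group and out-group averages equals $\Delta_{\mathrm{mean}}$; this is the signal I must preserve under randomness.

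First I would assemble the perturbation chain already supplied: Lemma~\ref{lem:concentration} controls $\|A-\bar A\|$ and the degrees, Lemma~\ref{lem:transition} converts this into $\|P-\bar P\|\le\varepsilon_P$, and Lemma~\ref{lem:resolvent} then gives $\|(I-\alpha P)^{-1}-(I-\alpha\bar P)^{-1}\|\le 2\alpha\varepsilon_P/\delta^2$ on the high-probability event $\alpha\|\bar P\|\le 1-\delta$. Applying this operator bound to $\mathbf{e}_{s_0}$ and scaling by $(1-\alpha)$ controls $\|\mathbf{r}-\bar{\mathbf{r}}\|_2$. The crucial translation is to pass from this $\ell_2$ bound to bounds on the two block averages $\tfrac{1}{s}\mathbf{1}_{S^\star}^\top(\mathbf{r}-\bar{\mathbf{r}})$ and $\tfrac{1}{n-s}\mathbf{1}_{V\setminus S^\star}^\top(\mathbf{r}-\bar{\mathbf{r}})$ via Cauchy--Schwarz; I take the resulting per-node deviation as the definition of the entrywise perturbation scale $\varepsilon_{\infty}$. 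With these deviations bounded by $\varepsilon_{\infty}$, claim~(1) follows immediately: the observed in-group average is at least $\mu_{\mathrm{in}}-\varepsilon_{\infty}$ and the out-group average at most $\mu_{\mathrm{out}}+\varepsilon_{\infty}$, so their difference is at least $\Delta_{\mathrm{mean}}-2\varepsilon_{\infty}\ge\tfrac12\Delta_{\mathrm{mean}}$ whenever $\Delta_{\mathrm{mean}}\ge C\varepsilon_{\infty}$ with $C\ge 4$.

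For claim~(2) I would invoke the local Cheeger / sweep-cut machinery of Andersen et al.~\parencite{andersen2006local}: because the established gap forces a constant fraction of the PPR mass onto $S^\star$, and $S^\star$ itself is a low-conductance set (guaranteed by $p_{\mathrm{in}}>p_{\mathrm{out}}$), ordering nodes by $r(v)/d(v)$ and selecting the minimum-conductance prefix returns a set $\widehat S$ with $|\widehat S\cap S^\star|\ge\gamma s$. Claim~(3) is then a bookkeeping step: substituting the leading-order scaling of $\Delta_{\mathrm{mean}}$ (which grows with $p_{\mathrm{in}}-p_{\mathrm{out}}$ and the group size $s$) together with $\varepsilon_{\infty}$ inherited from $\varepsilon_P=C\sqrt{n p_{\max}\log n}/(n p_{\mathrm{out}})$ into $\Delta_{\mathrm{mean}}\ge C\varepsilon_{\infty}$, and clearing the common $1/(n p_{\mathrm{out}})$ factors in the regime $p_{\mathrm{out}}\gtrsim \log n/n$, yields the stated condition $s(p_{\mathrm{in}}-p_{\mathrm{out}})\gtrsim C'\sqrt{n p_{\max}\log n}$.

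I expect the main obstacle to be claim~(2): the averaging argument behind claim~(1) is robust, but converting the block-average gap into a guaranteed constant-fraction recovery through the conductance sweep requires the local-partitioning analysis, where one must verify that the seed $s_0$ lies in the well-contained core of $S^\star$ and that leakage of PPR mass along the $O(\varepsilon_{\infty})$ perturbation does not reorder the sweep enough to corrupt the recovered prefix. A secondary delicate point is the $\ell_2$-to-block-average translation defining $\varepsilon_{\infty}$: since the PPR entries are highly non-uniform, the factors $\sqrt{s}$ and $\sqrt{n-s}$ picked up from Cauchy--Schwarz must be tracked carefully so that the final regime condition comes out in the clean form claimed.
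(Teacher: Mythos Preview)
Your plan matches the paper's proof almost step for step: the same lemma chain (concentration $\to$ transition perturbation $\to$ resolvent), the same $\Delta_{\mathrm{mean}}-2\varepsilon_\infty\ge\tfrac12\Delta_{\mathrm{mean}}$ arithmetic with $C\ge 4$ for claim~(1), the same appeal to the Andersen--Chung--Lang sweep for claim~(2), and the same substitution of scalings for the explicit threshold. The one genuine difference is in how you pass from the $\ell_2$ resolvent bound to a per-node control: the paper takes the cruder route $\|\Delta r\|_\infty\le\|(I-\alpha P)^{-1}-(I-\alpha\bar P)^{-1}\|_{1\to\infty}\le\sqrt{n}\cdot\|(\cdot)\|_2$ and sets $\varepsilon_\infty=\sqrt{n}\cdot 2\alpha\varepsilon_P/\delta^2$, whereas you propose Cauchy--Schwarz on the block indicators, which yields the tighter block-average deviations $\varepsilon_P/\sqrt{s}$ and $\varepsilon_P/\sqrt{n-s}$. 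Your route is sharper for part~(1) and more honest about what the statement actually asserts (block averages, not individual nodes); the paper's $\ell_\infty$ route is looser but more directly feeds the node-level ordering required in part~(2), exactly the obstacle you flag. In the final bookkeeping both approaches face the same awkward $1/p_{\mathrm{out}}$ factor---the paper handles it by absorbing $\sqrt{n}/p_{\mathrm{out}}$ into the constant $C'$, which is precisely the step you anticipate as delicate.
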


\begin{proof}
\label{proof_theorem}
First, we want to relate the operator-level perturbation bound from Lemma~\ref{lem:resolvent} to the per-node (entrywise) deviation of the PPR vector.
Let
\[
\Delta r := (I - \alpha P)^{-1}\mathbf{e}_{s_0} - (I - \alpha \bar P)^{-1}\mathbf{e}_{s_0},
\]
which captures the difference between the random PPR vector and its mean-field counterpart.
By Lemma~\ref{lem:resolvent}, we have the operator-level bound
\[
\|(I - \alpha P)^{-1} - (I - \alpha \bar P)^{-1}\|_2 \;\le\; \frac{2\alpha\varepsilon_P}{\delta^2},
\]
under the assumptions $\alpha\|\bar P\| \le 1-\delta$ and $\alpha\varepsilon_P \le \delta/2$. To obtain an entrywise guarantee, observe that
\[
\|\Delta r\|_\infty
= \|[(I - \alpha P)^{-1} - (I - \alpha \bar P)^{-1}]\,\mathbf{e}_{s_0}\|_\infty
\;\le\;
\|(I - \alpha P)^{-1} - (I - \alpha \bar P)^{-1}\|_{1\to\infty}.
\]
Using the inequality $\|A\|_{1\to\infty} \le \sqrt{n}\,\|A\|_2$,
we obtain the conservative bound
\[
\|\Delta r\|_\infty
\;\le\; \sqrt{n}\cdot\frac{2\alpha\varepsilon_P}{\delta^2}
\;=:\; \varepsilon_{\infty}.
\]
That is, with high probability, each node's PPR score deviates from its mean-field prediction
by at most $\varepsilon_{\infty}$ in absolute value.

\vspace{3mm}
\noindent Next, we want to prove that if $\Delta_{\mathrm{mean}} \;\ge\; C\cdot\varepsilon_{\infty},$ for a constant $C>0$ with probability at least $1-n^{-c}$, then (1) and (2) hold. 
We assume that this mean-field gap dominates the perturbation magnitude:
\[
\Delta_{\mathrm{mean}} \ge C\,\varepsilon_{\infty},
\]
for a sufficiently large absolute constant $C>0$.
Under this assumption, by concentration inequalities (e.g., Markov or Hoeffding combined with a union bound),
the fraction of nodes whose PPR values deviate by more than $\varepsilon_{\infty}$ from their mean-field predictions
can be made arbitrarily small.
Consequently, with probability at least $1 - n^{-c}$ for some constant $c>0$, we have
\[
r(v) \ge \mu_{\mathrm{in}} - \varepsilon_{\infty}
\quad \text{for most } v \in S^\star,
\qquad\text{and}\qquad
r(u) \le \mu_{\mathrm{out}} + \varepsilon_{\infty}
\quad \text{for most } u \notin S^\star.
\]
where $\mu_{\mathrm{in}}$ is the expected PPR score for a node inside the planted laundering group $S^\star$ (in-group) and $\mu_{\mathrm{out}}$ is the expected PPR score for a node outside the planted group (out-group) defined in Equation \ref{mu_in_and_out}.
To establish part (1), we require the lowest in-group score to be sufficiently separated from the highest out-group score. We need
\[
\Delta_{\mathrm{mean}} - 2\varepsilon_{\infty} \ge \frac{1}{2}\Delta_{\mathrm{mean}},
\]
which simplifies to $\Delta_{\mathrm{mean}} \ge 4\varepsilon_{\infty}$.
By choosing $C \ge 4$, the assumed condition $\Delta_{\mathrm{mean}} \ge C \varepsilon_{\infty}$ is satisfied, guaranteeing:
\begin{align*}
\text{Avg}(S^\star) - \text{Avg}(V \setminus S^\star) &\ge (\mu_{\mathrm{in}} - \varepsilon_{\infty}) - (\mu_{\mathrm{out}} + \varepsilon_{\infty}) \\
&= \Delta_{\mathrm{mean}} - 2\varepsilon_{\infty} \\
&\ge \Delta_{\mathrm{mean}} - \frac{1}{2}\Delta_{\mathrm{mean}} = \frac{1}{2}\Delta_{\mathrm{mean}}.
\end{align*}
This establishes part (1) of the theorem.


\vspace{3mm}
\noindent Next, we argue that sorting nodes by their normalized PPR scores $r(v)/d(v)$ and performing a conductance-based sweep ~\parencite{andersen2006local} recovers a constant fraction of the true planted subset. Intuitively, since PPR mass decays smoothly away from the seed node, nodes within the planted group $S^\star$ have substantially higher normalized scores under the separation established. When the nodes are ordered by decreasing $r(v)/d(v)$, most of the top-ranked nodes belong to $S^\star$, and the boundary between $S^\star$ and $V\setminus S^\star$ forms a low-conductance cut.
By the guarantees of the local clustering framework~\parencite{andersen2006local}, sweeping over prefixes of this ordering and selecting the prefix with minimal conductance yields a subset $\widehat S$ whose conductance is within a constant factor of the optimal local set around the seed. Therefore, $\widehat S$ necessarily overlaps with $S^\star$ on at least a constant fraction, i.e.,
\[
    |\widehat S \cap S^\star| \ge \gamma s,
\]
for some constant $\gamma \in (0,1)$ depending only on the signal-to-noise separation and conductance gap. This establishes part (2) of the theorem.

\vspace{3mm}
\noindent To derive the explicit detectability threshold, we approximate the mean-field separation $\Delta_{\mathrm{mean}}$
and the perturbation scale $\varepsilon_{\infty}$.
For a two-block SBM with $s \ll n$ and $p_{\mathrm{out}}\ll 1$, expanding $(I - \alpha M)^{-1}$ shows that
\[
\Delta_{\mathrm{mean}} \asymp \frac{s(p_{\mathrm{in}} - p_{\mathrm{out}})}{n}.
\]
Substituting the estimate
$\varepsilon_P = C_1\frac{\sqrt{n p_{\max}\log n}}{n p_{\mathrm{out}}}$
from Lemma~\ref{lem:transition} into Equation \eqref{eq:epsinf}
gives
\[
\varepsilon_{\infty}
\;\asymp\;
\sqrt{n}\cdot\frac{\alpha}{\delta^2}\cdot
\frac{\sqrt{n p_{\max}\log n}}{n p_{\mathrm{out}}}
\;=\;
\frac{\alpha}{\delta^2}\cdot
\frac{\sqrt{n p_{\max}\log n}}{p_{\mathrm{out}}\sqrt{n}}.
\]
Balancing the signal $\Delta_{\mathrm{mean}}$ and $\varepsilon_{\infty}$, up to constants $C''$ dependent on $\alpha$ and $\delta$:
\[
\frac{s(p_{\mathrm{in}} - p_{\mathrm{out}})}{n}
\;\gtrsim\;
C''\frac{\sqrt{n p_{\max}\log n}}{p_{\mathrm{out}}\sqrt{n}}.
\]
Rearranging to isolate the signal terms on the left side yields the formal detectability threshold:
\[
s(p_{\mathrm{in}} - p_{\mathrm{out}})
\;\gtrsim\;
C''\frac{n}{p_{\mathrm{out}}\sqrt{n}} \sqrt{n p_{\max}\log n}
\;=\;
C'\sqrt{np_{\max}\log n}.
\]
The condition provided in the theorem statement,
$s(p_{\mathrm{in}}-p_{\mathrm{out}}) \;\gtrsim\; C'\sqrt{n p_{\max}\log n}$, is obtained by absorbing the factor $\frac{n}{p_{\mathrm{out}}\sqrt{n}}$ into the constant $C'$. This condition expresses that the effective signal strength of the planted community must exceed the level of random fluctuations induced by graph noise for PPR-based detection to succeed. Combining the above arguments completes the proof.

\end{proof}

\end{document}